\long\def\comment#1{\ifdim\overfullrule>0pt{\sf[{#1}]}\fi}
\newcommand{\Hd}{{\cal H}}
\newcommand{\V}{\cal V}
\newcommand{\Ed}{\cal E}
\newcommand{\cch}{C_{\Hd}}
\newtheorem{theorem}{Theorem}
\newtheorem{lemma}[theorem]{Lemma}
\newtheorem{corollary}[theorem]{Corollary}
\newtheorem{observation}[theorem]{Observation}
\newtheorem{claim}[theorem]{Claim}
\newtheorem{definition}[theorem]{Definition}
\newtheorem{problem}{Problem}
\numberwithin{equation}{section}
\numberwithin{theorem}{section}
\numberwithin{lemma}{section}
\numberwithin{conjecture}{section}
\numberwithin{corollary}{section}
\numberwithin{proposition}{section}
\numberwithin{observation}{section}
\numberwithin{claim}{section}
\numberwithin{definition}{section}
\numberwithin{question}{section}
\numberwithin{fact}{section}
\newcommand{\npm}{N^{\pm}}
\newcommand{\ndecomp}{path decomposition\xspace}
\newcommand{\ab}{*}
\newcommand{\lightCol}{8}
\newenvironment{cproof}
{\begin{proof}
 [Proof.]
 \vspace{-1.5\parsep}
}
{ \end{proof}}
\begin{document} 
\bibliographystyle{alpha}
\def\proofend{\hfill$\Box$\medskip}
\def\Proof{\noindent{\bf Proof:\ \ }}
\def\Sketch{\noindent{\bf Sketch:\ \ }}
\def\eps{\epsilon}

\title{Coloring tournaments with few colors: Algorithms and complexity}
\author{Felix Klingelhoefer$^*$ \and Alantha Newman\thanks{Laboratoire
    G-SCOP (CNRS, Univ. Grenoble Alpes), Grenoble, France.  Supported
    in part by ANR project DAGDigDec (ANR-21-CE48-0012).   
    {\tt{firstname.lastname@grenoble-inp.fr}}
    \newline \newline
 A preliminary version of these results appeared in the Proceedings of the 31st Annual European Symposium on Algorithms (ESA 2023).
}}

\date{}

\maketitle

\begin{abstract}

A \emph{$k$-coloring} of a tournament is a partition of its vertices
into $k$ acyclic sets.  Deciding if a tournament is 2-colorable is
\NP-hard.  A natural problem, akin to that of coloring a 3-colorable
graph with few colors, is to color a 2-colorable tournament with few
colors.  This problem does not seem to have been addressed before,
although it is a special case of coloring a 2-colorable 3-uniform
hypergraph with few colors, which is a well-studied problem with
super-constant lower bounds.

We present a new efficient decomposition lemma for tournaments, which
we use to design polynomial-time algorithms to color various
classes of tournaments with few colors, notably, to color
a 2-colorable tournament with ten colors.  We also use this lemma to
prove equivalence between the problems of coloring 3-colorable
tournaments and coloring 3-colorable graphs with constantly many
colors.  For the classes of tournaments considered, we complement our
upper bounds with strengthened lower bounds, painting a comprehensive
picture of the algorithmic and complexity aspects of coloring
tournaments.
\end{abstract}

\section{Introduction}

A tournament $T=(V,A)$ is a complete, oriented graph: For each pair of
vertices $i,j \in V$, there is either an arc from $i$ to $j$ or an arc
from $j$ to $i$ (but not both).  A subset of vertices $S \subseteq V$
induces the {\em subtournament} $T[S]$.  If this subtournament
contains no directed cycles, then it is said to be {\em acyclic}.  The
problem of {\em coloring a tournament} is that of partitioning the
vertices into the minimum number of acyclic sets, sometimes referred
to as the {\em dichromatic number}~\cite{neumann1982dichromatic}.
Since a tournament contains a directed cycle if and only if it
contains a directed triangle, the problem of coloring a tournament is
equivalent to partitioning the vertices into the minimum number of
sets so that each set does not contain a directed triangle.
Conversely, it is equivalent to problem of coloring the vertices with
the minimum number of colors so that each directed triangle has at
least two colors.

Coloring tournaments can be compared to the problem of coloring
undirected graphs.  For the latter, deciding if a graph is 2-colorable
(i.e., bipartite) is easy, but it is \NP-hard to decide if a graph is
3-colorable.  A problem (e.g., deciding if a graph $G$ is bipartite)
is ``easy'', if for an (unweighted) graph $G$ on $n$ vertices, there
is an algorithm that solves this problem on $G$ and this algorithm
runs in time polynomial in $n$.  More generally, we say that a
procedure runs in {\em polynomial time} or is a {\em polynomial-time
  algorithm} if its running time is polynomial in the size of the
input.  When a problem has such a polynomial-time algorithm, we say it
can be solved {\em efficiently}.  A problem that is \NP-hard is
unlikely to have a polynomial-time algorithm.

In a widely-studied promise problem, we are given a graph promised to
be 3-colorable and the goal is to color it (in polynomial time) with
few
colors~\cite{wigderson1983improving,blum1994new,KargerMS98,kawarabayashi2017coloring}.
For tournaments, it is easy to decide whether or not a tournament is
1-colorable (i.e., transitive), since this is exactly when the
tournament is acyclic.  However, deciding if a tournament is
2-colorable is already \NP-hard, as shown by \cite{chen2007min} in
response to a question of Andr\'as Frank asking about the complexity
of deciding if the vertex set of a tournament can be partitioned into
two feedback vertex sets.

This suggests the following promise problem: Given a tournament
promised to be 2-colorable, what is the fewest number of colors with
which it can be colored in polynomial time?  This question is the
starting point for this paper and naturally leads to related problems
of determining upper and lower bounds for coloring various classes of
tournaments.  For comparison, the complexity landscape of graph
coloring is well studied and we have a general understanding of what
it looks like. (See Table \ref{tab:gen_graph}.)  In contrast, the
problem of coloring tournaments has been studied very little from the
algorithmic or complexity perspective.  It has however been studied
extensively from the perspective of structural graph theory (e.g.,
\cite{berger2013tournaments,harutyunyan2019locToGlobal,nguyen2023some,chudnovsky2024pure}). But
beyond some basic \NP-completeness results~\cite{chen2007min}, the
most obvious complexity questions remained open.  This paper is an
effort to address this disparity.

\begin{table}
\begin{small}
        \begin{center}
           \begin{tabular}{| l | c | c | c|}
             \hline
                   \textbf{Graph Type} & \textbf{Lower Bound} &
                   \textbf{Upper Bound} \\
                   \hline &  & \\ 
                   $3$-Colorable graphs & $5$
                   \cite{bulin2019algebraic}, $O(1)^{\ab}$
                   \cite{guruswami2020d} &   $\widetilde{O}(n^{0.19996})$\cite{kawarabayashi2017coloring} \\[1ex]
                   \hline & & \\
		   $k$-Colorable graphs, $k\geq 3$ & $2 k -1$
                   \cite{bulin2019algebraic}, $O(1)^{\ab}$\cite{guruswami2020d} &  $O(n^{1-\frac{3}{k+1}})$ \cite{DBLP:journals/jacm/KargerMS98} \\[1ex]
                   \hline & & \\
                   General graphs    & $n^{1-\epsilon}$
                   \cite{hastadClique,zuckerman2006linear}  & $O(n(\log \log n)^2(\log n)^{-3})$ $^{\dagger}$\cite{halldorsson1993still} \\[1ex]
                   \hline & & \\
                   3-Uniform 2-colorable hypergraphs &  $O(1)$
                   \cite{dinur2005hardness}  & $\widetilde{O}(n^{\frac{1}{5}})$ \cite{krivelevich2001approximating} \\[1ex] \hline
           \end{tabular}
         \end{center}
\end{small}
\caption{Best-known lower and upper bounds for various graph coloring
  problems.  All
  inapproximability results are under the assumption $\P \neq \NP$
  except those denoted by $^{\ab}$, which are under the $d$-To-1
  Conjecture~\cite{khot2002power}. The lower bound should be read as,
  ``It is hard to color a 3-colorable graph with 5 colors.'' The upper
  bound as, ``A 3-colorable graph can be (efficiently) colored with $\widetilde{O}(n^{0.19996})$
  colors.''  The exception is the entry
  indicated by $^\dagger$, which is a
  hardness of approximation result.  }\label{tab:gen_graph}
\end{table}

\paragraph{Previous Work.}

The problem of coloring a 2-colorable tournament with few colors is a
special case of coloring a 2-colorable 3-uniform hypergraph with few
colors.\footnote{This follows from the aforementioned observation that
  the problem of coloring a tournament is equivalent to finding a
  coloring such that each triangle has at least two colors.}  Deciding
if a 3-uniform hypergraph is 2-colorable is
\NP-hard~\cite{lovasz1973coverings} and more recently it was proved to
be \NP-hard to color with any constant number of
colors~\cite{dinur2005hardness}.  On the positive side, a 2-colorable
3-uniform hypergraph can be colored in polynomial time with
$\widetilde{O}(n^{1/5})$ colors~\cite{alon1996coloring,
  chen2005coloring,krivelevich2001approximating}, a result which uses
tools from and is analogous to that of
\cite{KargerMS98} for 3-colorable
graphs.\footnote{The (standard) notation $\widetilde{O}$ ignores
  logarithmic factors.}  Thus, $\widetilde{O}(n^{1/5})$ is an upper
bound on the number of colors needed to efficiently color a
2-colorable tournament and was the best-known upper bound prior to our
work.  Deciding if a tournament is 2-colorable is
\NP-hard~\cite{chen2007min} and furthermore, deciding if a tournament
is $k$-colorable for any $k \geq 2$ is \NP-hard~\cite{fox2019removal}.
These results do not rule out the existence of a polynomial-time
algorithm to color a 2-colorable tournament with three colors.

\begin{table}
\begin{small}
        \begin{center}
           \begin{tabular}{| l | c | c | }
             \hline
             \textbf{Tournament Type} & \textbf{Lower Bound} &
             \textbf{Upper Bound} \\ \hline &  & \\
             2-Colorable tournaments & 2\cite{chen2007min}, 3  & 10
             \\[1ex] \hline &  & \\
		   3-Colorable tournaments & 5, $O(1)$ $^{\ab}$ & $\widetilde{O}(n^{0.19996})$ \\[1ex]\hline & & \\
		   $k$-Colorable tournaments, $k\geq 2$ & $2k - 1$,
		   $O(1)$ $^{\ab}$ &  $5\cdot f(k-1) \cdot g(k)$
                   \\[1ex] \hline & & \\
                   2-Colorable light tournaments & in \P ? & 5
                   \\[1ex] \hline & & \\
                   Light tournaments & in \P ?
                   & \lightCol
                   \\[1ex] \hline & & \\
		   General tournaments & $n^{\frac{1}{2}-\epsilon}$ $^{\dagger}$
                   & $n/\log{n}$\cite{erdos1964representation} \\[1ex] \hline
           \end{tabular}
         \end{center}
\end{small}
\caption{Best-known lower and upper bounds for various tournament
  coloring problems.  Previous results are indicated with a citation.
  All the results without a citation are established in this paper.
  Lower bounds are under the assumption $\P \neq \NP$ except those
  marked with a $^{\ab}$, which hold under the $d$-To-1
  Conjecture~\cite{khot2002power}.  The function $g(k)$ denotes the
  number of colors needed to efficiently color a $k$-colorable graph,
  while $f(k)$ is the number of colors needed to efficiently color a
  $k$-colorable tournament.  As in Table \ref{tab:gen_graph}, the
  lower bound should be read as, ``It is hard to color a 2-colorable
  tournament with 3 colors.''  The upper bound as, ``A 2-colorable
  tournament can be (efficiently) colored with 10 colors.''  The
  exception is the entry indicated by $^\dagger$, which is a hardness
  of approximation result.  }\label{tab:gen_tourn}
\end{table}

From the perspective of structural graph theory, the problem of
coloring tournaments has been widely studied due to its connection to
the famous Erd\H{o}s-Hajnal
Conjecture~\cite{erdos1989ramsey,chudnovsky2014erdos}, which has an
equivalent formulation in terms of tournaments~\cite{alon2001ramsey}.
The latter posits that for any tournament $H$, there is a constant
$\eps_H$ (where $0 < \eps_H \leq 1$) such that any $H$-free tournament
on $n$ vertices has a transitive subtournament of size at least
$O(n^{\eps_H})$.  Tournaments $H$ for which $\eps_H = 1$
  are called {\em heroes} and have been characterized by
  \cite{berger2013tournaments}.  Forbidding any hero $H$ in a
tournament $T$ actually implies that $T$ has constant dichromatic
number~\cite{berger2013tournaments}, which yields a transitive induced
subtournament of linear size.  These results are existential and do
not provide an efficient algorithm to color an $H$-free tournament
with a constant number of colors when $H$ is some fixed hero.

\paragraph{Our Results.}
We consider some basic algorithmic and computational complexity
questions on the subject of coloring tournaments.  Our main
algorithmic tool, presented in Section \ref{sec:decomp}, is a
decomposition lemma which can be used to obtain efficient algorithms
for coloring tournaments in various cases.  On a high level, it bears
some resemblance to decompositions previously used to prove bounded
dichromatic number in tournaments and in dense digraphs with forbidden
subgraphs~\cite{berger2013tournaments,harutyunyan2019coloring}.  To
apply our decomposition lemma to 2-colorable tournaments, we use an
observation from
\cite{alon1996coloring,chen2005coloring,krivelevich2001approximating}
which states that there is an efficient algorithm to partition a
2-colorable tournament into two tournaments that are each light.  A
{\em light tournament} is one in which for each arc $uv$, the set of
vertices $N(uv) = \{w ~ |~ uvw \text{ forms a directed triangle}\}$ is
transitive.  \cite{berger2013tournaments} showed that light
tournaments have constant dichromatic number, since light tournaments
are exactly those in which a certain hero is forbidden as an induced
subtournament.

Combining these observations, we can do the following for any
tournament: we can either partition it into two light tournaments and
conclude that it can be colored with $O(1)$ colors, or we have a
certificate that it is not 2-colorable.  Thus, coloring a 2-colorable
tournament with $O(1)$ colors cannot be an \NP-hard problem, unless
\NP = \coNP.  This does not however immediately imply that there is an
efficient algorithm, since there are many search problems that are
believed to be intractable even though their decision variant is easy
(e.g., those in the class \TFNP).  We remark that although
\cite{berger2013tournaments} did not provide an efficient algorithm to
color a light tournament with a constant number of colors, a careful
modification of their techniques indeed results in a polynomial-time
algorithm using around 35 colors to color a light tournament, yielding
an efficient algorithm to color a 2-colorable tournament with at most
70 colors.  Details can be found in \cite{FelixThesis}.

In Section \ref{sec:algs}, we give applications of our algorithmic
decomposition lemma to color various classes of tournaments.
Specifically, we show that 2-colorable tournaments can be efficiently
colored with ten colors.  We then use our toolbox to study 3-colorable
tournaments.  Here we show that the problem of coloring a 3-colorable
tournament has a constant-factor reduction to the problem of coloring
3-colorable graphs; specifically, if we can color a 3-colorable graph
with $k$ colors, then we can color a 3-colorable tournament with $50k$
colors.  We also use our tools to show that light tournaments can be
efficiently colored with eight colors, but since this is more
technically involved than the other cases, we defer this to Section
\ref{sec:light2}.

Next, we strengthen the lower bounds by showing in Section
\ref{sec:hardness} that it is \NP-hard to color a 2-colorable
tournament with three colors.  We then give a reduction from coloring
graphs to coloring tournaments, which implies, for example, that it is
hard to color 3-colorable tournaments with $O(1)$ colors under the
$d$-To-1 Conjecture of Khot~\cite{khot2002power}.  Finally, we show
that it is \NP-hard to approximate the number of colors required for a
general tournament to within a factor of $O(n^{1/2-\epsilon})$ for any
$\epsilon > 0$.  Our results are summarized in Table
\ref{tab:gen_tourn}.

We observe that, like another theorem, which shows that the dichromatic
number of a tournament is bounded (i.e., constant) if the
out-neighborhoods of vertices have bounded dichromatic
number~\cite{harutyunyan2019locToGlobal}, the decomposition lemma in
Section \ref{sec:decomp} also has a local-to-global flavor: If the
sets $N(uv)$ can be efficiently colored with few colors for all arcs
$uv$ and if there are two vertices $s$ and $t$ such that the
out-neighborhood of $s$ and the in-neighborhood of $t$ can be
efficiently colored with few colors, then our decomposition lemma
yields an efficient algorithm to color the whole tournament with few
colors.  Thus, this decomposition lemma, which links the dichromatic
number of a tournament and the dichromatic number of the sets $N(uv)$,
or {\em arc neighborhoods}, could be a useful tool for bounding
dichromatic number in other settings and likely has further
applications.  For example, it was recently used as a key
tool~\cite{klingelhoefer2023bounding} in establishing the equivalence
between a conjecture of \cite{nguyen2023some} and a conjecture of
\cite{el1985existence}.

\subsection{Notation and Preliminaries}\label{sec:notation}

Let $T=(V,A)$ be a tournament with vertex set $V$ and arc set $A$.
Sometimes, we use $V(T)$ to denote its vertex set and $A(T)$ to denote
its arc set.  For $S \subset V$, we use $T[S]$ to denote the
subtournament induced on vertex set $S$, although we sometimes abuse
notation and refer to the subtournament itself as $S$.  We define $uv
\in A$ to be an arc directed from $u$ to $v$.  We define $N^+(v)$ to
be all $w \in V$ such that arc $vw \in A$ and $N^-(v)$ to be all $w
\in V$ such that arc $wv \in A$.  We let $N^+[v] = N^+(v)\cup \{v\}$
and $N^-[v] = N^-(v) \cup \{v\}$.  For $S \subset V$, we define
$N^+(S) = \bigcup_{v \in S} N^+(v)$, and we define $N^-(S), N^+[S]$
and $N^-[S]$ analogously.  We use $\npm(S)$ to denote vertices in
$V\setminus{S}$ that have at least one in-neighbor and at least one
out-neighbor in $S$, which we call the {\em mixed neighborhood} of the
set $S$.

For $S, U \subset V$ such that $S \cap U = \emptyset$, we use $S
\Rightarrow U$ to indicate that all arcs between $S$ and $U$ are
directed from $S$ to $U$.  Let $C_3$ denote a directed triangle;
usually, we refer to this simply as a triangle.  Define $N(uv) \subset
V$ to contain all vertices $w$ such that $uvw$ forms the directed
triangle consisting of arcs $uv, vw$ and $wu$.  In other words, $N(uv)
= N^-(u) \cap N^+(v)$.  For three (vertex disjoint) tournaments $T_1,
T_2$ and $T_3$, we use $\Delta(T_1, T_2, T_3)$ to denote the
tournament resulting from adding all arcs from $T_1$ to $T_2$, all
arcs from $T_2$ to $T_3$ and all arcs from $T_3$ to $T_1$.

A tournament $T=(V,A)$ is {\em $k$-colorable} if there is a partition
of $V$ into $k$ vertex-disjoint sets, $V_1, V_2, \ldots, V_k$, such
that $T[V_i]$ is transitive for all $i \in \{1, \ldots, k\}$.  We use
$\vec\chi(T)$ to denote the {\emph{dichromatic number}} of $T$ (i.e.,
the minimum number of transitive subtournaments into which $V(T)$ can
be partitioned).  When the context is clear, we refer to the
dichromatic number simply as the {\em chromatic number}.  As
mentioned, computing the value $\vec\chi(T)$ is in general
\NP-hard~\cite{chen2007min}.  Our goal is to find upper and lower
bounds on the number of colors by which a tournament $T$ can be
colored using an {\em efficient} or {\em polynomial-time} algorithm,
which is an algorithm whose running time is polynomial in the size of
$T$.

We remark that we will always assume that a tournament $T$ which we
want to color is strongly connected; if this were not the case, we can
color each strongly connected component separately.  Therefore, each
vertex has an out-neighborhood containing at least one vertex.

\section{Efficient Tournament Decomposition for Coloring}\label{sec:decomp}

We present a decomposition for certain tournaments that can be
computed in polynomial time and yields an efficient method to color
such tournaments with few colors.

\begin{definition}\label{def:vc_gen}
A sequence of vertices $(v_i)_{0\leq i \leq k}$ in a tournament $T$ is called a 
\emph{vertex chain}, if for a pair of distinct vertices $v_0$ and $v_k$, 
$(v_i)_{0\leq i \leq k}$ are the vertices in a (fixed) shortest directed path
from $v_0$ to $v_k$.
Additionally, we define an \emph{arc chain} $(e_i)_{1 \leq i \leq
  k}$ corresponding to a vertex chain, where $e_i$ is the arc from
$v_{i-1}$ to $v_i$.
\end{definition}

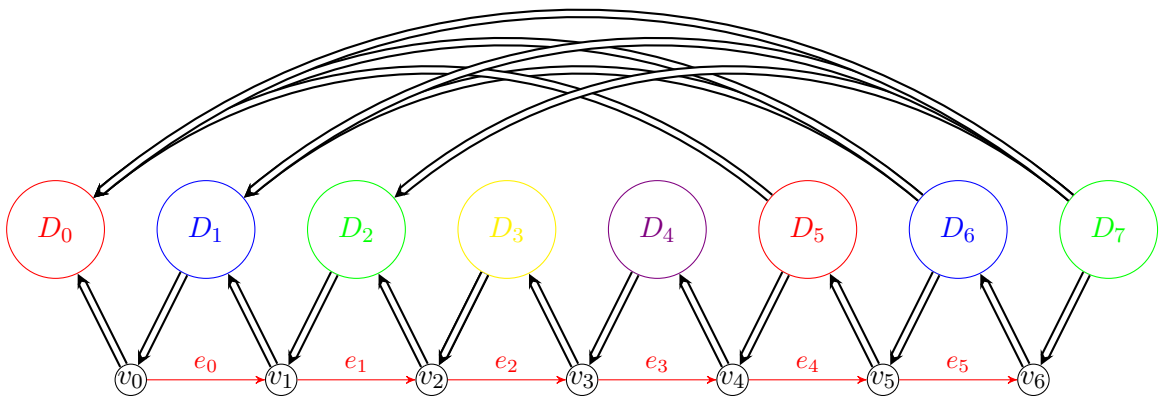
\begin{figure}
        \center
	\begin{tikzpicture}[->,>=stealth']
		  \tikzstyle{smallvertex}=[circle,draw,minimum size=8pt,inner sep=0pt]
		  \tikzstyle{bigvertex}=[circle,draw,minimum size=37pt,inner sep=0pt]
		  \tikzstyle{group}==[circle,draw,minimum size=100pt,inner sep=0pt]

		\node[smallvertex] (v0) at (0,0) {$v_0$};
		\node[smallvertex] (v1) at (2,0) {$v_1$};
		\node[smallvertex] (v2) at (4,0) {$v_2$};
		\node[smallvertex] (v3) at (6,0) {$v_3$};
		\node[smallvertex] (v4) at (8,0) {$v_4$};
		\node[smallvertex] (v5) at (10,0) {$v_5$};
		\node[smallvertex] (v6) at (12,0) {$v_6$};

		\node[bigvertex,red] (D0) at (-1,2) {$D_0$};
		\node[bigvertex,blue] (D1) at (1,2) {$D_1$};
		\node[bigvertex,green] (D2) at (3,2) {$D_2$};
		\node[bigvertex,yellow] (D3) at (5,2) {$D_3$};
		\node[bigvertex,violet] (D4) at (7,2) {$D_4$};
		\node[bigvertex,red] (D5) at (9,2) {$D_5$};
		\node[bigvertex,blue] (D6) at (11,2) {$D_6$};
		\node[bigvertex,green] (D7) at (13,2) {$D_7$};

		\node[text=red] at (1,0.2) {\small $e_1$};
		\node[text=red] at (3,0.2) {\small $e_2$};
		\node[text=red] at (5,0.2) {\small $e_3$};
		\node[text=red] at (7,0.2) {\small $e_4$};
		\node[text=red] at (9,0.2) {\small $e_5$};
		\node[text=red] at (11,0.2) {\small $e_6$};

		\draw [->,>=stealth,thick,double distance=2pt] (v0) to (D0);
		\draw [->,>=stealth,thick,double distance=2pt] (D1) to (v0);
		\draw [->,>=stealth,thick,double distance=2pt] (v1) to (D1);
		\draw [->,>=stealth,thick,double distance=2pt] (D2) to (v1);
		\draw [->,>=stealth,thick,double distance=2pt] (v2) to (D2);
		\draw [->,>=stealth,thick,double distance=2pt] (D3) to (v2);
		\draw [->,>=stealth,thick,double distance=2pt] (D3) to (v2);
		\draw [->,>=stealth,thick,double distance=2pt] (v3) to (D3);
		\draw [->,>=stealth,thick,double distance=2pt] (D4) to (v3);
		\draw [->,>=stealth,thick,double distance=2pt] (v4) to (D4);
		\draw [->,>=stealth,thick,double distance=2pt] (D5) to (v4);
		\draw [->,>=stealth,thick,double distance=2pt] (v5) to (D5);
		\draw [->,>=stealth,thick,double distance=2pt] (D6) to (v5);
		\draw [->,>=stealth,thick,double distance=2pt] (v6) to (D6);
		\draw [->,>=stealth,thick,double distance=2pt] (D7) to (v6);

		\draw [->,>=stealth,thick,double distance=2pt, bend right, out=-40, in=-140] (D6) to (D1);
		\draw [->,>=stealth,thick,double distance=2pt, bend right, out=-40, in=-140] (D6) to (D0);
		\draw [->,>=stealth,thick,double distance=2pt, bend right, out=-40, in=-140] (D5) to (D0);
		\draw [->,>=stealth,thick,double distance=2pt, bend right, out=-40, in=-140] (D7) to (D2);
		\draw [->,>=stealth,thick,double distance=2pt, bend right, out=-40, in=-140] (D7) to (D1);
		\draw [->,>=stealth,thick,double distance=2pt, bend right, out=-40, in=-140] (D7) to (D0);

		    \path[every node/.style={font=\sffamily\small},red]
		        (v1) edge node [right] {} (v2)
		        (v2) edge node [right] {} (v3)
		        (v3) edge node [right] {} (v4)
		        (v4) edge node [right] {} (v5)
		        (v5) edge node [right] {} (v6)
		        (v0) edge node [right] {} (v1);

        \end{tikzpicture}
        \caption{A \ndecomp of $T$. The red arcs $(e_i)$ form a
          shortest path from $v_0$ to $v_k$, thus all the arcs not
          depicted between the $v_i$'s go backwards. All the vertices
          in a given $D_i$ are colored from the color palette
          indicated by the color of the $D_i$.  Notice that because
          there are no long forward arcs between the $D_i$'s, all arcs
          between $D_i$'s that share a color palette are backwards.
	\label{fig:decomp_D}}
\end{figure}

\begin{definition}
A vertex chain $(v_i)_{0 \leq i \leq k}$ is called a $(c,d)$-vertex
chain, if the following properties hold: $T[(N^+(v_0) \cup N^-(v_k))]$
can be colored with $c$ colors in polynomial time, and $T[N(e_i)]$ can
be colored with $d$ colors in polynomial time for all $i \in
\{1,\ldots,k\}$.
\end{definition}

\begin{definition}\label{def:edge} Given a vertex chain $(v_i)_{0 \leq i \leq k}$, 
  a \emph{\ndecomp}
  of a tournament $T$ is defined as:
  	\begin{itemize}
		\item $D_0 = N^+(v_0)$,
		\item For $1 \leq i \leq k$, $D_i = N(e_i) \setminus (\cup_{0\leq j \leq i-1} D_j)$,
		\item $D_{k+1} = N^-(v_k) \setminus (\cup_{0\leq j \leq k} D_j)$.
	\end{itemize}
\end{definition}

The main idea behind this decomposition is to build zones that can be
efficiently colored, such that all arcs between zones at distance more
than four (i.e., \textit{long arcs}) go backwards.  A depiction of a
\ndecomp is shown in Figure \ref{fig:decomp_D}.  We now prove that
this is indeed a decomposition of $T$.

\begin{lemma}\label{lem:decomp}
Let $(D_0, \ldots, D_{k+1})$ be a \ndecomp of a tournament $T$.  Then
$V = \cup_{0 \leq i \leq {k+1}}D_i$.
\end{lemma}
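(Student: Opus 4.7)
The plan is to first remark that the set differences in Definition~\ref{def:edge} disappear under union. A straightforward induction on $j$ gives
\[
\bigcup_{i=0}^{k+1} D_i \;=\; N^+(v_0) \;\cup\; \bigcup_{i=1}^{k} N(e_i) \;\cup\; N^-(v_k),
\]
so it is enough to show that every $w \in V$ lies in one of these three types of sets. I would then assume $w \notin N^+(v_0) \cup N^-(v_k)$ and produce an index $i$ with $w \in N(e_i)$.

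For a vertex $w$ that is not on the chain, the argument is a simple ``switching'' observation along the path. Since $w \notin N^+(v_0)$ and $w \ne v_0$, the arc between them is $w v_0$; since $w \notin N^-(v_k)$ and $w \ne v_k$, the arc between them is $v_k w$. Letting $i$ be the largest index in $\{0,\ldots,k\}$ with $w v_i \in A$, we have $i < k$ and $v_{i+1} w \in A$. Together with the chain arc $e_{i+1} = v_i v_{i+1}$, these three arcs form a directed triangle, so $w \in N(e_{i+1})$.

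For a chain vertex $w = v_j$ the shortest-path property takes over: whenever $|j - j'| \ge 2$, the arc between $v_j$ and $v_{j'}$ must point from the higher index to the lower, for otherwise the chain could be shortcut. The only values of $j$ that still need attention are those with $v_j \notin N^+(v_0) \cup N^-(v_k)$, which excludes $j = 1$ and $j = k-1$ and forces $k \ge 2$. For each remaining $j$ a two-step chunk of the chain yields the desired triangle: $v_0, v_1, v_2$ with the backward arc $v_2 v_0$ gives $v_0 \in N(e_2)$; $v_{k-2}, v_{k-1}, v_k$ with the backward arc $v_k v_{k-2}$ gives $v_k \in N(e_{k-1})$; and more generally $v_{j-2}, v_{j-1}, v_j$ with the backward arc $v_j v_{j-2}$ gives $v_j \in N(e_{j-1})$ for $2 \le j \le k-2$.

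The main obstacle, modest though it is, is the chain-vertex bookkeeping at the endpoints and in the degenerate case $k = 1$ (where both chain vertices land directly in $N^+(v_0) \cup N^-(v_k)$). Everything else in the proof is immediate once the telescoping identity of the first step is in hand.
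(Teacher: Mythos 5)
Your proof is correct and takes essentially the same approach as the paper's: handle non-chain vertices by a sign-switching argument along the path to exhibit a triangle with some chain arc, and handle chain vertices via the shortest-path property that forces all non-chain arcs among the $v_i$ to go backwards. The explicit telescoping identity, the choice of the largest switching index rather than the smallest, and the direct (rather than contradiction) phrasing are purely presentational differences.
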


\begin{proof}
We will prove this lemma by contradiction: Suppose there is a vertex
$w \in V$ that does not belong to any $D_i$.  Assume that $w$ does not
belong to the vertex chain.  Since $w$ is neither in $D_0$ nor in
$D_{k+1}$, it follows that $w \in N^-(v_0)$ and $w \in N^+(v_k)$.
Take the smallest integer $i$ such that $w \in N^+(v_i)$.  There must
be one since $w \in N^+(v_k)$.  Notice that $i \geq 1$ since $w \notin
N^+(v_0)$, so $e_i$ belongs to the arc chain and $w \in N(e_i)$.
Therefore, $w \in D_i$, which is a contradiction.

Now consider the case in which $w$ is in the vertex chain.  An arc
with both endpoints in the vertex chain that is not in the arc chain
is backwards.  Thus, $v_i \in N(e_{i+2})$ for all $0 \leq i \leq k-2$.
Notice that $v_{k-1}$ can belong to $D_{k+1}$ (if it does not belong
to $D_j$ for some $j < k+1$).  Finally, $v_k \in N(e_{k-1})$.
\end{proof}

For the sake of simplicity and to more easily visualize the
decomposition, we remark that it might be easier to not include the
vertices in the vertex chain in the \ndecomp.  In this case, these
vertices can be colored with two extra colors.  Since all arcs not in
the arc chain with both endpoints in the vertex chain go backwards
(with respect to the arc chain; otherwise there would be a shorter
path from $v_0$ to $v_k$), we can use two colors so that all forwards
arcs (those in the arc chain) are bicolored.

\begin{lemma}\label{lem:long_arc}
Let $(D_0, \ldots, D_{k+1})$ be a path decomposition of a tournament
$T$.  Let $0 \leq i,j \leq k+1$ and let $j \geq i+5$.  For $u\in D_i$
and $w\in D_j$, we have $u\in N^+(w)$.
\end{lemma}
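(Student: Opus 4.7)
The plan is a proof by contradiction. Assume the arc between $u$ and $w$ goes forward, i.e., $uw\in A$ rather than $wu\in A$. I will construct a directed walk from $v_0$ to $v_k$ of length strictly less than $k$, contradicting the fact that $(v_0,v_1,\ldots,v_k)$ is a shortest directed path from $v_0$ to $v_k$ (Definition~\ref{def:vc_gen}).

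First I extract the forced arcs between $u$ and the chain from the membership $u\in D_i$ (and symmetrically for $w\in D_j$). If $1\le i\le k$, then $u\in N(e_i)=N^-(v_{i-1})\cap N^+(v_i)$, giving both $v_iu\in A$ and $uv_{i-1}\in A$. If $i=0$, then $u\in N^+(v_0)$, giving $v_0u\in A$. If $i=k+1$, then $u\in N^-(v_k)$, giving $uv_k\in A$. Analogous arcs hold for $w$ between $w$ and $v_{j-1},v_j$ (or $v_0$, or $v_k$) depending on whether $j$ is in $\{1,\ldots,k\}$, equal to $0$, or equal to $k+1$.

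Now I assemble the walk. In the generic case $1\le i\le k$, $1\le j\le k$, concatenate the chain prefix $v_0\to v_1\to\cdots\to v_i$ (length $i$), the three-arc shortcut $v_i\to u\to w\to v_{j-1}$ (using the forced arcs together with the assumed $uw\in A$), and the chain suffix $v_{j-1}\to v_j\to\cdots\to v_k$ (length $k-j+1$). The total length is $i+3+(k-j+1)=k+4-(j-i)$, which is at most $k-1$ whenever $j-i\ge 5$. Any directed walk from $v_0$ to $v_k$ contains a directed path of no greater length, so this contradicts the minimality of the chain.

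The only real obstacle is bookkeeping the boundary cases where one of the chain prefixes collapses. When $i=0$, the walk becomes $v_0\to u\to w\to v_{j-1}\to\cdots\to v_k$ of length $k+4-j$, which is $<k$ since $j\ge 5$; when $j=k+1$, the walk becomes $v_0\to\cdots\to v_i\to u\to w\to v_k$ of length $i+3$, which is $<k$ since $i\le k-4$; when both $i=0$ and $j=k+1$, the walk is $v_0\to u\to w\to v_k$ of length $3$, and $k\ge 4$ follows from $j-i\ge 5$. Note that whether or not $u$ or $w$ happens to coincide with some vertex $v_a$ of the chain is irrelevant: the constructed object is a walk, and the length bound is unchanged. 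In every case we obtain the desired contradiction, so $wu\in A$ and $u\in N^+(w)$.
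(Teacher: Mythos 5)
Your proof is correct and rests on the same key observation as the paper's: the forced arcs $v_i\to u$ (from $u\in D_i$) and $w\to v_{j-1}$ (from $w\in D_j$), together with the assumed forward arc $u\to w$, create a three-arc shortcut contradicting minimality of the vertex chain. The paper phrases the contradiction directly as ``there is no path from $v_i$ to $v_{j-1}$ of fewer than four arcs'' (invoking that subpaths of a shortest path are shortest), whereas you splice the shortcut into the full $v_0$-to-$v_k$ walk and tally lengths case by case; this is a presentational difference, not a different argument.
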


\begin{proof}
We will prove this by contradiction. Suppose $j \geq i+5$ and $u \in
N^-(w)$.  Then there is a path of three arcs from $v_i$ to $v_{j-1}$,
namely $(v_i, u, w, v_{j-1})$.  (By definition of the decomposition,
$u \in D_i$ implies $u \in N^+(v_i)$ and $w \in D_j$ implies $w \in
N^-(v_{j-1})$.) This is not possible since by the definition of the
vertex chain as a shortest path, there can be no path between $v_i$
and $v_{j-1}$ with fewer than four arcs (since $(j-1)-i \geq (i+5 -1)
-i = 4$).
\end{proof}

\begin{lemma}\label{lem:eff_loc_to_glob}
If $T$ has a $(c,d)$-vertex chain $(v_i)_{0\leq i \leq k}$ that can be
found in polynomial time, where $c \geq d$, then $T$ can be colored with at most $c + 4d$ colors in polynomial time.
\end{lemma}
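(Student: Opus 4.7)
The plan is to construct the \ndecomp $(D_0, D_1, \ldots, D_{k+1})$ of Definition~\ref{def:edge} from the given $(c,d)$-vertex chain; by Lemma~\ref{lem:decomp} this partitions $V(T)$, and by hypothesis it can be produced in polynomial time. I would then color each zone with colors drawn from one of five disjoint palettes: a ``wide'' palette $P_0$ of $c$ colors and four ``narrow'' palettes $P_1, P_2, P_3, P_4$ of $d$ colors each, for a total of $c + 4d$ colors.

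I would color $D_0 \cup D_{k+1}$ inside $P_0$ using the given acyclic $c$-coloring of $T[N^+(v_0) \cup N^-(v_k)]$, which restricts to a valid acyclic coloring of $D_0 \cup D_{k+1}$ since $D_0 \cup D_{k+1} \subseteq N^+(v_0) \cup N^-(v_k)$. For each interior zone $D_i \subseteq N(e_i)$ with $1 \leq i \leq k$, I would take its given acyclic $d$-coloring and embed it into $P_{f(i)}$ for an assignment $f : \{1, \ldots, k\} \to \{0, 1, 2, 3, 4\}$; embedding a $d$-coloring into $P_0$ is possible because $c \geq d$. The function $f$ is chosen as an index-modulo-$5$ pattern so that zones sharing a palette are pairwise at index distance at least $5$, with the residue class that absorbs $D_0$ and $D_{k+1}$ sent to $P_0$ (so interior zones in that residue class also go into $P_0$).

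Correctness then reduces to verifying that every color class is acyclic. By Lemma~\ref{lem:long_arc}, whenever two zones $D_i, D_j$ share a palette and $j \geq i + 5$, every arc between $D_i$ and $D_j$ is directed from $D_j$ to $D_i$; hence any color class, restricted to the zones in its palette, is a union of transitive subtournaments joined only by backward arcs. Ordering the zones by decreasing index gives a topological ordering of the color class, which therefore contains no directed cycle.

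The main obstacle is orchestrating the assignment $f$ so that $D_0$ and $D_{k+1}$ fall in the same palette $P_0$ while the distance-$\geq 5$ property holds within every palette. The naive pattern $f(i) = i \bmod 5$ achieves this directly when $k+1 \equiv 0 \pmod{5}$; for other residues of $k$, I expect to handle the case by a short case analysis that slightly shifts the assignment near $v_k$, exploiting the freedom to embed $d$-colorings into the wider $P_0$ palette (since $c \geq d$), so that the five-palette structure and the total count of $c + 4d$ colors are preserved. As the chain, the decomposition, and the per-zone colorings are all computed in polynomial time, so is the whole coloring procedure.
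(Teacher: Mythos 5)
Your decomposition, your appeal to Lemma~\ref{lem:decomp} and Lemma~\ref{lem:long_arc}, and the topological-order argument for acyclicity of each color class all match the paper, and your scheme is fine for $k\le 4$. However, the palette assignment has a genuine gap for $k\ge 5$, and the ``short case analysis'' you defer to does not exist. You require $f(0)=f(k+1)=0$ while giving distinct palettes to any two zones at index distance at most four. Once $k\ge 4$, the zones $D_1,\dots,D_5$ are pairwise at distance at most four, so $f$ restricted to any five consecutive interior indices must be a bijection onto $\{0,\dots,4\}$; this forces $f$ to be periodic with period five, hence $f(k+1)=f\bigl((k+1)\bmod 5\bigr)$, which equals $f(0)$ only when $k+1\equiv 0\pmod 5$. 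Concretely, for $k=5$: each of $D_1,\dots,D_5$ lies within distance four of $D_0$ or of $D_6$, so none may use $P_0$, yet they must receive five pairwise distinct palettes from only $P_1,\dots,P_4$ --- a contradiction. No local shift near $v_k$ repairs this while keeping both endpoints in a single palette of size $c$.

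The paper sidesteps the issue by \emph{not} forcing $D_{k+1}$ into the same palette as $D_0$. For $k\ge 4$ it colors $D_0$ and $D_{k+1}$ separately (each with $c$ colors, by restriction of the given $c$-coloring of $N^+(v_0)\cup N^-(v_k)$), gives every $D_i$ --- including $i=0$ and $i=k+1$ --- the $d$-color palette labelled $i\bmod 5$, and lets $D_0$ and $D_{k+1}$ share a common pool of $c-d$ ``extra'' colors, for $5d+(c-d)=c+4d$ total. Sharing the extras across that particular pair is safe because for $k\ge 4$ every arc between $D_0\subseteq N^+(v_0)$ and $D_{k+1}\subseteq N^-(v_k)$ is directed from $D_{k+1}$ to $D_0$: a forward arc $u\to w$ with $u\in D_0$, $w\in D_{k+1}$ would yield a length-three path $v_0,u,w,v_k$, contradicting that the shortest $v_0$--$v_k$ path has length $k\ge 4$. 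You should replace your forced-$P_0$ step with this bookkeeping when $k\ge 4$ (the paper also treats $k\le 3$ separately, essentially as you do, using $c+3d$ colors there).
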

  
\begin{proof}
  We construct a \ndecomp for the $(c,d)$-vertex chain.
  We first
consider the case in which $k \leq 3$.  We color $D_0 \cup
D_{k+1}$ with $c$ colors.  We then make three palettes of $d$ colors
each with labels from $0$ to $2$, and for $i \in \{1,2,3\}$, we color
each $D_i$ using the color palette with label $i \bmod 3$.  This uses
a total of $c+3d$ colors.  Notice that the only $i$ and $j$ such that
there are arcs $uv$ with $u \in D_i$ and $v \in D_j$ is when $i=0$ and
$j=k+1$.  Thus, all the arcs between different $D_i$ are bicolored.
Notice that we use at most $c + 3d$ colors and that each of the four
sets can be colored efficiently with the respective number of colors.

Now we consider the case in which $k \geq 4$.  We make five palettes
of $d$ colors each with labels from $0$ to $4$.  We color
each $D_i$ using the color palette with label $i \bmod 5$.
For $D_0$ and $D_{k+1}$, we use the same set of $c-d$ ``extra'' colors
(since all arcs go from $D_{k+1}$ to $D_0$).
Let us
show that we can do this in polynomial time.  Notice that $D_0
\subseteq N^+(v_0)$, and $D_{k+1} \subseteq N^+(v_k)$.  By definition
of $(c,d)$-vertex chain, each of these sets can be colored efficiently
with $c$ colors.  For every $1 \leq i \leq k$, $D_i$ is a subset of
$N(e_i)$, which can be colored efficiently with $d$ colors.

We now show that this is a proper coloring of $T$.  We will do this by
showing that all forward arcs between different $D_i$ are bicolored.
For an arc $uv$ with $u \in D_i$ and $v \in D_j$, we say that $uv$ is
a {\emph{forward arc}} if $i < j$.  By Lemma \ref{lem:long_arc}, there
are no forward arcs between $D_i$ and $D_j$ when $j\geq i+5$.
Furthermore, by the definition of the coloring, no vertex in $D_i$ and
$D_j$ can share a color for $i+1 \leq j \leq i+4$.  Thus all forward
arcs from $D_i$ to $D_j$ will be bicolored.  Since every $D_i$ is
properly colored, and all forward arcs between different $D_i$ are
bicolored, $T$ is properly colored.  If $c=d$, the total number of
colors used is $5c$.  If $c > d$, then the coloring uses at most
$(c-d) + 5d = c+4d$ colors.\end{proof}

\section{Algorithms for Coloring Tournaments}\label{sec:algs}

In this section, we consider the problems of coloring 2-colorable and
3-colorable tournaments, and we show how to use our tools to
efficiently color them with few colors.  We also consider the problem
of efficiently coloring light tournaments, which is more technical and
is therefore deferred to Section \ref{sec:light2}.

\subsection{2-Colorable Tournaments}\label{sec:2color}

A tournament $T=(V,A)$ is {\em 2-colorable} if $\vec\chi(T) = 2$, and
a $2$-coloring of tournament $T$ is a partition of $V$ into two vertex
sets, $V_1$ and $V_2$, such that $T[V_1]$ and $T[V_2]$ are each
transitive.  In this section, our goal is to prove Theorem
\ref{thm:2-col}.

\begin{theorem}\label{thm:2-col}
  A 2-colorable tournament $T$ can be colored using ten colors in
  polynomial time.
  \end{theorem}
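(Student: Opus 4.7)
The plan is to invoke Lemma~\ref{lem:eff_loc_to_glob} with parameters $(c,d) = (2,2)$, which (since $c = d$) delivers a polynomial-time coloring using $5d = 10$ colors. The theorem therefore reduces to producing, in polynomial time, a $(2,2)$-vertex chain of $T$: a shortest directed path $(v_0,\dots,v_k)$ such that both $T[N^+(v_0) \cup N^-(v_k)]$ and $T[N(e_i)]$ for every arc $e_i$ on the path can be efficiently 2-colored. Since any induced subtournament of a 2-colorable tournament is itself 2-colorable, the content is purely algorithmic: we need to actually compute these 2-colorings, despite 2-coloring being \NP-hard in general.

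The structural engine is the prior observation, used in \cite{alon1996coloring,chen2005coloring,krivelevich2001approximating}, that any 2-colorable tournament can be partitioned in polynomial time into two \emph{light} subtournaments $L_1, L_2$. Recall that in a light tournament every arc-neighborhood is transitive. The key sub-lemma that I would establish is that for every arc $uv$ of $T$, the restriction of the partition $(L_1, L_2)$ to $N(uv)$ is an actual 2-coloring of $T[N(uv)]$, i.e., each $L_i \cap N(uv)$ is a transitive set. For arcs $uv$ lying entirely in a single $L_i$ this follows from the light property applied inside $L_i$ (giving transitivity of $L_i \cap N(uv)$), together with a short argument using 2-colorability of $T$ to handle $L_{3-i} \cap N(uv)$; for ``mixed'' arcs, a more careful case analysis combining the light property in each $L_i$ with the constraints imposed by triangles through $u,v$ rules out any cycle in $L_i \cap N(uv)$. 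Exactly the same reasoning, applied to the set $N^+(v_0) \cup N^-(v_k)$ in place of $N(uv)$, provides the efficient 2-coloring needed for the endpoints.

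With the sub-lemma in hand, the assembly is routine. Pick any two vertices $v_0, v_k$ for which a shortest directed path exists (recall we may assume $T$ strongly connected, so any choice works); the shortest path can be found in polynomial time by BFS on the arc set. By the sub-lemma this path is a $(2,2)$-vertex chain, and Lemma~\ref{lem:eff_loc_to_glob} produces the 10-coloring of $T$.

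The main obstacle is the structural sub-lemma itself: while membership of each $L_i$ in a light partition guarantees that $L_i \cap N(uv)$ is light as a subtournament of $L_i$, what we need is the stronger statement that it is genuinely transitive in $T$. Pushing from ``light'' to ``transitive'' is where the 2-colorability hypothesis on $T$ must be invoked, ruling out the triangles that would obstruct transitivity. Once this tight interaction between 2-colorability of $T$ and the structure of the 2-light partition is pinned down, everything else (shortest path, chain verification, invocation of Lemma~\ref{lem:eff_loc_to_glob}) is immediate.
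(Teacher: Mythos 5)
Your approach has a genuine gap: the central sub-lemma — that the light partition $(L_1,L_2)$ from Observation~\ref{obs:light} restricted to any arc-neighbourhood $N(uv)$ gives a $2$-colouring — is false for mixed arcs. Consider the tournament on $\{u,v,a,b,c\}$ with $u\to v$, $v\to\{a,b,c\}$, $\{a,b,c\}\to u$, and $a\to b\to c\to a$. It is $2$-colourable and its unique heavy arc is $uv$ (since $N(uv)=\{a,b,c\}$ is a triangle). The algorithm behind Observation~\ref{obs:light} $2$-colours the heavy-arc graph (just the single edge $\{u,v\}$) and places the isolated vertices $a,b,c$ arbitrarily, so it may legitimately output $L_1=\{u,a,b,c\}$, $L_2=\{v\}$ — both light — but then $L_1\cap N(uv)=\{a,b,c\}$ is a directed triangle, not a transitive set. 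Lightness of $L_1$ governs arc-neighbourhoods of arcs \emph{inside} $L_1$; it says nothing about the intersection of $L_1$ with $N_T(uv)$ for a crossing arc $uv$. (For arcs contained in one $L_i$ the statement holds, but only for the trivial reason that such arcs are not heavy in $T$ — the partition is chosen so heavy arcs cross it — so $N_T(uv)$ is already triangle-free; lightness is never invoked.) A second gap is the claim that ``any choice of $v_0,v_k$ works'' for the endpoints: for an arbitrary pair, $N^+(v_0)\cup N^-(v_k)$ can be nearly all of $V$, and efficiently $2$-colouring it is then as hard as the problem you started with.

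The paper takes a structurally different route that avoids both issues: it first splits $T$ into two light subtournaments $T_1,T_2$ via Observation~\ref{obs:light} (each is a subtournament of $T$, hence still $2$-colourable), and then proves (Theorem~\ref{thm:2-col-light} via Lemma~\ref{lem:1vc}) that a $2$-colourable \emph{light} tournament admits a $(1,1)$-vertex chain — arc-neighbourhoods are transitive outright by lightness, and the endpoint set $N^+(u)\cup N^-(w)$ is transitive for a suitable pair by Lemma~\ref{lem:min_max_vertex-k} with $k=2$, which is found by brute force over pairs since transitivity is checkable. Lemma~\ref{lem:eff_loc_to_glob} with $c=d=1$ then colours each $T_i$ with $5$ colours, for $10$ in total. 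In other words, the decomposition lemma is applied twice, once inside each light piece, rather than once on $T$ with a $(2,2)$-chain; this is exactly what makes the arc-neighbourhood colouring step trivial rather than dependent on the false sub-lemma.
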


We say an arc $uv$ in $A$ is {\em heavy} if there exist three vertices
$a,b,c \in N(uv)$ which form a triangle $abc$.  If a tournament
contains no heavy arcs, then it is {\em light}.  We will use the
following observation.

\begin{observation}\label{obs:light}
There is a polynomial-time algorithm to partition a 2-colorable
tournament $T$ into two light subtournaments $T_1$ and $T_2$.
\end{observation}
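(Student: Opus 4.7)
The plan is to reduce the partition problem to computing a bipartition of an auxiliary graph. Let $H$ be the simple undirected graph on $V(T)$ whose edges are exactly the underlying undirected edges of the heavy arcs of $T$; the set of heavy arcs can be enumerated in polynomial time by testing, for each arc $uv$, whether $T[N(uv)]$ contains a directed triangle. I would then argue that $2$-colorability of $T$ forces $H$ to be bipartite, recover a bipartition $(V_1, V_2)$ of $H$ by breadth-first search, and check that $T[V_1]$ and $T[V_2]$ are both light.

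The structural heart of the proof is the claim that, in any $2$-coloring of $T$, the two endpoints of every heavy arc lie in different color classes. Fix a heavy arc $uv$ with witness triangle $abc \subseteq N(uv)$ and a $2$-coloring $V = V_1 \sqcup V_2$. Since $T[V_1]$ and $T[V_2]$ are transitive, no directed triangle of $T$ is monochromatic; and by the definition of $N(uv)$, each of $uva$, $uvb$, $uvc$ is a directed triangle. The triangle $abc$ must split as $\{x,y\} \subseteq V_i$ and $\{z\} \subseteq V_{3-i}$ for some $i \in \{1,2\}$. Non-monochromaticity of $uvx$ and $uvy$ forces at least one of $u, v$ into $V_{3-i}$; non-monochromaticity of $uvz$ forces at least one into $V_i$; since there are only two endpoints, $u$ and $v$ are separated.

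To conclude, if $T$ is $2$-colorable then $H$ is bipartite, and BFS yields a bipartition $(V_1, V_2)$ in polynomial time. Each $T[V_i]$ is then light: otherwise a heavy arc $uv$ of $T[V_i]$ would have its witness triangle inside $N_{T[V_i]}(uv) \subseteq N_T(uv)$, making $uv$ a heavy arc of $T$ whose endpoints both lie in $V_i$, an edge of $H$ inside a color class, contradicting the bipartition.

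The main obstacle is the structural claim in the second paragraph; everything else is routine bookkeeping. It is essential to exploit all three witness triangles $uvw$ for $w \in \{a,b,c\}$ simultaneously, together with the forced $2{+}1$ split of $abc$: using only two of them would yield only that $u, v$ are not both in some specific class, not that they are actually separated across the bipartition.
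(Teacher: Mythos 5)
Your proof is correct and takes essentially the same route as the paper: identify heavy arcs, show their endpoints must be separated in any 2-coloring so that the heavy-arc graph is bipartite, and use a bipartition to split $T$ into two light subtournaments. The paper's version of your structural claim is a bit more direct (if $u,v$ share a color, every vertex of $N(uv)$ is forced to the other color, making the witness triangle monochromatic), whereas you reach the same conclusion via a case analysis on the $2{+}1$ split of the witness triangle; both are fine.
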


This observation appears in
\cite{alon1996coloring,chen2005coloring,krivelevich2001approximating}
where it is stated more generally for 2-colorable 3-uniform
hypergraphs.  We include a proof here for completeness.

\begin{lemma}\label{lem:heavy}
In a 2-coloring of a tournament $T$, each heavy arc must be 2-colored.
\end{lemma}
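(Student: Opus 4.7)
The plan is a short proof by contradiction using the basic fact that in a valid 2-coloring of a tournament, every directed triangle must receive both colors (equivalently, no monochromatic triangle).

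First, I would fix a heavy arc $uv$ and, by definition, three vertices $a,b,c \in N(uv)$ that form a directed triangle $abc$. Recall that $N(uv) = N^-(u) \cap N^+(v)$, so each of $a, b, c$ forms a directed triangle with $uv$ (namely $uva$, $uvb$, $uvc$). Next, I would suppose for contradiction that in some 2-coloring of $T$, the arc $uv$ is monochromatic, i.e., $u$ and $v$ share a color, say color $1$.

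Then for each $w \in \{a,b,c\}$, the triangle $uvw$ has both $u$ and $v$ of color $1$; since the coloring is proper, this triangle cannot be monochromatic, so $w$ must receive color $2$. Hence $a, b$ and $c$ all have color $2$, but they themselves form a directed triangle $abc$, giving a monochromatic triangle. This contradicts the assumption that the coloring is a valid 2-coloring, and therefore $u$ and $v$ must receive different colors, i.e., the heavy arc $uv$ is 2-colored.

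There is no real obstacle here: the argument is essentially a two-step propagation (the triangles $uva, uvb, uvc$ force $a,b,c$ onto one color, and then $abc$ yields the contradiction), and it uses only the defining property that acyclic subtournaments contain no directed triangle. The only thing to be careful about is stating clearly the equivalence between a proper $k$-coloring of $T$ (partition into acyclic sets) and the condition that every directed triangle has at least two colors, which was already noted in the introduction.
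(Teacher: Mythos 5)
Your proof is correct and follows exactly the same argument as the paper: assume $u,v$ share a color, deduce that all of $N(uv)$ (in particular $a,b,c$) is forced onto the other color, and derive a monochromatic triangle $abc$. You have simply spelled out the intermediate steps that the paper's one-line proof compresses.
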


\begin{proof}
If $u$ and $v$ are both, say, blue, then each vertex in $N(uv)$ would
be red, forcing a triangle in $N(uv)$ to be all red (i.e.,
monochromatic), which is not possible in a 2-coloring.
\end{proof}

\begin{corollary}\label{cor:heavy}
In a 2-colorable tournament, the heavy arcs form a bipartite graph. 
\end{corollary}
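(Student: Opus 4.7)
The plan is to derive this almost immediately from Lemma \ref{lem:heavy}. Fix any 2-coloring of the tournament $T$, which by definition gives a partition of $V$ into two transitive subtournaments $T[V_1]$ and $T[V_2]$. I would then consider the undirected graph $H$ on vertex set $V$ whose edges are precisely the heavy arcs of $T$ (ignoring orientation), and show that $(V_1,V_2)$ is a bipartition of $H$.

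To see this, take any heavy arc $uv \in A$. By Lemma \ref{lem:heavy} the endpoints $u$ and $v$ must receive different colors in the given 2-coloring, so one lies in $V_1$ and the other in $V_2$. Hence every edge of $H$ crosses the partition $(V_1,V_2)$, which is exactly the statement that $H$ is bipartite.

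The only thing to be careful about is the interpretation of the statement: ``the heavy arcs form a bipartite graph'' refers to the underlying undirected graph, and the bipartition is witnessed by any 2-coloring of $T$ guaranteed by the assumption that $\vec\chi(T)=2$. There is no real obstacle here; the corollary is a direct consequence of Lemma \ref{lem:heavy} together with the existence of a 2-coloring.
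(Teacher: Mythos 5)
Your proof is correct and is exactly the intended (and essentially only) argument: the paper states Corollary \ref{cor:heavy} as an immediate consequence of Lemma \ref{lem:heavy} without spelling out the details. Fixing a 2-coloring and observing that every heavy arc crosses the induced partition is precisely what the paper has in mind.
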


Now we can prove Observation \ref{obs:light}.

\begin{proof}[Proof of Observation \ref{obs:light}]
All heavy arcs can be easily detected.  By Corollary \ref{cor:heavy},
the set of heavy arcs forms a bipartite graph.  The vertex set of this
bipartite graph can be colored with two colors (red and blue), such
that the tournament induced by each color does not contain a heavy
arc.  Then we partition the vertices into two sets one containing all
the blue vertices and the other containing all the red vertices.  The
uncolored vertices can go in either set.  Since neither of these sets
contains any heavy arcs, this yields a partition the vertices of a
2-colorable tournament into two light subtournaments.
\end{proof}

Theorem \ref{thm:2-col} 
will follow from Observation \ref{obs:light} and the following theorem.

\begin{theorem}\label{thm:2-col-light}
A 2-colorable light tournament $T$ can be colored
with five colors in polynomial time.
\end{theorem}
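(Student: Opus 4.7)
The plan is to apply the decomposition Lemma~\ref{lem:eff_loc_to_glob} with parameters $c = d = 1$, which gives a coloring with $1 + 4 \cdot 1 = 5$ colors. It therefore suffices to exhibit, and find in polynomial time, a $(1,1)$-vertex chain in $T$.

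First I would observe that lightness buys the arc-neighborhood condition for free: by definition, $T[N(uv)]$ contains no directed triangle for every arc $uv$ of a light tournament, and a tournament with no directed triangle is acyclic, hence transitive. So $T[N(e_i)]$ is $1$-colorable for any choice of arc chain in $T$, and the only nontrivial task is to find two distinct vertices $v_0$ and $v_k$ for which $T[N^+(v_0) \cup N^-(v_k)]$ is transitive.

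To prove that such a pair exists, I would use the promise that $T$ admits a $2$-coloring $V = V_1 \sqcup V_2$ with $T[V_1]$ and $T[V_2]$ both transitive (this partition is not known to the algorithm, only its existence is used for analysis). After disposing of the trivial cases $|V| \leq 2$, at least one class, say $V_1$, has size at least two; let $s$ and $t$ be the sink and source respectively of the transitive tournament $T[V_1]$, so $s \neq t$. Every out-neighbor of $s$ lies outside $V_1$, hence $N^+(s) \subseteq V_2$, and symmetrically $N^-(t) \subseteq V_2$. Therefore $N^+(s) \cup N^-(t) \subseteq V_2$ induces a subtournament of $T[V_2]$, which is transitive. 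Since $T$ is strongly connected (as remarked in Section~\ref{sec:notation}), a shortest directed path from $s$ to $t$ exists, and its vertex sequence is the desired $(1,1)$-vertex chain.

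The algorithm itself does not need to know the $2$-coloring: it enumerates all ordered pairs $(u,w)$ of distinct vertices, tests in polynomial time whether $T[N^+(u) \cup N^-(w)]$ is acyclic, runs BFS for a shortest $u \to w$ path when the test succeeds, and feeds the resulting vertex chain into Lemma~\ref{lem:eff_loc_to_glob}. The existence argument above guarantees that at least one pair succeeds, so the algorithm terminates with a proper $5$-coloring. The main conceptual step is really just identifying the existence witness $(s,t)$; the only potential pitfall is the small-size edge case, which is handled trivially since any tournament on at most five vertices can be colored with five colors outright.
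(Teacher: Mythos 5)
Your proposal is correct and follows essentially the same route as the paper: Lemma~\ref{lem:eff_loc_to_glob} is invoked with $c=d=1$, lightness gives transitivity of every arc neighborhood, and the sink/source of one color class of the hidden $2$-coloring supply the distinct vertices $u,w$ with $N^+(u) \cup N^-(w)$ transitive (this is exactly Lemma~\ref{lem:min_max_vertex-k} specialized to $k=2$, followed by the same brute-force search over pairs in Lemma~\ref{lem:1vc}). The only difference is your explicit care that the chosen class have size at least two so that $s \neq t$, a small-instance corner case the paper's writeup does not belabor.
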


Our goal is to use Lemma \ref{lem:eff_loc_to_glob} to prove Theorem
\ref{thm:2-col-light}.  In other words, we want to show that a 2-colorable
light tournament has a $(1,1)$-vertex chain that can be found efficiently.  We first prove a
useful claim.

\begin{lemma}\label{lem:min_max_vertex-k}
In a $k$-colorable tournament $T$, there exist distinct
vertices $u$ and $w$ such that $N^+(u) \cup N^-(w)$ is
$(k-1)$-colorable.
\end{lemma}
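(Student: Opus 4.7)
The plan is to exhibit $u$ and $w$ directly from any fixed $k$-coloring of $T$ by using the source and sink of a transitive color class.

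Fix a partition $V_1,\ldots,V_k$ of $V(T)$ where each $T[V_i]$ is transitive. First I would dispose of the degenerate case $|V(T)|\le k$: pick any arc $w\to u$; then $u\notin N^+(u)$, $w\notin N^-(w)$, and $w\to u$ means $u\notin N^-(w)$ and $w\notin N^+(u)$. Hence $N^+(u)\cup N^-(w)\subseteq V\setminus\{u,w\}$, a set of size at most $k-2$, which is trivially $(k-1)$-colorable (one vertex per color).

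For the main case, assume some class $V_i$ has at least two vertices (this must occur if $|V(T)|\ge k+1$, and otherwise we are in the case above). Since $T[V_i]$ is transitive, it has a unique source $w$ and a unique sink $u$, and $w\ne u$. By the definitions of source and sink inside a transitive tournament:
\begin{itemize}
  \item $u$ has no out-neighbor in $V_i$, so $N^+(u)\cap V_i=\emptyset$;
  \item $w$ has no in-neighbor in $V_i$, so $N^-(w)\cap V_i=\emptyset$.
\end{itemize}
Consequently $N^+(u)\cup N^-(w)\subseteq V\setminus V_i=\bigcup_{j\ne i}V_j$, and the restriction of the given $k$-coloring to the classes $\{V_j\}_{j\ne i}$ furnishes a partition of $N^+(u)\cup N^-(w)$ into $k-1$ transitive sets.

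There is no real obstacle here; the only point requiring a (trivial) sanity check is that $u\ne w$, which is precisely why I choose a class with at least two vertices, and that $u$ and $w$ themselves do not sneak back into $N^+(u)\cup N^-(w)$: $u\notin N^+(u)$ and, because $w\to u$ in $V_i$, we have $u\notin N^-(w)$ and $w\notin N^+(u)$ as well. Thus the desired pair is found, and the lemma follows.
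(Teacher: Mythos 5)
Your proof is correct and takes essentially the same approach as the paper: pick the sink $u$ and source $w$ of a transitive color class, so that $N^+(u)\cup N^-(w)$ avoids that class and inherits the remaining $k-1$ colors. You are in fact slightly more careful than the paper's own proof in guaranteeing $u\neq w$ (by choosing a class of size at least two, or handling the degenerate small case separately), a point the paper silently glosses over.
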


\begin{proof}
Since $T=(V,A)$ is $k$-colorable, there exist $k$ transitive sets
$X_1, \ldots, X_k$ such that $V= \cup_{i=1}^k X_i$. Then take $u$ to
be the vertex in $X_1$ that has only incoming arcs from other vertices
in $X_1$ (i.e., the sink vertex for $X_1$). Similarly, take $w$ to be
the vertex in $X_1$ that has only outgoing arcs to other vertices in
$X_1$ (i.e., the source vertex for $X_1$).  The out-neighborhood of
$u$ and the in-neighborhood of $w$ are both subsets of $V
\setminus{X_1}$, and thus so is their union, which is therefore
$(k-1)$-colorable.
\end{proof}

\begin{lemma}\label{lem:1vc}
A 2-colorable, light tournament $T$ contains a
$(1,1)$-vertex chain, which can be found in polynomial time.
\end{lemma}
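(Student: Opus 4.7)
The plan is to obtain the required vertex chain by combining Lemma \ref{lem:min_max_vertex-k} with a brute-force search over pairs of vertices. Applying Lemma \ref{lem:min_max_vertex-k} with $k = 2$ yields the existence of distinct vertices $u, w \in V(T)$ such that $T[N^+(u) \cup N^-(w)]$ is $1$-colorable, i.e., transitive. Setting $v_0 = u$ and $v_k = w$, this is precisely the endpoint condition of a $(1,1)$-vertex chain. The interior condition requires that $T[N(e_i)]$ be transitive for every arc $e_i$ of the chain, and this is automatic in a light tournament, since by definition $T[N(uv)]$ is transitive for every arc $uv$.

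To turn the existence statement into a polynomial-time algorithm, I would iterate over all ordered pairs of distinct vertices $(u, w) \in V(T) \times V(T)$. For each pair, I would test whether $T[N^+(u) \cup N^-(w)]$ is acyclic, which in a tournament is equivalent to being transitive and can be verified in polynomial time (e.g., by topological sort). Lemma \ref{lem:min_max_vertex-k} guarantees that at least one such pair passes the test, so the search succeeds. Having fixed a valid pair $(v_0, v_k)$, I would compute a shortest directed path from $v_0$ to $v_k$ via BFS; such a path exists because, as noted in Section \ref{sec:notation}, we may assume $T$ is strongly connected. The vertex sequence of this shortest path is then a $(1,1)$-vertex chain, and the whole procedure runs in polynomial time.

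I do not anticipate a substantial obstacle: essentially all the work is done by Lemma \ref{lem:min_max_vertex-k} on the existence side and by the lightness hypothesis on the interior-condition side. The main point to keep track of is that the candidate endpoints can be validated algorithmically without ever constructing a $2$-coloring of $T$, which is why I rely on an acyclicity test rather than on an explicit $2$-coloring (which we cannot hope to compute in polynomial time, since even deciding $2$-colorability is \NP-hard). Combined with Lemma \ref{lem:eff_loc_to_glob} applied with $c = d = 1$, the resulting $(1,1)$-vertex chain will immediately yield the five-color bound announced in Theorem \ref{thm:2-col-light}.
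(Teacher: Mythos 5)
Your proposal is correct and follows essentially the same approach as the paper's own proof: both invoke Lemma \ref{lem:min_max_vertex-k} with $k=2$ for existence of the endpoints, brute-force over all ordered pairs testing transitivity of $N^+(u) \cup N^-(w)$, take a shortest $u$--$w$ path, and observe that lightness makes every $N(e_i)$ transitive. The only additions are minor clarifications (BFS, strong connectivity) that the paper leaves implicit.
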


\begin{proof}
By Lemma \ref{lem:min_max_vertex-k}, there exist $u$ and $w$ such that
$N^+(u) \cup N^-(w)$ is transitive.  To find them, we can test the
transitivity of $N^+(u) \cup N^-(w)$ for every pair of vertices in
$T$.  Then we simply need to find a shortest path from $u$ to $w$,
which can be done in polynomial time.  Let $k$ denote the length of the path.
Set $v_0 = u$ and $v_k = w$, and let $(v_i)_{1\leq i \leq k-1}$ be the remaining
vertices in the path.  Since
$\vec\chi(N(e)) \leq 1$ for every arc $e$ in a light tournament, we have $\vec\chi(N(e_i)) \leq 1$ for every arc in the arc chain corresponding to this vertex chain.  Moreover, notice that in this case, $T[N(e)]$ can be efficiently colored with one color.
\end{proof}

\begin{proof}[Proof of Theorem \ref{thm:2-col-light}]
The proof of Theorem \ref{thm:2-col-light} follows from Lemma
\ref{lem:1vc} and Lemma \ref{lem:eff_loc_to_glob}.
  \end{proof}

Now we are ready to prove the main theorem of this section.

\begin{proof}[Proof of Theorem \ref{thm:2-col}]
The proof of Theorem \ref{thm:2-col} follows from Observation \ref{obs:light} and Theorem \ref{thm:2-col-light}.\end{proof}

\paragraph{Certificates of Non-$2$-Colorability.}

In Section \ref{sec:2color}, we presented an algorithm to color a
2-colorable tournament with ten colors.  Suppose we run this algorithm
on an arbitrary tournament $T$ (e.g., one that is {\emph {not}}
2-colorable).  Then our algorithm will either color $T$ with ten colors or it
will produce at least one certificate that $T$ is not 2-colorable.  A
certificate will have the following form: either a) there is an odd
cycle of heavy arcs in $T$, or b) for every ordered pair of vertices
$(u,v)$, the subtournament $T[N^+(u) \cup N^-(v)]$ is not transitive.
In particular, an 11-chromatic tournament must contain such a
certificate.

\subsection{3-Colorable Tournaments}\label{sec:3color}

Coloring 3-colorable tournaments turns out to be closely related to
coloring 3-colorable graphs.  This seems surprising since the
techniques for 3-colorable graphs were applied to coloring 2-colorable
3-uniform hypergraphs, which are a generalization of 2-colorable
tournaments.

We will first show that we can adapt ideas of
\cite{wigderson1983improving} and \cite{blum1994new} to the problem of
coloring 3-colorable tournaments by using our algorithm for coloring
2-colorable tournaments with ten colors as a subroutine.

\begin{lemma}
A $3$-colorable tournament $T$ can be colored with $O(\sqrt{n})$ colors in
polynomial time.
\end{lemma}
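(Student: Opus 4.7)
The plan is to mimic Wigderson's classical $O(\sqrt{n})$ algorithm for 3-colorable graphs, using our 10-coloring of 2-colorable tournaments (Theorem~\ref{thm:2-col}) as the subroutine in place of ``2-color a bipartite neighborhood''. The guiding observation is that in a 3-colorable tournament $T$ with (unknown) color classes $V_1,V_2,V_3$, the sink $s_i$ of each $V_i$---i.e.\ the vertex of $V_i$ that loses to every other vertex of $V_i$ in the transitive order---satisfies $N^+(s_i)\subseteq V\setminus V_i$, and hence $N^+(s_i)$ is 2-colorable. Thus at least three vertices of $T$ have the property that their out-neighborhood is 2-colorable. Although we do not know which vertices these are, we can detect one in polynomial time by running the 10-coloring algorithm of Theorem~\ref{thm:2-col} on $N^+(v)$ for each vertex $v$: the algorithm succeeds precisely when $N^+(v)$ is 2-colorable, and at least one $v$ must certify success.

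Set the threshold $d=\lceil\sqrt{n}\rceil$. The algorithm iterates: while there exists a vertex $v$ with $|N^+(v)|\ge d$ such that the 10-coloring algorithm succeeds on $T[N^+(v)]$, color $N^+(v)$ with ten fresh colors via Theorem~\ref{thm:2-col} and delete these vertices from $T$. The residual tournament is still 3-colorable, so the procedure repeats. Each iteration consumes exactly $10$ new colors and removes at least $d$ vertices, giving at most $n/d\le\sqrt{n}$ iterations and contributing $10\sqrt{n}$ colors. When the loop stops, we finish by spending at most $\sqrt{n}$ additional colors on the leftover (for instance one color per remaining vertex), for a grand total of $O(\sqrt{n})$ colors.

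The main obstacle I expect is the fallback case in which the loop halts while $|V(T)|>\sqrt{n}$: this means that every vertex $v$ with 2-colorable $N^+(v)$ satisfies $|N^+(v)|<\sqrt{n}$, and in particular all three color-class sinks have out-degree smaller than $\sqrt{n}$. To salvage the $O(\sqrt{n})$ bound in this regime, I would symmetrically widen the search and also look for vertices $v$ with small 2-colorable $N^-(v)$ (the sources of color classes supply them), and, more powerfully, invoke Lemma~\ref{lem:min_max_vertex-k} to hunt for \emph{pairs} $(u,w)$ for which $T[N^+(u)\cup N^-(w)]$ is 2-colorable (such a pair always exists in a 3-colorable tournament, detectable by the same 10-coloring test). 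A careful case analysis should show that if none of these sets ever reaches size $\sqrt{n}$, the 3-coloring must be forced into such a concentrated form---with nearly all vertices of $V\setminus V_i$ pointing into the sink and out of the source of each $V_i$---that a large transitive block can be peeled off directly in polynomial time, closing the argument within the same $O(\sqrt{n})$ color budget.
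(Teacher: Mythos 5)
Your observation that each color-class sink has a $2$-colorable out-neighborhood is exactly right, and your main loop is in the spirit of the paper's argument. But you correctly sense a gap and then do not close it: your loop may stall while most of the tournament remains. When it stalls, every vertex whose out-neighborhood is $2$-colorable has out-degree below $\sqrt{n}$; this says nothing about the size of the leftover, which could still be, say, $n/2$. The rescues you sketch do not work either. In particular, Lemma~\ref{lem:min_max_vertex-k} only guarantees $N^+(u)\cup N^-(w)\subseteq V\setminus V_1$; there is no reason this union is large, since any $x\notin V_1$ with $x\to u$ and $w\to x$ lies in neither set. So those candidate sets could all be small as well, and the final sentence (``a careful case analysis should show \ldots that a large transitive block can be peeled off'') is precisely the missing content, not a consequence of what came before.

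The idea the paper uses (and you are missing) is that when some vertex $u$ has out-degree below $\sqrt{n}$, you make progress \emph{without} needing $N^+(u)$ to be $2$-colorable: put $u$ into a set $S$, delete $N^+[u]$, and recurse. Every vertex added to $S$ later survived the deletion of $N^+[u]$, hence beats $u$, so all arcs inside $S$ run from later-added to earlier-added and $S$ is transitive. Since each such step removes at most $\sqrt{n}$ vertices, after deleting half the tournament you either still hit, at some intermediate step, a subtournament with minimum out-degree at least $\sqrt{n}$ (whereupon the $10$-coloring trick yields a transitive set of size at least $\sqrt{n}/10$), or $S$ itself has size $\Omega(\sqrt{n})$. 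Either way you obtain a transitive set of size $\Omega(\sqrt{n})$, and the lemma then follows by the standard Blum-style reduction: being able to find transitive sets of size $\Omega(\sqrt{m})$ in $3$-colorable tournaments on $m$ vertices gives an $O(\sqrt{n})$-coloring. This ``low out-degree implies progress'' dichotomy is the step your proposal needs.
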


\begin{proof}
A straightforward adaptation of Lemma 1 in \cite{blum1994new} implies
that if we can efficiently find a transitive subtournament of size
$\Omega(\sqrt{m})$ in a 3-colorable tournament on $m$ vertices, then
we can efficiently color a 3-colorable tournament on $n$ vertices with
$O(\sqrt{n})$ colors.  (We can substitute ``transitive set'' for
independent set in the definition of ``Type 1 Progress''
in \cite{blum1994new}.)  It remains to show that we can always find a
transitive set of size $\Omega(\sqrt{n})$ in a 3-colorable tournament
$T$ on $n$ vertices.

Let $T$ be a $3$-colorable tournament on $n \geq 3$ vertices.
Notice that $T$ has a vertex whose out-neighborhood is
2-colorable.  (In fact, it has at least three such vertices.)  To see this,
consider any $3$-coloring of $T$.  Consider a transitive
subtournament corresponding to one of the three colors.  Observe that
its sink vertex has outgoing arcs only towards the other two colors.

For any vertex, if its out-neighborhood is 2-colorable, we can color
its out-neighborhood with ten colors by Theorem \ref{thm:2-col}.  So we
can run the algorithm for the out-neighborhood of every vertex,
and the algorithm will successfully produce a 10-coloring of the
out-neighborhood of at least one vertex.
Therefore, if the minimum outdegree is at least $\sqrt{n}$, we find a
transitive set of size at least $\sqrt{n}/10$.

On the other hand, if
the minimum outdegree is smaller than $\sqrt{n}$, we will make
progress another way.  In this case, let $u$ be a vertex with
outdegree smaller than $\sqrt{n}$.  Then, we add $u$ to a set $S$, and
continue the algorithm on the subtournament of $T$ induced on
$V\setminus{N^+[u]}$.  We continue this until we find a transitive
subtournament of size at least $\sqrt{n}/20$ or until we have removed
half the vertices.  In the first case, we will have found a transitive
set of size $\Omega(\sqrt{n})$, and in the second case, the set $S$
will be transitive, and also of size $\Omega(\sqrt{n})$.
\end{proof}

We now show how to use the decomposition of Section \ref{sec:decomp} to get a
coloring with fewer colors based on a reduction to coloring
3-colorable graphs.

\begin{theorem}\label{thm:3-col_reduction}
If there is a polynomial-time algorithm to color a 3-colorable graph
$G$ with $k$ colors, then there is a polynomial-time algorithm to
color a 3-colorable tournament with $50k$ colors.
\end{theorem}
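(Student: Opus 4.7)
The plan is to apply Lemma~\ref{lem:eff_loc_to_glob} to $T$ using a vertex chain whose endpoint and arc-neighborhood parameters are each bounded by $10k$. This yields a $(10k,10k)$-vertex chain, and the lemma then produces a coloring of $T$ with $5 \cdot 10k = 50k$ colors in polynomial time. So the task reduces to showing how to build such a chain efficiently.

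For the endpoints, I would invoke Lemma~\ref{lem:min_max_vertex-k}: since $T$ is 3-colorable, there exist vertices $v_0$ and $v_k$ for which $T[N^+(v_0)\cup N^-(v_k)]$ is 2-colorable, and hence ten-colorable in polynomial time by Theorem~\ref{thm:2-col}. To locate such a pair, I would iterate over all ordered pairs $(u,w)$ and attempt the algorithm of Theorem~\ref{thm:2-col} on $T[N^+(u)\cup N^-(w)]$; Lemma~\ref{lem:min_max_vertex-k} guarantees at least one pair succeeds. A shortest directed path from $v_0$ to $v_k$ then determines the vertex chain.

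For the arc neighborhoods, I would reduce coloring each $T[N(e)]$ to coloring a 3-colorable graph. Define the auxiliary graph $G_e$ on vertex set $N(e)$, with $\{x,y\}$ an edge iff there exists $z\in N(e)$ such that $\{x,y,z\}$ is a directed triangle in $T[N(e)]$. Two observations drive the reduction. First, $G_e$ is 3-colorable as a graph: any tournament 3-coloring of $T$, restricted to $N(e)$, partitions $N(e)$ into three transitive (hence triangle-free) sets, each of which is independent in $G_e$. Second, any proper graph $k$-coloring of $G_e$ is automatically a tournament $k$-coloring of $T[N(e)]$, because a color class avoiding $G_e$-edges contains no directed triangle and is therefore transitive. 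Hence the assumed algorithm for 3-colorable graphs, applied to $G_e$, colors $T[N(e)]$ with $k$ colors in polynomial time.

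Putting the pieces together, the endpoints use at most ten colors and each arc neighborhood uses at most $k$ colors; padding both parameters up to the common value $10k$ gives a $(10k,10k)$-vertex chain, and Lemma~\ref{lem:eff_loc_to_glob} delivers the $50k$-coloring. The main conceptual obstacle is selecting the right auxiliary graph: one needs a construction whose graph chromatic number equals the tournament dichromatic number of $T[N(e)]$, so that 3-colorability transfers from the tournament to the graph and a graph coloring translates back to a proper tournament coloring. The triangle-incidence graph $G_e$ described above has exactly this property, and one should verify that it can be built in polynomial time by enumerating triples within each $N(e)$.
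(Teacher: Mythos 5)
Your plan to reduce each arc neighborhood $T[N(e)]$ to a graph-coloring instance hinges on the claim that the triangle-incidence graph $G_e$ is $3$-colorable, and that claim is false. You argue that a $3$-dicoloring of $T$ restricted to $N(e)$ yields three transitive sets, ``each of which is independent in $G_e$.'' But transitivity is strictly weaker than independence in $G_e$: a transitive class $S$ contains no directed triangle \emph{inside} $S$, yet two vertices $x,y\in S$ may well form a triangle with some $z\in N(e)\setminus S$. By your definition, $\{x,y\}$ is then an edge of $G_e$, so $S$ is not an independent set. In the extreme case where every arc $xy$ of $T[N(e)]$ has $N(xy)\cap N(e)\neq\emptyset$, the graph $G_e$ is complete on $|N(e)|$ vertices, so $\chi(G_e)=|N(e)|$ even though $\vec\chi(T[N(e)])\leq 3$. (The direction you argue in the other paragraph --- that a proper coloring of $G_e$ yields a proper dicoloring of $T[N(e)]$ --- is fine; it is only the $3$-colorability of $G_e$ that breaks.) Since you cannot in general $k$-color $G_e$ using the $3$-colorable-graph algorithm, the $(10k,10k)$-vertex chain never materializes, and Lemma~\ref{lem:eff_loc_to_glob} cannot be invoked as you describe.

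The paper avoids this trap by working globally rather than per arc. It forms a single graph $G=(V,F)$ on \emph{all} of $V(T)$, where $F$ consists of the arcs whose neighborhoods the $10$-coloring algorithm of Theorem~\ref{thm:2-col} fails on. Any such arc cannot be monochromatic in a $3$-dicoloring (otherwise its neighborhood would be $2$-colorable), so any $3$-dicoloring of $T$ is a proper $3$-coloring of $G$, and $G$ is $3$-colorable as a graph. A $k$-coloring of $G$ then partitions $V$ into parts $V_i$ in which every arc has an efficiently $10$-colorable neighborhood; Lemma~\ref{lem:min_max_vertex-k} plus Theorem~\ref{thm:2-col} gives suitable chain endpoints inside $T[V_i]$, so each part admits a $(10,10)$-vertex chain and hence a $50$-coloring by Lemma~\ref{lem:eff_loc_to_glob}, for $50k$ colors in total. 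The crucial point is that $3$-colorability is transferred to a graph whose edges are certified non-monochromatic, not to a triangle-incidence graph of each arc neighborhood. As a sanity check, note that if your argument were sound it would yield roughly $5\max(k,10)$ colors rather than $50k$, a much stronger bound than the paper claims, which should already have raised a flag.
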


\begin{proof}
Let $T=(V,A)$ be a 3-colorable tournament. For every arc $e \in A$,
try coloring $N(e)$ with ten colors using Theorem \ref{thm:2-col}.  If
the algorithm fails, the neighborhood of the edge is not 2-colorable,
and thus the edge is not monochromatic in any 3-coloring.  Let $F
\subset E$ denote the set of arcs whose neighborhoods cannot be
colored with ten colors using our algorithm.  Ignore the direction of
the arcs in $F$ and consider the graph $G=(V,F)$.  This graph must be
3-colorable, since no arc in $F$ is monochromatic in any 3-coloring of
$T$.

Now let us show that from a coloring of $G$ with $k$ colors, we can
obtain a coloring of $T$ with $50k$ colors.  Consider a coloring of
the graph $G=(V,F)$ and let $V_i$ be the vertices colored with color
$i$ in this coloring.  Consider the induced subtournament $T' =
T[V_i]$; it has no arc in $F$ and thus the neighborhood of every arc
in this tournament can be colored efficiently with ten colors.
Furthermore, by Lemma \ref{lem:min_max_vertex-k} and Theorem
\ref{thm:2-col}, there are vertices $u$ and $v$ in $T'$ such that
$N^+_{T'}(u) \cup N_{T'}^-(v)$ is efficiently 10-colorable.  So by
Lemma \ref{lem:eff_loc_to_glob}, we can efficiently color $T'$ with 50
colors.  We can do this for the subtournament $T[V_i]$ for each of the
$i$ colors used to color $G$.
\end{proof}

Combining this lemma with the approximation algorithm from
\cite{kawarabayashi2017coloring}, which colors a 3-colorable graph
with fewer than $\widetilde{O}(n^{\frac{1}{5}})$ colors,
we obtain the same
asymptotic bound for 3-colorable tournaments.

\begin{corollary}
A 3-colorable tournament $T$ can be colored with 
$\widetilde{O}(n^{0.19996})$ colors in polynomial time.
\end{corollary}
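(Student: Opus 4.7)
The plan is to combine Theorem \ref{thm:3-col_reduction} with the Kawarabayashi--Thorup algorithm in a black-box fashion. Concretely, given a 3-colorable tournament $T$ on $n$ vertices, I will first build the auxiliary graph $G = (V,F)$ from the proof of Theorem \ref{thm:3-col_reduction}: for each arc $e \in A(T)$, I run the algorithm of Theorem \ref{thm:2-col} on $T[N(e)]$, and I put the (undirected) edge $e$ into $F$ exactly when that algorithm fails to produce a 10-coloring. As argued there, $G$ is a 3-colorable graph on the same $n$ vertices as $T$, and its construction takes polynomial time in the size of $T$.

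Next, I apply the algorithm of \cite{kawarabayashi2017coloring} to $G$. This produces, in polynomial time, a proper coloring of $G$ using $k = \widetilde{O}(n^{0.19996})$ colors. Feeding this coloring into the second half of the proof of Theorem \ref{thm:3-col_reduction} then yields a coloring of $T$ with $50k$ colors. Since $50$ is an absolute constant and $\widetilde{O}$ absorbs polylogarithmic and constant factors, $50k = \widetilde{O}(n^{0.19996})$, which is the desired bound.

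The only mildly subtle point, and therefore the step I would be most careful about, is just bookkeeping of the parameter $n$: the Kawarabayashi--Thorup bound is stated in terms of the number of vertices of the input graph, and since $V(G) = V(T)$, this number is indeed $n$, so no loss or blow-up occurs when translating back to the tournament. Everything else is a direct substitution, so no genuine obstacle arises beyond invoking the two cited results in the right order.
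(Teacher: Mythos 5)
Your argument is correct and is precisely the paper's: apply Theorem \ref{thm:3-col_reduction} with the 3-colorable-graph algorithm of \cite{kawarabayashi2017coloring} as a black box, and observe that the factor $50$ is absorbed by $\widetilde{O}$. The bookkeeping point you flag (that $V(G)=V(T)$, so the parameter $n$ is unchanged) is also exactly why the substitution is lossless.
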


We can extend Theorem \ref{thm:3-col_reduction} to a more general
case.

\begin{lemma}
Let $g$ be a function such that we can efficiently color
a $k$-colorable graph with
$g(k)$ colors, and let $f$ be a function such that we can efficiently color
a $k$-colorable tournament with $f(k)$ colors.  
Then $f(k) \leq 5\cdot f(k-1)
\cdot g(k)$.
\end{lemma}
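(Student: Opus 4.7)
The plan is to generalize the proof of Theorem \ref{thm:3-col_reduction} by replacing the number $10$ (the bound for coloring 2-colorable tournaments) with $f(k-1)$, and the factor $50 = 5 \cdot 10$ with $5 \cdot f(k-1)$. Let $T=(V,A)$ be a $k$-colorable tournament. First, for every arc $e \in A$ I would attempt to color $T[N(e)]$ using the assumed efficient algorithm for $(k-1)$-colorable tournaments, which uses $f(k-1)$ colors when it succeeds. Let $F \subseteq A$ denote the set of arcs for which this attempt fails. The key observation is that in any $k$-coloring of $T$, a monochromatic arc $uv$ forces every vertex of $N(uv)$ into one of the remaining $k-1$ color classes, so $T[N(uv)]$ is $(k-1)$-colorable and the algorithm must succeed. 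Therefore each arc of $F$ is bichromatic in every $k$-coloring of $T$, so the undirected graph $G=(V,F)$ inherits any such $k$-coloring and is $k$-colorable.

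Next, I would color $G$ with $g(k)$ colors using the assumed graph-coloring algorithm, obtaining color classes $V_1,\ldots,V_{g(k)}$. For each $i$, the induced subtournament $T[V_i]$ contains no arc of $F$, so for every arc $e$ in $T[V_i]$, $N_{T[V_i]}(e) \subseteq N_T(e)$ can be efficiently colored with $f(k-1)$ colors. Moreover $T[V_i]$, as a subtournament of $T$, is $k$-colorable, so by Lemma \ref{lem:min_max_vertex-k} there exist $u,w \in V_i$ such that $N^+_{T[V_i]}(u) \cup N^-_{T[V_i]}(w)$ is $(k-1)$-colorable and hence efficiently colorable with $f(k-1)$ colors. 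Such a pair can be located in polynomial time by running the $(k-1)$-tournament-coloring algorithm on $N^+(u)\cup N^-(w)$ for every candidate $(u,w)$; at least one pair must succeed. Taking any shortest directed path from $u$ to $w$ in $T[V_i]$ yields an $(f(k-1),f(k-1))$-vertex chain, and Lemma \ref{lem:eff_loc_to_glob} (with $c = d = f(k-1)$) gives a proper coloring of $T[V_i]$ using at most $c + 4d = 5\cdot f(k-1)$ colors.

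Finally, I would assign a disjoint palette of $5\cdot f(k-1)$ colors to each of the $g(k)$ subtournaments $T[V_i]$, yielding a proper coloring of $T$ using at most $5\cdot f(k-1)\cdot g(k)$ colors, as claimed. The only nontrivial step — though really just a direct adaptation of the $k=3$ argument — is the reinterpretation of ``$N(e)$ is certifiably $(k-1)$-colorable'' as a substitute for the simpler ``$N(e)$ is transitive'' used in the light-tournament setting; once one checks that failure of the $(k-1)$-coloring algorithm still certifies bichromaticity of $e$ in every $k$-coloring, the rest of the argument is a mechanical assembly of the decomposition lemma with the graph-coloring oracle.
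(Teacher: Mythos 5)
Your proposal is correct and follows exactly the route the paper takes: detect arcs whose triangle-neighborhoods cannot be $f(k-1)$-colored, observe these must be bichromatic in any $k$-coloring so they form a $k$-colorable backedge graph $G$, color $G$ with $g(k)$ colors, and then apply Lemma \ref{lem:min_max_vertex-k} plus Lemma \ref{lem:eff_loc_to_glob} with $c=d=f(k-1)$ to color each color class of $G$ with $5f(k-1)$ colors. The paper states this far more tersely (as "the same reduction as Theorem \ref{thm:3-col_reduction}"), but your fleshed-out version supplies precisely the intended details, including the algorithmic search for the pair $(u,w)$.
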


\begin{proof}
We use the same reduction as in the proof of Theorem
\ref{thm:3-col_reduction}, but now $F$ is the set of arcs whose
neighborhoods cannot be efficiently $f(k-1)$-colored.  Then each $V_i$
in $G$ is colored with $5 \cdot f(k-1)$ colors.  So we need a total of
$5 \cdot f(k-1) \cdot g(k)$ colors.
\end{proof}

\section{Hardness of Approximate Coloring in Tournaments}\label{sec:hardness}

In this section, we examine the hardness of approximate coloring of
tournaments.  \cite{chen2007min} showed that deciding if a tournament
can be 2-colored is \NP-hard.  For completeness, we provide a
simplified (though similar) proof of this result in Appendix
\ref{sec:base_hardness}.  Later, \cite{fox2019removal} proved that for
any $k$, it is \NP-hard to decide if a tournament is $k$-colorable.

We will first improve upon these \NP-hardness results and then show
hardness of coloring $k$-colorable tournaments for $k\geq 3$ with
$O(1)$ colors under the $d$-To-1 conjecture.  The $d$-To-1 conjecture
was first introduced by Khot alongside the famous Unique Games
conjecture \cite{khot2002power}, and has since been used to show
hardness of coloring $3$-colorable graphs with $O(1)$ colors
\cite{guruswami2020d}.

\subsection{\NP-Hardness of Approximate Coloring of $k$-Colorable
  Tournaments}\label{sec:2-color-hardness}

It was shown previously that it is \NP-hard to color a 2-colorable
tournament with two colors~\cite{chen2007min,fox2019removal}.  We prove
the following stronger theorem.

\begin{figure}
\centering
\begin{tikzpicture}
	  \tikzstyle{smallvertex}=[circle,draw,minimum size=10pt,inner sep=0pt]
	  \tikzstyle{bigvertex}=[circle,draw,minimum size=40pt,inner sep=0pt]
	  \tikzstyle{group}==[circle,draw,minimum size=60pt,inner sep=0pt]

	  \node[smallvertex] (v1) at (0,0) {$v_{1,1}$};
	  \node[smallvertex] (v2) at (0.5,0.866) {$v_{2,1}$};
	  \node[smallvertex] (v3) at (1,0) {$v_{5,1}$};
	  \node[group] (g1) at (0.5,0.289) {$$};
	  \node at (-0.3,-0.7) {$e_1$};
	  \draw [->] (v1) to (v2);
	  \draw [->] (v2) to (v3);
	  \draw [->] (v3) to (v1);

	  \begin{scope}[xshift=3cm]
	  \node[smallvertex] (v4) at (0,0) {$v_{1,2}$};
          \node[smallvertex] (v5) at (0.5,0.866) {$v_{4,2}$};
          \node[smallvertex] (v6) at (1,0) {$v_{3,2}$};
          \node[group] (g2) at (0.5,0.289) {$$};
	  \node at (-0.3,-0.7) {$e_2$};
          \draw [->] (v4) to (v5);
          \draw [->] (v5) to (v6);
          \draw [->] (v6) to (v4);
	  \end{scope}

          \begin{scope}[xshift=6cm]
          \node[smallvertex] (v7) at (0,0) {$v_{3,3}$};
          \node[smallvertex] (v8) at (0.5,0.866) {$v_{4,3}$};
          \node[smallvertex] (v9) at (1,0) {$v_{5,3}$};
          \node[group] (g3) at (0.5,0.289) {$$};
	  \node at (-0.3,-0.7) {$e_3$};
          \draw [->] (v7) to (v8);
          \draw [->] (v8) to (v9);
          \draw [->] (v9) to (v7);
          \end{scope}

          \begin{scope}[xshift=9cm]
          \node[smallvertex] (v10) at (0,0) {$v_{1,4}$};
          \node[smallvertex] (v11) at (0.5,0.866) {$v_{2,4}$};
          \node[smallvertex] (v12) at (1,0) {$v_{4,4}$};
          \node[group] (g4) at (0.5,0.289) {$$};
	  \node at (-0.3,-0.7) {$e_4$};
          \draw [->] (v10) to (v11);
          \draw [->] (v11) to (v12);
          \draw [->] (v12) to (v10);
          \end{scope}

	  \draw [->,>=stealth,thick,double distance=2pt] (g1) to (g2);
	  \draw [->,>=stealth,thick,double distance=2pt] (g2) to (g3);
	  \draw [->,>=stealth,thick,double distance=2pt] (g3) to (g4);

	  \begin{scope}[yshift=4cm]
          \node[smallvertex] (u1) at (0,0) {$v'_{1,1}$};
          \node[smallvertex] (u2) at (0.5,0.866) {$v'_{2,1}$};
          \node[smallvertex] (u3) at (1,0) {$v'_{5,1}$};
          \node[group] (h1) at (0.5,0.289) {$$};
          \draw [->] (u1) to (u2);
          \draw [->] (u2) to (u3);
          \draw [->] (u3) to (u1);

          \begin{scope}[xshift=3cm]
          \node[smallvertex] (u4) at (0,0) {$v'_{1,2}$};
          \node[smallvertex] (u5) at (0.5,0.866) {$v'_{4,2}$};
          \node[smallvertex] (u6) at (1,0) {$v'_{3,2}$};
          \node[group] (h2) at (0.5,0.289) {$$};
          \draw [->] (u4) to (u5);
          \draw [->] (u5) to (u6);
          \draw [->] (u6) to (u4);
          \end{scope}

          \begin{scope}[xshift=6cm]
          \node[smallvertex] (u7) at (0,0) {$v'_{3,3}$};
          \node[smallvertex] (u8) at (0.5,0.866) {$v'_{4,3}$};
          \node[smallvertex] (u9) at (1,0) {$v'_{5,3}$};
          \node[group] (h3) at (0.5,0.289) {$$};
          \draw [->] (u7) to (u8);
          \draw [->] (u8) to (u9);
          \draw [->] (u9) to (u7);
          \end{scope}

          \begin{scope}[xshift=9cm]
          \node[smallvertex] (u10) at (0,0) {$v'_{1,4}$};
          \node[smallvertex] (u11) at (0.5,0.866) {$v'_{2,4}$};
          \node[smallvertex] (u12) at (1,0) {$v'_{4,4}$};
          \node[group] (h4) at (0.5,0.289) {$$};
          \draw [->] (u10) to (u11);
          \draw [->] (u11) to (u12);
          \draw [->] (u12) to (u10);
          \end{scope}

          \draw [->,>=stealth,thick,double distance=2pt] (h1) to (h2);
          \draw [->,>=stealth,thick,double distance=2pt] (h2) to (h3);
          \draw [->,>=stealth,thick,double distance=2pt] (h3) to (h4);
	  \end{scope}

	  \node[group] (G) at (13,2.5) {$G$};

          \draw [->,>=stealth,thick,double distance=3pt] (g4) to (G);
          \draw [->,>=stealth,thick,double distance=3pt] (G) to (h4);
          \draw [->,red, out=-100, in=100] (u1) to (v1);
          \draw [->,red, out=-135, in=90] (u4) to (v1);
          \draw [->,red] (u10) to (v1);
          \draw [->,red] (u1) to (v4);
          \draw [->,red] (u4) to (v4);
          \draw [->,red] (u10) to (v4);
          \draw [->,red, out = -30, in = 140] (u1) to (v10);
          \draw [->,red, out = -30, in = 130] (u4) to (v10);
          \draw [->,red] (u10) to (v10);

\end{tikzpicture}
\caption{Construction of $T$ from a 3-uniform hypergraph $\Hd$.  (The
  full description of the construction is in the Proof of Theorem
  \ref{thm:2-hardness}.)  The downwards edges in red are drawn
  only for vertex $v_1$ in $\Hd$, but there is an arc from any vertex
  $v'_{a,i}$ towards all vertices $v_{a,j}$ for any $j$. The remaining
  arcs all go upwards from the vertices $v_{a,i}$ towards the vertices
  $v'_{b,j}$ for $a \neq b$.}\label{fig:2-hard}
\end{figure}

\begin{theorem}\label{thm:2-hardness}
  It is \NP-hard to color a 2-colorable tournament with three colors.
\end{theorem}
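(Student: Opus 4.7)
The plan is to reduce from the NP-hard problem of deciding whether a 3-uniform hypergraph is 2-colorable~\cite{lovasz1973coverings}. Given a 3-uniform hypergraph $\Hd$ with vertex set $\{v_1,\ldots,v_n\}$ and hyperedge set $\{e_1,\ldots,e_m\}$, I construct a tournament $T$ as depicted in Figure~\ref{fig:2-hard}. For each hyperedge $e_i = \{v_a, v_b, v_c\}$, create two vertex-disjoint directed triangles: one on $\{v_{a,i}, v_{b,i}, v_{c,i}\}$ and one on the primed copies $\{v'_{a,i}, v'_{b,i}, v'_{c,i}\}$. The unprimed triangles are placed in a chain $T_1 \Rightarrow T_2 \Rightarrow \cdots \Rightarrow T_m$, and analogously for the primed triangles. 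The ``identity'' arcs $v'_{a,i} \to v_{a,j}$ connect, for each hypergraph vertex $v_a$, every primed copy to every unprimed copy; the remaining ``cross'' arcs $v_{a,i} \to v'_{b,j}$ (with $a \neq b$) go from unprimed to primed copies of different hypergraph vertices. A gadget $G$ is inserted between the two chains with $T_m \Rightarrow G \Rightarrow T'_m$ and is designed to couple the colorings of both sides.

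For completeness, given a proper 2-coloring $\phi: V(\Hd) \to \{1,2\}$ of $\Hd$, I define a 2-coloring $\psi$ of $T$ by setting $\psi(v_{a,i}) = \phi(v_a)$ and $\psi(v'_{a,i}) = 3 - \phi(v_a)$, and extending to $G$ in the natural way. Since $\phi$ is a proper hypergraph coloring, each triangle $T_i$ and $T'_i$ receives at least two colors, so no directed 3-cycle is monochromatic. The chain arcs together with the identity and cross arcs induce a total order on each color class consistent with the arc directions, which makes both color classes transitive.

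For soundness, I assume a proper 3-coloring $\psi$ of $T$ and extract a proper 2-coloring of $\Hd$. The key structural claim is that the chain constraints together with the gadget $G$ force each color class to be ``aligned'' across the two chains, so that for every hypergraph vertex $v_a$ the colors assigned to the copies $v_{a,\ast}$ and $v'_{a,\ast}$ are determined by only two of the three colors in a synchronized manner. Defining $\phi(v_a)$ by a fixed projection of $\psi(v_{a,1})$ yields a well-defined map $\phi: V(\Hd) \to \{1,2\}$. Since each triangle $T_i$ is a directed 3-cycle and hence cannot be monochromatic in $\psi$, the corresponding hyperedge $e_i$ of $\Hd$ is non-monochromatic under $\phi$, so $\phi$ is a proper 2-coloring of $\Hd$.

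The main obstacle will be the soundness direction, and in particular verifying that the gadget $G$ forces a 3-coloring of $T$ to behave as a 2-coloring on the hypergraph vertices. The chain structure already imposes strong constraints (each color class must be transitive, hence linearly orderable), but one could imagine ``twisted'' 3-colorings assigning all three colors to different copies of a single hypergraph vertex in ways that do not project to a valid hypergraph 2-coloring. Ruling out such configurations, while keeping the construction 2-colorable whenever $\Hd$ is 2-colorable, is the crux and will rely on a careful case analysis of how three colors can be distributed across consecutive triangles in each chain and across the gadget $G$.
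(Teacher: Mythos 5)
Your construction matches the paper's almost exactly (two chains of per-hyperedge triangles, primed and unprimed copies, identity arcs from primed to unprimed copies of the same hypergraph vertex, cross arcs otherwise, and a middle gadget $G$), and your completeness argument is correct. The gap is in your soundness plan, and it is a substantive one, not a matter of filling in details.

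You propose to extract a proper \emph{2-coloring} of $\Hd$ from a 3-coloring $\psi$ of $T$, reducing from the decision problem ``is $\Hd$ 2-colorable?''. If this were achievable, it would in particular mean that whenever $\Hd$ is not 2-colorable, $T$ is not 3-colorable --- a very strong structural property that your gadget would have to enforce. That is too much to ask: in an arbitrary 3-coloring of $T$ the two chains need not be synchronized, and different copies $v_{a,i}$ of the same hypergraph vertex may legitimately receive different colors. There is no projection from three colors down to two that preserves ``every triangle $T_i$ is non-monochromatic'' in general, because two of the three colors will collapse to one. Your final sentence (``Since each $T_i$ cannot be monochromatic in $\psi$, the hyperedge $e_i$ is non-monochromatic under $\phi$'') silently assumes such a projection exists. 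It does not, and no amount of ``careful case analysis'' of the gadget will make it exist, because once $\Hd$ fails to be 2-colorable you have no control over what 3-colorings of $T$ look like.

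The correct move, which the paper makes, is to relax the target: from a 3-coloring of $T$ one only needs to extract a \emph{6-coloring} of $\Hd$, and one reduces not from deciding 2-colorability of a 3-uniform hypergraph (L\'ovasz) but from the \NP-hardness of coloring a 2-colorable 3-uniform hypergraph with \emph{any} constant number of colors (Dinur--Regev--Smyth), specializing to six. The key structural lemma is then much weaker than synchronization: if $G$ (which is instantiated as the 2-colorability gadget from the basic reduction, so it is 2-colorable iff $\Hd$ is) uses three colors in $\psi$, then for each hypergraph vertex $v_a$, at least one of the two sets $S_a = \{v_{a,i}\}_i$ or $Q_a = \{v'_{a,i}\}_i$ must be monochromatic --- otherwise a shared color between $S_a$ and $Q_a$ together with a vertex of $G$ of that color gives a monochromatic directed triangle. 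Defining $\cch(v_a) = C(S_a)$ when $S_a$ is monochromatic and $\cch(v_a) = C(Q_a)+3$ otherwise yields a 6-coloring of $\Hd$ with no monochromatic hyperedge. (If instead $G$ uses only two colors, a 2-coloring of $\Hd$ is read off directly from $G$.) Your proposal is missing both the weakening of the extraction target from 2 to 6 colors and the source of hardness that makes the weaker extraction sufficient.
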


\begin{proof}
\cite{dinur2005hardness} proved that it is \NP-hard to color a
3-uniform 2-colorable hypergraph with $c$ colors for any constant $c$.
In particular, it is \NP-hard to color a 3-uniform 2-colorable
hypergraph with six colors.  We will show that if we can efficiently
color a 2-colorable tournament with three colors, then we can efficiently
color a 2-colorable 3-uniform hypergraph with six colors, which we have
just noted is an \NP-hard problem.

Let $\Hd = (\V, \Ed)$ be a 3-uniform hypergraph.  In
\cite{fox2019removal} and \cite{chen2007min}, it is shown how to
efficiently construct a tournament $G$ such that $G$ is 2-colorable
iff $\Hd$ is $2$-colorable.  Moreover, given a 2-coloring of $G$, one
can efficiently construct a 2-coloring of $\Hd$.  Details can be found
in Appendix \ref{sec:base_hardness}.  Using this result, we will show
how to efficiently construct a new tournament $T=(V,A)$ such
that if $\Hd$ is 2-colorable, $T$ is also $2$-colorable.  Then we will
show that if we can efficiently color $T$ with three colors, we can
efficiently color $\Hd$ with six colors, which establishes the theorem.

We will start by defining a subtournament $H=(V_1,A_1)$ of $T$.  We
fix an arbitrary enumeration of the hyperedges $\Ed$.  For each edge,
$e_i \in \Ed$, where $e_i=(v_a,v_b,v_c)$, we add three vertices
$v_{a,i}$, $v_{b,i}$ and $v_{c,i}$ to $V_1$, and add to $A_1$ the arcs
$(v_{a,i},v_{b,i})$, $(v_{b,i},v_{c,i})$ and $(v_{c,i},v_{a,i})$ such
that these three vertices form a directed triangle.  We then add the
arcs from all the vertices $v_{a,i}$ towards all the vertices
$v_{b,j}$ for any $a,b,i,j$ with $i<j$.  We make a copy of $H$, that
we call $H'=(V_2,A_2)$, and add both to $T$.  We then add the
tournament $G$, and orient all arcs from vertices in $V_1$ towards
vertices of $G$, and all arcs from vertices of $G$ towards vertices in
$V_2$.  The only arcs we still need to orient are those between $V_1$
and $V_2$.  For this, we look at the vertices in $\V$ from which the
vertices of $T$ are derived; for $v_{a,i} \in V_1$ and $v'_{b,j} \in
V_2$, we add an arc from $v'_{b,j}$ to $v_{a,i}$ iff $a=b$ (i.e., if
they are derived from the same vertex of $\Hd$), and we add an arc
from $v_{a,i}$ to $v'_{b,j}$ otherwise. This completes the definition
of $T$, which is shown in Figure \ref{fig:2-hard}.

We will now establish that if $\Hd$ is $2$-colorable, so is $T$. Given
a $2$-coloring of $\Hd$, give all the vertices of $V_1$ the same color
as the vertex of $\Hd$ from which they are derived, and those in $V_2$
the opposite color of the vertex of $\Hd$ from which they are derived.
Finally color $G$ with the same two colors. Then any arc that goes
from $V_2$ to $V_1$ will be bicolored, and since all arcs are
oriented from $V_1$ towards $G$ and from $G$ towards $V_2$, there can
only be monochromatic triangles inside $V_1$, $V_2$ or $G$. However,
$G$ is properly $2$-colored and thus does not have any monochromatic
triangles. Furthermore, every triangle in $V_1$ and $V_2$ represents a
hyperedge of $\Hd$ and must therefore contain two vertices of
different colors.

It remains to show that if we can efficiently find a 3-coloring of
$T$, then we can construct a 6-coloring of $\Hd$.  Consider such a
3-coloring $C$ of $T$.  Notice that if $G$ uses two colors in $C$,
then we can recover a 2-coloring of $\Hd$.  So we can assume that $G$
uses three colors in $C$.  For every vertex $v_a \in \V$, consider the
set of vertices $S_a=\{v_{a,i}~ |~ \forall e_i \in \Ed \}$ and
$Q_a=\{v'_{a,i} ~|~ \forall e_i \in \Ed \}$.  A key property of our
construction is that in any $3$-coloring of $T$ in which $G$ uses
three colors, for each $v_a \in \V$, the set $S_a$ or the set $Q_a$
must be monochromatic. To see this, notice that if any vertex of $S_a$
has the same color as any vertex of $Q_a$, then they will form a
monochromatic triangle with a third vertex from $G$ that has the same
color (since $G$ is colored with at least three colors).  So if $S_a$
and $Q_a$ each use at least two out of three colors, then at least one
color appears in both $S_a$ and $Q_a$ resulting in a monochromatic
triangle.

Next we define a coloring $\cch$ of $\Hd$ as follows.  If $S_a$ is
monochromatic, then set $\cch(v_a)=C(S_a)$.  Otherwise, set
$\cch(v_a)=C(Q_a)+3$.  Now consider any hyperedge $(v_a,v_b,v_c)$ in
$\Ed$. If the three sets $S_a$, $S_b$ and $S_c$ are monochromatic,
then since there is a directed triangle $(v_{a,j},v_{b,j},v_{c,j})$ in
$H$ for some $j$, the three vertices cannot have the same color in
$C$, so they also do not all have the same color in $\cch$.  If none
of the three sets $S_a$, $S_b$ and $S_c$ are monochromatic, then the
sets $Q_a, Q_b$ and $Q_c$ are each monochromatic, so the same argument
applies.  Finally, without loss of generality we can suppose $S_a$ is
monochromatic but not $S_b$.  Then $v_a$ and $v_b$ do not have the
same color in $\cch$ by definition.  Therefore, no hyperedge of $\Hd$
can be monochromatic, and thus $\cch$ is a $6$-coloring of $\Hd$.
\end{proof}

Our goal is now to extend this hardness result to $k$-colorable
tournaments.  As noted in the previous proof, the main theorem of
\cite{dinur2005hardness} says that for any integer $c \geq 2$, it is
\NP-hard to color a 2-colorable 3-uniform hypergraph $\Hd$ with $c$
colors.  In fact, what they prove is that it is \NP-hard to decide
between the two cases: $\chi(\Hd) =2$ and $\chi(\Hd) > c$.  We use
this latter decision version to prove the following theorem.

\begin{theorem}\label{thm:hardness}
For any fixed positive integer $k \geq 2$, it is \NP-hard to color a $k$-colorable tournament with $2k-1$ colors.
\end{theorem}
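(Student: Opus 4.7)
My plan is to reduce from the gap version of 3-uniform hypergraph 2-coloring established in \cite{dinur2005hardness}: for any constant $c$, it is \NP-hard to distinguish $\chi(\Hd) \leq 2$ from $\chi(\Hd) > c$. Given $\Hd=(\V,\Ed)$ and $k \geq 2$, I construct a tournament $T_k$ by taking $k$ copies $H_1,\ldots,H_k$ of the triangle gadget $H$ of Theorem \ref{thm:2-hardness} (each copy containing, for every hyperedge $e_j = (v_a,v_b,v_c) \in \Ed$, a directed triangle on vertices $v_{a,i}^{(j)}$, $v_{b,i}^{(j)}$, $v_{c,i}^{(j)}$), inserting a fresh copy $G_{i,j}$ of the connector tournament $G$ of Theorem \ref{thm:2-hardness} between every pair $i<j$ oriented so that $V(H_i) \Rightarrow V(G_{i,j}) \Rightarrow V(H_j)$, and orienting cross-copy arcs between $H_i,H_j$ ($i<j$) as in Theorem \ref{thm:2-hardness}: from $v_{b,j}$ to $v_{a,i}$ when $a=b$ and from $v_{a,i}$ to $v_{b,j}$ otherwise. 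Remaining arcs (between different connectors, or between a connector $G_{i,j}$ and an off-pair copy $H_\ell$) are oriented by a fixed rule (e.g., lexicographic on connector indices) chosen so no new monochromatic triangle arises.

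For completeness, given a 2-coloring $\phi\colon \V\to\{0,1\}$ of $\Hd$, I color $v_{a,i}$ with $(i+\phi(v_a))\bmod k$ and each $G_{i,j}$ with a proper 2-coloring using a suitably chosen 2-element subset of $[k]$. Each $H_i$ uses palette $\{i,(i+1)\bmod k\}$, and since $\phi$ is a proper hypergraph 2-coloring, every within-copy triangle is bicolored. Any cross-copy directed triangle necessarily traverses a back-arc between two vertices sharing the same underlying hypergraph vertex $v_a$, and the cyclic shift forces those two vertices to receive distinct colors, so no such triangle is monochromatic. Triangles of the form $v \in H_i$, $g \in G_{i,j}$, $w \in H_j$ are bicolored for the same reason, and the lexicographic rule is designed to prevent all remaining mixed triangles from becoming monochromatic, yielding a proper $k$-coloring of $T_k$.

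For soundness, suppose $T_k$ admits a $(2k-1)$-coloring, and let $S_{a,i}$ denote the set of colors appearing on the vertices of $H_i$ whose underlying hypergraph vertex is $v_a$. Applying the argument of Theorem \ref{thm:2-hardness} to each triple $(H_i,G_{i,j},H_j)$: if some $G_{i,j}$ restricts to fewer than three colors, we obtain a 2-coloring of $G$ and hence of $\Hd$, contradicting the NO hypothesis; otherwise, for each pair $i < j$ and each color $c$ used by $G_{i,j}$, the presence of a monochromatic triangle through $g \in G_{i,j}$ would force at most one of $S_{a,i}$ and $S_{a,j}$ to contain $c$, so $S_{a,i} \cap S_{a,j} = \emptyset$ whenever $G_{i,j}$ spans all $2k-1$ colors. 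A pigeonhole over the $k$ pairwise-disjoint sets $S_{a,1},\ldots,S_{a,k}$, whose union is contained in $[2k-1]$, then produces an index $i(a)$ with $|S_{a,i(a)}|=1$. Assigning $v_a$ the label $(i(a), \text{color of } S_{a,i(a)})$ gives a proper coloring of $\Hd$ with at most $k(2k-1)$ colors, and choosing $c > k(2k-1)$ in the Dinur gap contradicts $\chi(\Hd) > c$.

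The main obstacle is designing the orientations of the ``off-pair'' arcs (between distinct connectors $G_{i,j}$, $G_{i',j'}$ and between $G_{i,j}$ and a copy $H_\ell$ with $\ell \notin \{i,j\}$) so that (i) the cyclic-shift YES coloring remains proper when the connector palettes are fixed, and (ii) the pairwise disjointness $S_{a,i} \cap S_{a,j} = \emptyset$ is still forced in the NO direction for every pair. Either a careful case analysis following a lexicographic orientation or a more symmetric scheme will likely be needed; making the final pigeonhole bound exactly $k(2k-1)$ (thereby matching the $2k-1$ hardness threshold rather than some weaker $2k$ threshold) is the key technical point.
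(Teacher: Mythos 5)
Your approach---building a single ``flat'' tournament out of $k$ copies of the gadget $H$ with pairwise connectors $G_{i,j}$---is genuinely different from the paper's, which instead uses an iterative composition via the $\Delta$ operation. The paper proves a standalone Lemma~\ref{lem:gadget}: from tournaments $R_1,R_2$ with chromatic numbers $(a,b)$ in the YES case and $(\geq c,\geq d)$ in the NO case, one can construct $R'$ with $\vec\chi(R')=a+b$ or $\vec\chi(R')\geq c+d$ respectively, by repeatedly applying the $\Delta(R_1,R_2,\cdot)$ composition (Claims~\ref{clm:rec_gadget1} and~\ref{clm:rec_gadget2}). It then inducts from $T_2$ (the construction of Theorem~\ref{thm:2-hardness}) with a divide-and-conquer split $T_{k+1}$ from $T_{\lfloor(k+1)/2\rfloor}$ and $T_{\lceil(k+1)/2\rceil}$. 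The $\Delta$ construction has a completely determined orientation by fiat, so the problems you run into simply do not arise.

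The gaps you flag in your own write-up are real and not minor. First, the ``off-pair'' arc orientations (between connectors $G_{i,j}$ and $G_{i',j'}$, or between a connector $G_{i,j}$ and a copy $H_\ell$ with $\ell\notin\{i,j\}$) are load-bearing on both sides of the reduction; a blanket ``lexicographic rule chosen so no new monochromatic triangle arises'' is not a construction, and it is not obvious that any single fixed rule works for both the completeness and soundness directions simultaneously. Second, your soundness argument has a logical hole: you consider the cases ``$G_{i,j}$ uses fewer than three colors'' (extract a 2-coloring of $\Hd$) versus ``$G_{i,j}$ spans all $2k-1$ colors'' (derive $S_{a,i}\cap S_{a,j}=\emptyset$), but when $G_{i,j}$ uses, say, $k$ colors, neither conclusion follows, and the pigeonhole producing a singleton $S_{a,i(a)}$ collapses. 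The paper's $\Delta$-based claims replace precisely this delicate counting: Claim~\ref{clm:rec_gadget2} guarantees that if $R_3$ already needs $e<c+d$ colors, then $\Delta(R_1,R_2,R_3)$ needs at least $e+1$, with no case analysis over which colors a connector happens to use. If you want to salvage your route, you would need either to force every $G_{i,j}$ to use all $2k-1$ colors (which a $(2k-1)$-chromatic connector does not ensure under a $(2k-1)$-coloring of $T_k$) or to find a different argument handling intermediate color counts; the paper's iterative $\Delta$ construction is essentially a clean way of packaging that argument.
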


Our proof uses an iterative construction given in the following lemma.

\begin{lemma}\label{lem:gadget}
Let $a,b,c,d$ be positive integers such that $a+b < c+d$.  Let $R_1$
and $R_2$ be two tournaments such that either (i) $\vec\chi(R_1) = a$
and $\vec\chi(R_2) = b$, or (ii) $\vec\chi(R_1) \geq c$ and
$\vec\chi(R_2) \geq d$.  Then we can efficiently construct a
tournament $R'$ with $\vec\chi(R') = a+b$ in case (i), or
$\vec\chi(R') \geq c+d$ in case (ii). 
\end{lemma}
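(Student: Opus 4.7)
The plan is to build $R'$ by an iterated $\Delta$-product. I set $Y_0$ to be a single vertex, define recursively $Y_{k+1}:=\Delta(R_1,R_2,Y_k)$ (using the three-cyclic-sum operation from Section~\ref{sec:notation}), and take $R':=Y_K$ for $K=\lceil\log_2(c+d)\rceil+O(1)$.

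The first step is to establish the chromatic identity
\[
\vec\chi\bigl(\Delta(T_1,T_2,T_3)\bigr) \;=\; \max\Bigl\{\,\bigl\lceil\tfrac{\vec\chi(T_1)+\vec\chi(T_2)+\vec\chi(T_3)}{2}\bigr\rceil,\ \vec\chi(T_1),\ \vec\chi(T_2),\ \vec\chi(T_3)\Bigr\}.
\]
The structural key is that any directed triangle in $\Delta(T_1,T_2,T_3)$ either lies entirely in a single $T_i$ or uses exactly one vertex from each of the three parts; hence no color class in a proper coloring can meet all three parts simultaneously. A short LP/counting argument on how each color distributes over the parts gives the lower bound, and an explicit coloring that assigns each color to exactly one of the three pairs of parts attains it.

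Given the identity, I analyze the recursion $\vec\chi(Y_{k+1})=\max\{\lceil(\vec\chi(R_1)+\vec\chi(R_2)+\vec\chi(Y_k))/2\rceil,\vec\chi(Y_k)\}$. Its fixed point is $\vec\chi(R_1)+\vec\chi(R_2)$, and the error $\vec\chi(R_1)+\vec\chi(R_2)-\vec\chi(Y_k)$ at most halves at each step. In case~(i), this yields $\vec\chi(Y_k)=a+b$ for all sufficiently large $k$; in case~(ii), starting from $\vec\chi(Y_0)=1$ and iterating $K=O(\log(c+d))$ times suffices to reach $\vec\chi(Y_K)\ge c+d$. Since $|Y_K|=O(K(|R_1|+|R_2|))$, the construction runs in polynomial time.

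The main obstacle I expect is proving the chromatic identity for $\Delta$: matching the LP lower bound with an explicit integral coloring requires handling the ceiling and the case where some $\vec\chi(T_i)$ dominates the half-sum. Once the identity is established, the recursion analysis is routine, and the two conclusions of the lemma follow together from the single fact that $\vec\chi(Y_K)$ tracks $\vec\chi(R_1)+\vec\chi(R_2)$ from below, equaling $a+b$ in case~(i) and reaching or exceeding $c+d$ in case~(ii).
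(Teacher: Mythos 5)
Your approach is correct but genuinely different from the paper's. The paper avoids the full chromatic formula for $\Delta$: it proves two weaker claims (that $\vec\chi(\Delta(R_1,R_2,R_3))=a+b$ when $\vec\chi(R_3)=a+b$ and $\vec\chi(R_1)=a$, $\vec\chi(R_2)=b$; and that $\vec\chi(\Delta(R_1,R_2,R_3))\geq e+1$ whenever $\vec\chi(R_3)\geq e<c+d$, $\vec\chi(R_1)\geq c$, $\vec\chi(R_2)\geq d$), starts from an auxiliary tournament $R_{a+b}'$ of chromatic number exactly $a+b$ (built separately, of size $2^{a+b}-1$), and iterates $R_{k+1}'=\Delta(R_1,R_2,R_k')$ for $c+d-(a+b)$ steps, gaining one unit of lower bound per step in case (ii). You instead prove the sharp identity $\vec\chi(\Delta(T_1,T_2,T_3))=\max\{\lceil(\sum_i\vec\chi(T_i))/2\rceil,\max_i\vec\chi(T_i)\}$ — which is indeed correct: the lower bound follows exactly as you sketch from the fact that no color class meets all three parts, and the upper bound is realized by assigning each color to one pair of parts, solving a small covering problem on $K_3$ with demands $\vec\chi(T_1),\vec\chi(T_2),\vec\chi(T_3)$ — and then iterate from a single vertex, with the gap to $\vec\chi(R_1)+\vec\chi(R_2)$ halving each step, so $O(\log(c+d))$ rounds suffice. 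Your version is arguably cleaner (no auxiliary base tournament) and gives a smaller construction. Two small points to tidy up: your stated recursion should keep $\vec\chi(R_1),\vec\chi(R_2)$ inside the $\max$ (they can dominate $\lceil(\vec\chi(R_1)+\vec\chi(R_2)+\vec\chi(Y_k))/2\rceil$ when $\vec\chi(Y_k)$ is small), though this only strengthens the lower bound and does not affect convergence; and in case (i) you should note explicitly that the sequence $\vec\chi(Y_k)$ is monotone nondecreasing and bounded above by $a+b$, which gives the required exact equality once it first hits $a+b$.
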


\begin{proof}
The proof of the lemma uses the next two claims.

\begin{claim}\label{clm:rec_gadget1}
Let $a,b$ be positive integers.  Let $R_1$, $R_2$ and $R_3$ be three
tournaments such that $\vec\chi(R_1) = a$, $\vec\chi(R_2) = b$ and
$\vec\chi(R_3) = a+b$.  Then the tournament $R' = \Delta(R_1,R_2,R_3)$
has $\vec\chi(R') = a+b$.
\end{claim}

\begin{cproof}
By assumption, we can color $R_1$ with $a$ colors, $R_2$ with
$b$ (different) colors and $R_3$ with the same set of $a+b$ colors.
This dicoloring of $R'$ is proper since there is no monochromatic
triangle inside $R_1$, $R_2$ or $R_3$, and any triangle containing
vertices from $R_1$ and $R_2$ will have at least two different colors.
\end{cproof}

\begin{claim}\label{clm:rec_gadget2}
  Let $c,d,e$ be positive integers such that $e < c+d$.  Let $R_1$,
  $R_2$ and $R_3$ be three tournaments such that $\vec\chi(R_1) \geq
  c$, $\vec\chi(R_2) \geq d$ and $\vec\chi(R_3) \geq e$.  Then the
  tournament $R' = \Delta(R_1,R_2,R_3)$ has $\vec\chi(R') \geq e+1$.
\end{claim}

\begin{cproof}
Suppose $R'$ has a coloring with $e$ colors.  Since $c+d>e$, $R_1$ and
$R_2$ must share at least one color.  Furthermore, all $e$ colors are
used in $R_3$ by assumption.  So there must be a monochromatic
triangle since every triplet $(u,v,w)$ with $u \in R_1$, $v \in R_2$,
$w \in R_3$ forms a directed triangle. Thus, $\vec\chi(R') \geq e+1$.
\end{cproof}

We are now ready to prove the lemma, which we prove
by induction on $k$, where $a+b \leq k \leq c+d$.  For
the base case, when $k=a+b$, let $R_3$ be any tournament with
$\vec\chi(R_3) = a+b$.\footnote{For any $i$, we can construct a
tournament $S_i$ with $\vec\chi{(S_i)} = i$ on $2^i-1$ vertices.  To see
this, let $S_1$ be a single vertex and let $S_{i+1} = \Delta(S_1,S_i,S_i)$.}  By Claims \ref{clm:rec_gadget1} and
\ref{clm:rec_gadget2}, the tournament $R' = \Delta(R_1,R_2,R_3)$ has
$\vec\chi(R') = a+b$ in case (i) or $\vec\chi(R') \geq a+b+1$ in case (ii).

Now set $R_k' = R'$ and let $R_{k+1}' = \Delta(R_1,R_2,R_k')$.  By the
inductive hypothesis, we assume that $\vec\chi(R_k') = a+b$ in case
  (i) or $\vec\chi(R_k') \geq k+1$ in case (ii).  Thus, we will have
  $\vec\chi(R_{k+1}') = a+b$ in case (i) (by Claim \ref{clm:rec_gadget1}) or $\vec\chi(R_{k+1}') \geq
  k+2$ in case (ii) (by Claim \ref{clm:rec_gadget2}).  Finally, we set $R' = R_{c+d}'$.

Every iteration of the construction can be done in
time polynomial in the size of $R_1$ and $R_2$ for fixed values of $a,b,c,d$.
There are at most $c+d$ iterations.  Thus, $R'$ can be
constructed in time polynomial in the size of $R_1$ and $R_2$ and has size $|V(R')| \leq (c+d) \cdot
(|V(R_1)| + |V(R_2)|) + |V(R_{a+b})|$.
\end{proof}

We are now ready to prove Theorem \ref{thm:hardness}.

\begin{proof}[Proof of Theorem \ref{thm:hardness}]
We will show that if we can color a $k$-colorable tournament with
$2k-1$ colors, then we can decide whether a given 3-uniform hypergraph
$\Hd$ has $\chi(\Hd) = 2$ or $\chi(\Hd) \geq 7$, which we have noted
is an \NP-hard problem.

Given a 3-uniform hypergraph $\Hd$, we will prove by strong induction
that for every fixed $k$, we can efficiently construct a tournament
$T_k$ whose size is polynomial in $|V(\Hd)|$, such that if
$\chi(\Hd)=2$ then $\vec\chi(T_k) = k$, and if $\chi(\Hd) \geq 7$ then
$\vec\chi(T_k) \geq 2k$.

For $k=2$, we refer to the tournament, call it $T_2$, constructed in
the proof of Theorem \ref{thm:2-hardness}.  We showed that if
$\chi(\Hd) = 2$, then $\vec\chi(T_2) = 2$ and if $\chi(\Hd) \geq 7$,
then $\vec\chi(T_2) \geq 4$.  For $k=3$, let $T_3 = \Delta(T_2, T_2,
T_2)$.  If $\chi(\Hd)=2$, coloring the first copy with colors $1$,
$2$, the second with colors $2$ and $3$, and the third with colors $3$
and $1$ yields a 3-coloring.  The tournament $T_3$ is not 2-colorable
since in any 2-coloring of $T_3$, each copy of $T_2$ must use the same
two colors, which would result in a monochromatic directed triangle.
If $\chi(\Hd)\geq 7$, as noted, $T_2$ has chromatic number at least
$4$. Therefore, in any $5$-coloring of $T_3$, there are two colors
that must be used in each copy of $T_2$, which would lead to a
monochromatic directed triangle. Therefore, $\vec\chi(T_3) \geq 6$.

Now our induction hypothesis is: For every $h$ such that $3 \leq h
\leq k$, there exists a tournament $T_h$ of size polynomial in $h$ and
$|V(\Hd)|$ such that if $\chi(\Hd)=2$, $\vec\chi(T_h) = h$, and if
$\chi(\Hd) \geq 7$, $\chi(T_h) \geq 2h$.  We will show that there
exists a tournament $T_{k+1}$ of size polynomial in $|V(\Hd)|$ such
that if $\chi(\Hd)=2$, $\vec\chi(T) = k+1$, and if $\chi(\Hd) \geq 7$,
$\vec\chi(T) \geq 2 (k+1)$.

Consider the two tournaments $T_{\lfloor \frac{k+1}{2} \rfloor}$,
$T_{\lceil \frac{k+1}{2} \rceil}$, which exist by the induction
hypothesis.  These obey the conditions of Lemma \ref{lem:gadget},
where $a = {\lfloor \frac{k+1}{2} \rfloor}$, $b = {\lceil
  \frac{k+1}{2} \rceil}$, $c = 2a$ and $d=2b$.  Thus, by Lemma
\ref{lem:gadget}, there exists a tournament, $T_{k+1}$, such that if
$\chi(\Hd)=2$, $\vec\chi(T_{k+1}) = k+1$, and if $\chi(\Hd) \geq 7$,
$\vec\chi(T_{k+1}) \geq 2(\lceil \frac{k+1}{2} \rceil + \lfloor
\frac{k+1}{2} \rfloor) = 2(k+1)$.  This concludes the induction.
\end{proof}

\subsection{Reduction from Coloring Graphs to Coloring Tournaments}

In Section \ref{sec:3color}, we showed that if we can color a
3-colorable graph with $k$ colors, then we can color a 3-colorable
tournament with $50k$ colors.  In this section, we give a reduction in
the other direction. Specifically, for $\ell > k$, we show that the
problem of deciding if a graph is $k$-colorable or has chromatic
number at least $\ell$ has a polynomial-time reduction to the problem
of deciding if a tournament is $k$-colorable or has dichromatic number
at least $\ell$.  A corollary of this reduction is hardness of
coloring tournaments under the $d$-To-1 Conjecture of
Khot~\cite{khot2002power}; \cite{guruswami2020d} showed that assuming
the $d$-To-1 Conjecture, it is hard to color 3-colorable graphs with
$O(1)$ colors, and using our reduction, we can extend this hardness to
tournaments.

\begin{theorem}\label{thm:graph_to_tourn}
Suppose that for any constants $\ell > k \geq 3$, and for any
tournament $T$, we can efficiently decide if $\vec\chi(T) = k$ or
$\vec\chi(T) > \ell$.  Then for any graph $G$, we can
efficiently decide if $\chi(G) = k$ or $\chi(G) > \ell$.
\end{theorem}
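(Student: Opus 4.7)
The plan is to give a polynomial-time reduction from graph coloring to tournament coloring, then invoke the hypothesized tournament oracle on the constructed tournament. Given $G = (V, E)$ and constants $\ell > k \geq 3$, the goal is to construct $T$ such that $\chi(G) \leq k \Rightarrow \vec\chi(T) \leq k'$ and $\chi(G) > \ell \Rightarrow \vec\chi(T) > \ell'$ for constants $k' < \ell'$; the oracle on $T$ with parameters $(k',\ell')$ then decides the problem for $G$.

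The core of the construction is a per-edge gadget. For each $e = \{u, v\} \in E$, add a fresh copy $R_e$ of a fixed tournament $R$ with $\vec\chi(R) = \ell$ and arrange $u, v, R_e$ as in $\Delta(\{u\}, \{v\}, R_e)$: arc $u \to v$, all arcs from $v$ to $R_e$, and all arcs from $R_e$ to $u$. All arcs of $T$ not specified by some gadget (between auxiliary vertices of distinct gadgets, between auxiliary vertices and non-incident graph vertices, and between non-adjacent pairs of graph vertices) are oriented according to a fixed linear order that places graph vertices before auxiliary vertices. The lower bound is direct: in any $\ell$-dicoloring of $T$, each $R_e$ must use all $\ell$ colors (since $\vec\chi(R_e) = \ell$), so the triangles $u \to v \to r \to u$ for $r \in R_e$ force $c_u \neq c_v$. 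Hence any valid $\ell$-dicoloring of $T$ restricts to a proper $\ell$-coloring of $G$, which does not exist when $\chi(G) > \ell$, yielding $\vec\chi(T) > \ell$. For the upper bound, extend a proper $k$-coloring of $G$ to $T$ by dicoloring each $R_e$ with $\ell$ colors chosen from an expanded palette that omits the color of the smaller endpoint of $e$; this defuses the ``extra'' monochromatic-triangle constraints created by the linear-order orientation, and a routine case analysis over all triangles gives $\vec\chi(T) \leq \ell + 1$.

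The resulting gap between yes and no cases is narrow, so to produce the strict separation $k' < \ell'$ required by the oracle we amplify by iteratively applying Lemma \ref{lem:gadget} to $T$ with itself: this scales both the yes- and no-case bounds but introduces an additive separation that can be grown until the oracle can distinguish them, using only a constant number of iterations (in $k$ and $\ell$). The main obstacle is precisely maintaining this strict gap: the linear-order orientation produces extra directed triangles beyond the gadget triangles (of the form $(u, v_2, r)$ with $u$ the smaller endpoint of $e$, $v_2 \neq v$ a graph vertex, and $r \in R_e$), and the coloring constraints they impose threaten to push the yes-case bound up to meet the no-case bound. Overcoming this requires either a more refined orientation of non-gadget arcs (e.g., a layered ordering that eliminates such triangles) or a careful application of the $\Delta$-amplification of Lemma \ref{lem:gadget} calibrated so that the gap between $\vec\chi(R_1) = a$ and $\vec\chi(R_1) \geq c$ is strictly preserved in the combined tournament at each step.
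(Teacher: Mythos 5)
Your per-edge gadget and linear-order orientation do not produce a gap the oracle can exploit, and you are candid about this: in the no case ($\chi(G) > \ell$) you get $\vec\chi(T) \geq \ell+1$, while in the yes case ($\chi(G) = k$) your best upper bound is also $\vec\chi(T) \leq \ell+1$, because the non-gadget arcs forced by the linear order create additional directed triangles (e.g., $u \to v_2 \to r \to u$) that cannot all be defused by a palette trick. These two bounds coincide, so no choice of oracle parameters $(k',\ell')$ with $\ell' > k'$ separates them. The amplification step you invoke cannot repair this: Lemma~\ref{lem:gadget} explicitly requires $a+b < c+d$, and starting from $a=b=c=d=\ell+1$ one has $a+b = c+d$, so the lemma's hypothesis fails. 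A multiplicative or additive gap cannot be manufactured out of a zero gap by $\Delta$-amplification; the lemma only \emph{widens} an existing strict gap. Your two suggested escape routes (a ``layered ordering'' or a ``calibrated'' application of the lemma) are the crux of the matter, and neither is developed.

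The paper's proof avoids this problem by a different construction (Lemma~\ref{lem:induction}). Rather than attaching a separate high-chromatic gadget $R_e$ to each edge, it places the vertices $v_1,\ldots,v_{n_G}$ of $G$ on a line with $G$ as the \emph{backedge graph} (forward arcs for non-edges, backward arcs for edges), and inserts a fresh copy $T_i$ of a tournament $T$ \emph{between} consecutive positions $v_i$ and $v_{i+1}$, with all arcs through the $T_i$'s and between them going forward. The only backward arcs in the whole tournament are then exactly the $G$-edges. Consequently, a proper $k$-coloring of $G$ together with a $k$-coloring of $T$ yields a proper $k$-dicoloring of the construction (every backward arc is bichromatic, so no directed triangle is monochromatic), i.e., the yes case stays at exactly $k$ and never creeps up toward $\ell$. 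For the no case, one argues inductively on $c$: a monochromatic $G$-edge $v_iv_j$ forms directed triangles with every vertex of $T_i$, and if $T_i$ needs all $c$ colors, one of them collides, forcing $c+1$ colors. Iterating from $c=k$ up to $c=\ell$ (or $\ell+1$) gives the required gap between the $=k$ and $>\ell$ cases, all in one pass without any gap amplification. The essential idea you are missing is this interleaved-on-a-line, backedge-graph layout that ensures the yes-case chromatic number is pinned at $k$.
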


We start by proving the following lemma that presents the building
block of the reduction.

\begin{lemma}\label{lem:induction}
  Let $c \geq 3$ be an integer, 
let $G=(V_G,E_G)$ be a graph and let $T=(V_T,A_T)$ a tournament such that
$\vec\chi(T)=k$ when $\chi(G) = k$, and $\vec\chi(T) \geq
min(\chi(G),c)$ when $\chi(G) > k$.  We can build a new tournament
$U=(V_U,A_U)$ such that $\vec\chi(U) = k$ when $\chi(G) = k$, and
$\vec\chi(U)\geq min(\chi(G),c+1)$ when $\chi(G) > k$.
\end{lemma}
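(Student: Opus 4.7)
The plan is to construct $U$ by augmenting $T$ with a small gadget built from the graph $G$, designed so that the dichromatic number of $U$ is boosted by one exactly when $\chi(G) > c$, while remaining equal to $k$ when $\chi(G) = k$. The construction must have size polynomial in $|V_T|$ and $|V_G|$ so that it can be iterated by Theorem \ref{thm:graph_to_tourn}. Concretely, I would add to $V_T$ a set of new vertices indexed by $V_G$ (say one or two vertices per vertex of $G$), with arcs among themselves forming a tournament extension of the edge set of $G$, and arcs between these new vertices and $V_T$ oriented so as to create carefully controlled directed triangles that span both the graph side and the tournament side.

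For the easy case $\chi(G) = k$, I would take a $k$-coloring of $G$ and a $k$-coloring of $T$, color each new gadget vertex with the color of its corresponding vertex in $G$, and color $V_T$ with its own coloring, using the same palette of $k$ colors. The orientations of the cross-arcs are chosen precisely so that no monochromatic directed triangle arises in this combined coloring. Since $U$ contains $T$ as a subtournament, $\vec\chi(U) \geq \vec\chi(T) = k$, and so $\vec\chi(U) = k$.

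For the hard case $\chi(G) > k$, the subcase $\chi(G) \leq c$ is immediate: $U \supseteq T$ gives $\vec\chi(U) \geq \vec\chi(T) \geq \min(\chi(G),c) = \chi(G) = \min(\chi(G),c+1)$. The main subcase is $\chi(G) \geq c+1$, where we must show $\vec\chi(U) \geq c+1$. Suppose for contradiction that $U$ admits a $c$-coloring. Since $\vec\chi(T) \geq c$ in this subcase, every one of the $c$ colors must appear on some vertex of the $T$-part. Restricting the coloring to the gadget vertices yields an induced coloring on $V_G$; the key claim is that this coloring is proper, i.e., no edge of $G$ is monochromatic. If some edge $\{u,v\} \in E_G$ had its two gadget vertices receiving the same color, the cross-arcs together with a vertex of that same color in the $T$-part would close a monochromatic directed triangle in $U$, a contradiction. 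This gives a proper $c$-coloring of $G$, contradicting $\chi(G) \geq c+1$. The main obstacle, and the heart of the proof, is designing the orientation of the cross-arcs so that both directions succeed simultaneously: the gadget must be rigid enough to force the extraction argument in the hard case, yet flexible enough to admit a $k$-coloring in the easy case. This balancing act is in the spirit of the $\Delta$-style constructions of Lemma \ref{lem:gadget}, with the graph $G$ now providing the combinatorial skeleton that is grafted onto the dichromatic structure already carried by $T$.
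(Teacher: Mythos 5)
Your high-level plan matches the spirit of the paper's reduction (graft $G$ onto copies of $T$, read off a proper coloring of $G$ from a $c$-coloring of $U$), but the actual gadget you sketch does not work, and you yourself flag the missing piece (``The main obstacle, and the heart of the proof, is designing the orientation of the cross-arcs'') without supplying it. The obstacle is real: with a \emph{single} copy of $T$ and one gadget vertex $g_u$ per $u\in V_G$, you need, for every edge $(u,v)\in E_G$, some vertex of $T$ that forms a directed triangle with $g_u$ and $g_v$. But the orientations of the arcs between $\{g_u\}$ and $V_T$ cannot be chosen consistently: if $g_u\Rightarrow T$ is needed to close a triangle on edge $(u,v)$ and $T\Rightarrow g_v$ is needed for the same edge, then edge $(v,w)$ in turn demands $g_v\Rightarrow T$, a contradiction. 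So one copy of $T$ (or ``one or two vertices per vertex of $G$'') is not enough; the easy-direction argument is also incomplete, since nothing rules out monochromatic triangles spanning two gadget vertices and a vertex of $T$.

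The paper resolves this by using $|V_G|-1$ disjoint copies $T_1,\dots,T_{n_G-1}$ of $T$, interleaved with the graph vertices in a linear order $v_1, T_1, v_2, T_2, \dots, T_{n_G-1}, v_{n_G}$. All arcs follow this order (forward) except that for each edge $(v_i,v_j)\in E_G$ with $i<j$, the arc between $v_i$ and $v_j$ is reversed; this makes $G$ the ``backedge graph''. This gives exactly what you need in both directions: in the easy case the only backward arcs are $G$-edges (bicolored under a proper $k$-coloring of $G$), so monochromatic triangles can only sit inside some copy $T_i$, which is properly $k$-colored; in the hard case, if $(v_i,v_j)$ with $i<j$ is monochromatic, the copy $T_i$ sits strictly between them, so $v_i\Rightarrow T_i\Rightarrow v_j$ and $v_j\to v_i$, and since $\vec\chi(T_i)\ge c$ every color, in particular the color of $v_i$ and $v_j$, appears in $T_i$, closing a monochromatic triangle. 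You should make the multiple copies of $T$ and the explicit backedge orientation part of your construction; without them the argument does not go through.
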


\begin{figure}
	\centering
\begin{tikzpicture}[->,>=stealth',auto,node distance=2cm,thick,main node/.style={circle,draw,font=\sffamily\Large\bfseries}]
  \tikzstyle{smallvertex}=[circle,draw,minimum size=20pt,inner sep=0pt]
  \tikzstyle{bigvertex}=[circle,draw,minimum size=40pt,inner sep=0pt]
  
  \node[smallvertex] (v1) at (-2,5) {$v_1$};
  \node[smallvertex] (v2) [right of=v1] {$v_2$};
  \node[smallvertex] (v3) [right of=v2] {$v_3$};
  \node[smallvertex] (v4) [right of=v3] {$v_4$};
  \node[smallvertex] (v5) [right of=v4] {$v_5$};

  \node[bigvertex] (w1) at (-1,0) {$T_1$};
  \node[bigvertex] (w2) at (1,0) {$T_2$};
  \node[bigvertex] (w3) at (3,0) {$T_3$};
  \node[bigvertex] (w4) at (5,0) {$T_4$};

  \draw [->,>=stealth,thick,double distance=2pt] (v1) to (w1);
  \draw [->,>=stealth,thick,double distance=2pt] (v2) to (w2);
  \draw [->,>=stealth,thick,double distance=2pt] (v3) to (w3);
  \draw [->,>=stealth,thick,double distance=2pt] (v4) to (w4);
  \draw [->,>=stealth,thick,double distance=2pt] (v1) to (w2);
  \draw [->,>=stealth,thick,double distance=2pt] (v2) to (w3);
  \draw [->,>=stealth,thick,double distance=2pt] (v3) to (w4);
  \draw [->,>=stealth,thick,double distance=2pt] (v1) to (w3);
  \draw [->,>=stealth,thick,double distance=2pt] (v2) to (w4);
  \draw [->,>=stealth,thick,double distance=2pt] (v1) to (w4);
  \draw [->,>=stealth,thick,double distance=2pt] (w1) to (v2);
  \draw [->,>=stealth,thick,double distance=2pt] (w2) to (v3);
  \draw [->,>=stealth,thick,double distance=2pt] (w3) to (v4);
  \draw [->,>=stealth,thick,double distance=2pt] (w4) to (v5);
  \draw [->,>=stealth,thick,double distance=2pt] (w1) to (v3);
  \draw [->,>=stealth,thick,double distance=2pt] (w2) to (v4);
  \draw [->,>=stealth,thick,double distance=2pt] (w3) to (v5);
  \draw [->,>=stealth,thick,double distance=2pt] (w1) to (v4);
  \draw [->,>=stealth,thick,double distance=2pt] (w2) to (v5);
  \draw [->,>=stealth,thick,double distance=2pt] (w1) to (v5);
  \draw [->,>=stealth,thick,double distance=2pt] (w1) to (w2);
  \draw [->,>=stealth,thick,double distance=2pt] (w2) to (w3);
  \draw [->,>=stealth,thick,double distance=2pt] (w3) to (w4);
    \path[every node/.style={font=\sffamily\small},red,dashed]
    (v3) edge node [right] {} (v2)
    (v4) edge node [right] {} (v3)
    (v3) edge node [right] {} (v2)
    (v4) edge[bend right, in = -135] node [right] {} (v1)
    (v4) edge[bend right] node [right] {} (v2)
    (v5) edge[bend right, out = -45, in = -135] node [right] {} (v1);
    \path[every node/.style={font=\sffamily\small},blue]
    (v1) edge node [right] {} (v2)
    (v4) edge node [right] {} (v5)
    (v1) edge[bend left] node [right] {} (v3)
    (v2) edge[bend left, out = 45] node [right] {} (v5)
    (v3) edge[bend left] node [right] {} (v5);
\end{tikzpicture}
\caption{Construction of the tournament $U$ from a graph $G$ on five vertices. The dashed red edges are those present in $G$ and all go backwards, whereas the remaining edges are blue and go forwards. \label{fig:c-hardness}}
\end{figure}
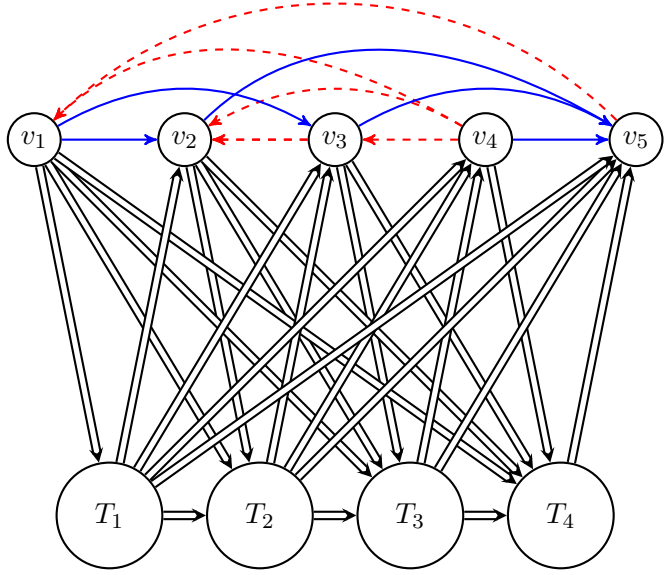

\begin{proof}
Let $n_G = |V_G|$ and let $(T_i)_{1\leq i \leq n_G-1}$ be copies of
$T$.  Let $T_i=(V_i,A_i)$. Then $V_U := (\cup_{1\leq i \leq n_G-1}
V_i) \cup V_G$.  Fix an arbitrary ordering of the vertices in
$V_G$. To build $A_U$, add the arc from $v_j$ to $v_i$ where $i < j$
if $(v_i,v_j) \in E_G$, and the arc from $v_i$ to $v_j$ if
$(v_i,v_j) \notin E_G$. The resulting tournament induced on
the vertices of $V_G$ is said to have $G$ as a {\emph {backedge
  graph}}.  Next we add all the arcs from $v_i$ to all vertices of
$T_j$ for every $i\leq j$, and the arcs from every vertex of $T_i$ to
$v_j$ for all $i<j$. Finally, we add the arcs from any vertex of $T_i$
to any vertex of $T_j$ for every $i<j$. This concludes the
construction of $U$, which is depicted in Figure \ref{fig:c-hardness}.

Suppose $\chi(G) = k$. Then let us show that $\vec\chi(U) = k$. In
this case, $\vec\chi{(T)} = k$ by assumption.  We take a $k$-coloring
of $G$ and a $k$-coloring of $T$ and color the vertices in $U$ (i.e.,
use the $k$-coloring of $G$ for $V_G$ and the $k$-coloring of $T$ for
$V_{i}$ for all $1 \leq i \leq n_G-1$).  Notice that all arcs that are
backwards with respect to the order $v_1 \rightarrow T_1 \rightarrow
v_{2} \rightarrow ... \rightarrow v_i \rightarrow T_i \rightarrow ...
\rightarrow T_{|V_G|-1} \rightarrow v_{|V_G|}$ are bicolored.  To see
this, observe that arcs from $v_j$ to $v_i$ for $j > i$ belong to
$E_G$ and are therefore bicolored, and by construction, there are no
arcs from $v_j$ to $T_i$ nor from $T_j$ to $T_i$ for $j>i$.  Thus,
there can only possibly be monochromatic triangles within $T_i$, but
these sets are properly colored.  Therefore, this is a proper
dicoloring of the tournament $U$ and $\vec\chi(U)=k$.

Let us now prove that when $\chi(G) > k$, we have $\vec\chi(U)\geq
\min\{\chi(G),c+1\}$.  By assumption, we have $\vec\chi(T) \geq
\min\{\chi(G),c\}$ in this case.  Thus, if $c \geq \chi(G)$, then the
claim is true, since $T$ is a subtournament of $U$.  So let us
consider the case in which $c < \chi(G)$.  Then given a coloring of
$U$ with $c$ colors, there must be a monochromatic edge $(v_i,v_j)$ in
$G$.  Assuming without loss of generality that $i<j$, there is a
monochromatic arc from $v_j$ to $v_i$ in $U$.  Furthermore, since
$\vec\chi(T) \geq c$, there must be some vertex of $T_i$ that has the
same color as $v_i$ and $v_j$.  Since all vertices in $T_i$ form a
directed triangle with $v_i$ and $v_j$, this means that there is a
monochromatic triangle in $U$, which is a contradiction.
\end{proof}

We can now prove Theorem \ref{thm:graph_to_tourn} by a simple
induction.

\begin{proof}[Proof of Theorem \ref{thm:graph_to_tourn}]
Let $G=(V_G,E_G)$ be a graph.  For all $c \geq k$, we will build a
tournament $T_c=(V_{T_c},A_{T_c})$ by induction such that if $\chi(G)
= k$, then $\vec\chi(T_c) = k$, and if $\chi(G) \geq \ell$, then
$\vec\chi(T_c) \geq \min\{\chi(G),c\}$.

For $c=k$, any $k$-colorable tournament, say $S_k$, satisfies the
conditions.  For $k+1$, we obtain $T_{k+1}$ by applying Lemma
\ref{lem:induction} where $T_c$ is $S_k$ and $T_{c+1}$ is $U$.
Suppose by induction that there is a tournament $T_c$ satisfying the
conditions for a constant $c$.  Let us show that there is a tournament
$T_{c+1}$ that satisfies these same conditions for $c+1$.  This
follows from Lemma \ref{lem:induction} where $T_c$ is $T$, and
$T_{c+1}$ is $U$.

The tournament $T_\ell$ has size $|V_{T_{\ell}}| = O(|2^k \cdot
V_G|^{\ell})$, which is polynomial for fixed $\ell$.  Clearly, the
time to build $T_{\ell}$ is polynomial in its size.  Furthermore, if
$\chi(G) = k$ then $\vec\chi(T_\ell) = k$, and if $\chi(G) \geq \ell$,
then $\vec\chi(T_\ell) \geq \min\{\chi(G),\ell\}$.  Thus, if we can
efficiently decide if $T_\ell$ has chromatic number $k$ or at least
$\ell$, then we can also efficiently decide if $G$ has chromatic
number $k$ or at least $\ell$.
\end{proof}

Under the $d$-to-$1$-conjecture~\cite{guruswami2020d} (and under
another conjecture discussed in \cite{dinur2009conditional}), for any
constant $c \geq 4$, it is \NP-hard to decide if a graph is
3-colorable or if it has chromatic number at least $c$.  This implies
equivalent hardness for coloring $3$-colorable tournaments and for
coloring $k$-colorable tournaments for $k \geq 3$ (since any
3-colorable tournament is also $k$-colorable when $k\geq 3$).

\begin{corollary}\label{thm:dto1-hardness}
Let $\ell > k \geq 3$ be any constants.  Then if the $d$-To-$1$
conjecture is true, we cannot efficiently decide if a tournament $T$
has $\vec\chi(T) = k$ or $\vec\chi(T) > \ell$.
\end{corollary}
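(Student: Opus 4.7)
The plan is to derive the statement from Theorem~\ref{thm:graph_to_tourn} combined with the $d$-To-$1$ hardness for graph coloring. Recall the result of~\cite{guruswami2020d} cited just above the corollary: for every constant $c \geq 4$, the $d$-To-$1$ conjecture implies that it is \NP-hard to distinguish $\chi(H) = 3$ from $\chi(H) \geq c$ for an input graph $H$. The argument will be by contradiction: I would assume a polynomial-time algorithm $\mathcal{A}$ solves the tournament promise problem ``$\vec\chi(T) = k$ versus $\vec\chi(T) > \ell$'' for some fixed $\ell > k \geq 3$, and then manufacture a graph-coloring algorithm violating the cited hardness.

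In the base case $k = 3$, Theorem~\ref{thm:graph_to_tourn} directly converts $\mathcal{A}$ into a polynomial-time algorithm that distinguishes $\chi(G) = 3$ from $\chi(G) > \ell$ for an arbitrary graph $G$. Since $\ell \geq 4$, this already contradicts the cited hardness by taking $c = \ell + 1 \geq 5$.

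For the general case $k \geq 4$, I would first invoke Theorem~\ref{thm:graph_to_tourn} to turn $\mathcal{A}$ into a polynomial-time algorithm $\mathcal{B}$ that distinguishes $\chi(G) = k$ from $\chi(G) > \ell$. Then I would reduce $3$-coloring to $k$-coloring at the graph level by the standard join construction: given an input graph $H$, form $H'$ by adding $k - 3$ fresh pairwise-adjacent vertices, each joined to every vertex of $H$. The fresh vertices form a clique that is fully joined to $V(H)$, so they require $k - 3$ new colors unavailable to $V(H)$, giving $\chi(H') = \chi(H) + (k - 3)$. Consequently $\chi(H) = 3$ yields $\chi(H') = k$, while $\chi(H) \geq c$ yields $\chi(H') \geq c + k - 3$. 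Choosing $c := \ell - k + 4$, which is at least $4$ because $\ell \geq k + 1$, running $\mathcal{B}$ on $H'$ distinguishes $\chi(H) = 3$ from $\chi(H) \geq c$ in polynomial time, contradicting the $d$-To-$1$ graph hardness.

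All the substantial technical work has been packaged into Theorem~\ref{thm:graph_to_tourn}, so no serious obstacle remains. The only care needed is in parameter bookkeeping: one must verify that the reduction used to handle $k \geq 4$ produces a graph gap $(3, c)$ with $c \geq 4$, which is immediate from the hypothesis $\ell > k \geq 3$. In particular the case $k = 3$ could be absorbed uniformly into the general case by letting the join step add zero vertices.
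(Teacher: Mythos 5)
Your proposal is correct and follows the same high-level route the paper intends: pull the $d$-To-$1$ hardness for $3$-colorable graphs through Theorem~\ref{thm:graph_to_tourn} to conclude hardness for tournaments. The paper gives no explicit proof of the corollary; it just appeals to Theorem~\ref{thm:graph_to_tourn} and remarks parenthetically that ``any $3$-colorable tournament is also $k$-colorable when $k\geq 3$.''

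Where you add value is in the $k\geq 4$ case. The paper's parenthetical containment remark does not on its own resolve the promise-problem mismatch: an algorithm for ``$\vec\chi(T)=k$ vs.\ $\vec\chi(T)>\ell$'' need not behave correctly on a tournament with $\vec\chi(T)=3<k$, so one cannot simply feed $3$-colorable instances into it. Your graph-level join construction (adding a $(k-3)$-clique fully joined to $H$, so that $\chi(H')=\chi(H)+(k-3)$) cleanly shifts the gap from $(3,c)$ to $(k,c+k-3)$ with $c=\ell-k+4\geq 5$, after which Theorem~\ref{thm:graph_to_tourn} and the cited graph hardness apply directly. Your parameter bookkeeping checks out: $\chi(H)\geq c$ gives $\chi(H')\geq \ell+1>\ell$, and $c\geq 4$ as required by the Guruswami--Sandeep hardness. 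So your argument supplies precisely the rigor the paper's sketch elides, while otherwise matching the paper's intended proof.
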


Notice that if stronger hardness (for example constant hardness under
the $\P \neq \NP$ assumption) were established for approximate
coloring of $3$-colorable graphs, then this reduction would provide
stronger hardness results for $3$-colorable tournaments.  This would
hold up to constant hardness, after which the blowup of the size of
the tournament in the construction would be more then polynomial.

Finally, we consider the hardness of the problem of coloring general
tournaments.  Coloring digon-free digraphs has been shown to be
\NP-hard to approximate within a factor of $n^{1/2-\eps}$
\cite{feder2019complexity}. This proof can easily be extended to the
case of tournaments, which provides the following theorem.

\begin{restatable}{theorem}{genhardness}\label{thm:gen_hardness}
It is \NP-hard (under randomized reductions) to approximate the
dichromatic number of tournaments within a factor of $n^{1/2-\delta}$
for any $0 < \delta < 1/2$.
\end{restatable}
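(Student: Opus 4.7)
The plan is to reduce from the analogous hardness result for digon-free digraphs proved by~\cite{feder2019complexity}, which shows that for any $\eps > 0$, it is \NP-hard under randomized reductions to approximate the dichromatic number of a digon-free digraph on $N$ vertices within a factor of $N^{1/2-\eps}$. Given such a digon-free digraph $D=(V,A)$ produced by that reduction, the task is to construct a tournament $T$ whose dichromatic number matches that of $D$ up to a constant factor, on a vertex set of size linear in $N$.

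The construction of $T$ retains all vertices of $V$ and all arcs of $A$, and adds exactly one arc between every pair of vertices that forms a non-arc in $D$. One direction is immediate: $\vec\chi(T) \geq \vec\chi(D)$, since $D$ is an induced subdigraph of $T$ and therefore every acyclic vertex set in $T$ remains acyclic in $D$. The heart of the argument is to show that the orientations of the non-arcs can be chosen so that $\vec\chi(T)$ does not grow too much compared to $\vec\chi(D)$, and it is precisely this step that forces the reduction to be \emph{randomized}.

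The randomized step orients each missing arc independently and uniformly at random, producing a tournament $T$. The main obstacle is to verify that with high probability $\vec\chi(T) \leq O(\vec\chi(D))$. This cannot be done in a black-box manner: a random orientation of all non-arcs inside a large color class would yield a random tournament on that class, which has dichromatic number $\Theta(m/\log m)$ for a class of size $m$. One must instead exploit structural properties of the instances $D$ produced by the Feder et al. reduction---most notably a bound on the maximum size of an optimal color class in the \emph{Yes} case---so that the random orientations preserve acyclicity within each prospective color class with probability high enough to admit a union bound over the polynomially many candidate color classes consistent with an optimal coloring of $D$.

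Once $\vec\chi(T) = \Theta(\vec\chi(D))$ holds with high probability, the $N^{1/2-\eps}$ inapproximability factor for $D$ transfers directly to $T$. Since $|V(T)| = N$ by construction, this yields the claimed inapproximability factor of $n^{1/2-\delta}$ for tournaments, with $\delta$ depending on $\eps$ and on the constant factor absorbed in the reduction. The analysis of the concentration step is the only delicate part of the argument; the rest of the proof amounts to verifying that the Feder et al. instances have the required structural property and invoking their hardness statement verbatim.
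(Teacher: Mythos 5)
Your proposal tries to reduce black-box from the Feder--Tuza hardness for digon-free digraphs by randomly orienting all non-arcs of $D$ to obtain a tournament $T$. You correctly identify the obstacle yourself: inside a large acyclic color class $S$ of $D$ (which in the \emph{Yes} case has size at least $N/\vec\chi(D) \geq N^{1-\eps}$ by pigeonhole), a random orientation of the non-arcs produces something close to a random tournament, whose dichromatic number is $\Theta(|S|/\log|S|)$. That alone destroys the \emph{Yes} case unless $S$ has only $O(\log N)$ non-arcs, i.e., unless $S$ is essentially already a transitive tournament. Nothing in the digon-free construction guarantees this; indeed, if it did, the digon-free instances would already be (nearly) tournaments and the Feder--Tuza theorem would already cover tournaments, making the whole reduction moot. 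You never verify the ``structural property'' you invoke, and I see no reason to believe it holds. So the concentration/union-bound step you defer to is not a technicality to be checked later --- it is the whole problem, and it fails.

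The paper does not reduce from the digon-free result at all. Instead it \emph{parallels} the Feder--Tuza proof strategy: it starts from the Feige--Kilian randomized hardness of distinguishing $\alpha(G) < n_G^{\eps}$ from $\chi(G) \leq n_G^{\eps}$, blows up each vertex $v_i$ of $G$ into a transitive tournament $T_i$ on $n_G$ vertices, orients non-edges of $G$ as forward ``bundles,'' and --- crucially --- for each edge $(v_i,v_j)$ of $G$ orients the bipartite block between $T_i$ and $T_j$ using an \emph{explicit} bipartite $n^{o(1)}$-Ramsey graph (via Lemma~\ref{lem:ramsey} and Theorem~\ref{thm:barak}). This Ramsey structure is exactly what forces every large acyclic subtournament of $T$ to project onto a large independent set of $G$, replacing the probabilistic argument you would need. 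In the \emph{Yes} case the blow-up preserves the $n_G^{\eps}$-coloring by construction, and in the \emph{No} case the Ramsey property caps acyclic sets at $n^{1/2+\eps/2}$, giving the $n^{1/2-\delta}$ gap on $n = n_G^2$ vertices. You should rework your argument along these lines rather than attempting a black-box reduction with random orientations.
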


The proof of this Theorem is given in Appendix \ref{sec:hardness_gen_app}.

\section{Light Tournaments}\label{sec:light2}

Light tournaments are exactly those which do not contain the hero
$\Delta(C_3,1, 1)$, where $C_3$ is a directed triangle and `$1$' is a
single vertex.  \cite{berger2013tournaments} proved that light
tournaments have constant chromatic number, but they did not state a
precise constant, and their proof is not algorithmic.  A careful
modification of their approach can be used to give an algorithmic
proof that this constant is around 35.  In this section, our goal is
to prove the following theorem.
\begin{theorem}\label{thm:bound_light}
  Let $T$ be a light tournament.
  Then we can color $T$ with at most eight colors in polynomial time.
\end{theorem}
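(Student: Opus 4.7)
The strategy is to invoke Lemma \ref{lem:eff_loc_to_glob} with parameter $d=1$: since $T$ is light, every arc $e$ has $T[N(e)]$ transitive and therefore efficiently $1$-colorable. To reach the target of $c+4d=8$ colors, I need to produce, in polynomial time, a $(4,1)$-vertex chain: two vertices $v_0,v_k\in V(T)$ and a shortest $v_0 v_k$-path such that $T[N^+(v_0)\cup N^-(v_k)]$ admits an efficient $4$-coloring.

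To build such a chain, I would first observe that lightness is hereditary: for every vertex $v$, both $T[N^+(v)]$ and $T[N^-(v)]$ are again light, because for each internal arc $e$ the set $N_{T[N^+(v)]}(e)\subseteq N_T(e)$ remains transitive. The plan is then to select $v_0$ and $v_k$ so that each of the one-sided neighborhoods $T[N^+(v_0)]$ and $T[N^-(v_k)]$ can be efficiently $2$-colored; two disjoint palettes of $2$ colors give the required $4$-coloring of the union. To produce the $2$-coloring of each side I would apply Lemma \ref{lem:eff_loc_to_glob} a second time inside $T[N^+(v_0)]$, with inner parameters forced down by the extra $\Delta(C_3,1,1)$-forbidden patterns that a well-chosen $v_0$ excludes from its out-neighborhood — for instance a vertex belonging to a carefully chosen triangle, which rules out the ``closing'' vertex configurations that would obstruct a nested decomposition.

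The main obstacle is precisely this selection step: a generic vertex $v$ in a light tournament can have $\vec\chi\bigl(T[N^+(v)]\bigr)$ much larger than $2$, so the argument must single out special $v_0$ (and symmetric $v_k$) for which the out-neighborhood is structurally simple enough to be nested-decomposed into a $2$-coloring, and must do so constructively. This is the technical heart of the proof and is where the full force of $\Delta(C_3,1,1)$-freeness has to be exploited — mirroring, but substantially tightening and making algorithmic, the structural arguments of \cite{berger2013tournaments} that yield only the weaker upper bound of roughly $35$ colors. Once the $2$-coloring of each one-sided neighborhood is in hand, the rest of the proof combines it with Lemma \ref{lem:eff_loc_to_glob} exactly as in the proof of Theorem \ref{thm:2-col-light}, but with the parameter $c$ equal to $4$ instead of $1$, giving the final bound of $8$.
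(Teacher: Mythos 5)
Your high-level strategy matches the paper's: find a vertex chain whose endpoints have efficiently-colorable one-sided neighborhoods and then apply Lemma~\ref{lem:eff_loc_to_glob} with $d=1$ (every $T[N(e)]$ is transitive by lightness). But the proposal has a genuine gap exactly where you flag ``the technical heart.'' You need to actually \emph{find} $v_0,v_k$ for which $T[N^+(v_0)]$ and $T[N^-(v_k)]$ are efficiently $2$-colorable, and the argument you sketch for this does not work. In particular, the suggestion to recursively apply Lemma~\ref{lem:eff_loc_to_glob} inside $T[N^+(v_0)]$ would itself require a $(1,1)$-vertex chain there, i.e.\ two vertices whose out/in-neighborhoods inside $T[N^+(v_0)]$ have transitive union; that follows from Lemma~\ref{lem:min_max_vertex-k} when the subtournament is $2$-colorable, but a light tournament's one-sided neighborhood is again light, not necessarily $2$-colorable, so this is circular. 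The phrase ``a vertex belonging to a carefully chosen triangle'' is not an argument, and hereditariness of lightness alone buys nothing here.

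The paper bypasses this by \emph{not} asking for $2$-colorability of the endpoints' neighborhoods. Instead, Lemma~\ref{lem:endpoints} (the actual technical core) uses a maximal $C_3$-chain, partitions the remaining vertices into clear and unclear vertices, and shows that one can find $u,v$ with $T[N^+(u)]$ and $T[N^-(v)]$ each efficiently $3$-colorable and their union efficiently $5$-colorable. The eight-color bound then comes from a case split on the length $k$ of the shortest $u$--$v$ path: if $k\geq 4$, all arcs between $N^+(u)$ and $N^-(v)$ point from the latter to the former, so the union is $3$-colorable and Lemma~\ref{lem:eff_loc_to_glob} gives $3+4\cdot 1=7$; if $k\leq 3$, one colors $S=N^-(v)\cup N^+(u)$ with $5$ colors and the remaining vertices, each lying in some $N(e_i)$ for $i\leq 3$, with $3$ more, totaling $8$. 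Your bound $c+4d=8$ with $c=4$ would also suffice \emph{if} the $4$-coloring of the union could be produced, but you neither prove this nor explain why a $4$- rather than $5$-coloring is attainable; the $C_3$-chain analysis in the paper gives $5$ for the union in general, and avoiding the eighth color in the short-path case is not free. As it stands, your proof is a correct skeleton missing the load-bearing lemma.
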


\begin{lemma}\label{lem:endpoints}
  Let $T$ be a light tournament.  Then we can find $u,v$ such that:
  \begin{itemize}
  \item[(i)] $T[N^+(u)]$ can be colored with three colors in polynomial time,
  \item[(ii)] $T[N^-(v)]$ can be colored with three colors in polynomial time, and 
    \item[(iii)] $T[N^-(v) \cup N^+(u)]$ can be colored with five colors in polynomial time.
    \end{itemize}
  \end{lemma}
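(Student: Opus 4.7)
My plan is to identify an arc $vu \in A(T)$ and use it to construct $u$ and $v$ simultaneously, leveraging the fact that in a light tournament, $N(vu) = N^+(u) \cap N^-(v)$ is transitive, hence trivially 1-colorable in polynomial time. This transitive intersection will serve as the common backbone for all three colorings.

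The key decomposition is the partition
\[
N^+(u) \;=\; N(vu) \;\cup\; \bigl(N^+(u) \cap N^+(v)\bigr),
\qquad
N^-(v) \;=\; N(vu) \;\cup\; \bigl(N^-(v) \cap N^-(u)\bigr),
\]
valid whenever $vu$ is an arc of $T$. To establish part (i), I would argue that, for an appropriately chosen arc $vu$, the complementary set $N^+(u) \cap N^+(v)$ can be efficiently 2-colored. Granted this, a 3-coloring of $T[N^+(u)]$ is obtained by reserving one color for the transitive $N(vu)$ and using two disjoint colors for its complement, with arcs across the two parts being bicolored automatically by palette disjointness. Part (ii) follows by the symmetric argument, 2-coloring $N^-(v) \cap N^-(u)$.

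For part (iii), the union $T[N^+(u) \cup N^-(v)]$ breaks into the three blocks $N(vu)$, $N^+(u) \cap N^+(v)$, and $N^-(v) \cap N^-(u)$. Using one color for $N(vu)$, two distinct colors for the second block, and two further distinct colors for the third block yields a palette of $1 + 2 + 2 = 5$ colors. Correctness follows from verifying that arcs between the three blocks are bicolored: since the three palettes are pairwise disjoint apart from the shared transitive color, and arcs internal to each block are properly colored by construction, no monochromatic triangle arises.

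The principal obstacle is establishing the efficient 2-colorability of the complement sets $N^+(u) \cap N^+(v)$ and $N^-(v) \cap N^-(u)$ for some choice of arc $vu$. I expect this step to require a careful structural argument that identifies an extremal arc $vu$ — for instance, one chosen so that the corresponding complement sets inherit restricted structure from the absence of $\Delta(C_3,1,1)$ — and likely involves an internal application of the decomposition lemma (Lemma \ref{lem:eff_loc_to_glob}) or a variant of Lemma \ref{lem:min_max_vertex-k} applied to the light subtournaments on the complement sets. Since there are only polynomially many candidate arcs, iterating over them and verifying the required colorings takes polynomial time, so the genuine difficulty lies in proving existence (that at least one arc works) rather than in the algorithmic search itself.
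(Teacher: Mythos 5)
Your decomposition of $N^+(u)$, $N^-(v)$, and their union into three blocks, with the transitive set $N(vu)$ as a 1-colorable backbone, is algebraically correct, and the palette scheme ($1+2$, $1+2$, and $1+2+2$) would indeed yield the claimed color counts. But you have correctly identified, and then left open, the crux of the matter: showing that some arc $vu$ makes \emph{both} $N^+(u)\cap N^+(v)$ and $N^-(v)\cap N^-(u)$ efficiently $2$-colorable. This is not a deferred detail; it is the entire content of the lemma, and there is no reason to expect it to hold for an arbitrary arc --- indeed light tournaments such as $P_7$ already have subtournaments of dichromatic number $3$. Without a structural source for the $2$-colorability, the proposal is an outline with the hard part missing.

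The paper fills this gap by a different decomposition, one that is not indexed by a single arc. It builds a \emph{maximal $C_3$-chain} $X_1,\dots,X_\ell$ of vertex-disjoint triangles, partitions the remaining vertices into \emph{clear} vertices $C$ (those that dominate or are dominated by every $X_i$) and \emph{unclear} vertices $U$, and chooses $v=a\in X_1$ and $u=z\in X_\ell$. Maximality forces $C$ to be transitive, so $N^-(a)\cap C$ is the $1$-colorable backbone (not $N(vu)$). The remaining piece $(N^-(a)\cap U)\cup\{c\}$ is shown to lie inside $N(ab)\cup N(bc)$, a union of two arc-neighborhoods, hence $2$-colorable since each $N(e)$ is transitive in a light tournament. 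In other words, the paper's $1$-colorable and $2$-colorable parts come from the chain structure, not from intersecting with the in- and out-neighborhoods of a second vertex. If you tried to force the paper's choice of $u,v$ into your arc-based decomposition, the complement set $N^-(a)\cap N^-(z)$ would contain both the transitive clear part and the unclear part --- potentially needing $3$ colors, not $2$ --- so your partition genuinely does not align with the structure being exploited. To repair your approach you would essentially need to rediscover the $C_3$-chain argument (or something of comparable strength) to certify the $2$-colorability claim you flag as the ``principal obstacle.''
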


Assuming Lemma \ref{lem:endpoints}, we can prove Theorem
\ref{thm:bound_light}.  

\begin{proof}[Proof of Theorem \ref{thm:bound_light}]
If a shortest path from $u$ to $v$ has length at least four, then
notice that all arcs between $N^+(u)$ and $N^-(v)$ go from $N^-(v)$ to
$N^+(u)$.  Then by items (i) and (ii) from Lemma \ref{lem:endpoints},
we can color $T[N^-(v) \cup N^+(u)]$ with three colors.  Thus, $T$ has a
$(3,1)$-vertex chain, and by Lemma \ref{lem:eff_loc_to_glob}, we can
color $T$ with seven colors.

Next, we consider the case in which a shortest path from $u$ to $v$
has length at most three.  Let $S = N^-(v) \cup N^+(u)$.  By item
(iii) from Lemma \ref{lem:endpoints}, we can color $T[S]$ with five
colors.  Moreover, each remaining vertex is in $N(e)$ for some edge
$e$ on the shortest path, so $T[V\setminus{S}]$ can be colored with
three colors.  So in total, we can color $T$ with at most eight
colors.
\end{proof}

Now it remains to prove Lemma \ref{lem:endpoints}, which we do next.

\begin{proof}[Proof of Lemma \ref{lem:endpoints}]
We will start by establishing some structural claims about light
tournaments which are adapted from \cite{berger2013tournaments}.
Recall that a $C_3$ is a directed triangle.

\begin{definition}\label{def:c3-chain}
Define a {\em $C_3$-chain} of length $\ell$ in $T$ to be a set of
$\ell$ vertex disjoint $C_3$'s, $X = (X_1, X_2, X_3, \ldots,
X_{\ell})$, such that for each $i \in \{1, \ldots, \ell-1\}$, $X_i
\Rightarrow X_{i+1}$.
\end{definition}

A {\em backwards arc} in a $C_3$-chain is an arc $uv$ with $u \in X_i$
and $v \in X_j$ for $j < i$.

\begin{claim}
A $C_3$-chain has no backwards arcs.
\end{claim}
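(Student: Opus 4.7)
The plan is to prove this by induction on the distance $k = j - i$ between the indices of two $C_3$'s in the chain, using the lightness of $T$ as the key structural tool. The base case $k = 1$ is immediate: $X_i \Rightarrow X_{i+1}$ by definition, so no arc goes from $X_{i+1}$ back to $X_i$.

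For the inductive step, I assume that for all pairs of indices $p < q$ with $q - p < k$, every arc between $X_p$ and $X_q$ goes from $X_p$ to $X_q$. Now suppose for contradiction that for some $i < j$ with $j - i = k \geq 2$, there exist $u \in X_j$ and $v \in X_i$ with an arc $uv \in A(T)$. I then focus on the intermediate triangle $X_{i+1} = \{a, b, c\}$. On one hand, since $X_i \Rightarrow X_{i+1}$, we have $v \to a$, $v \to b$, and $v \to c$. On the other hand, since $(i+1, j)$ has distance $k - 1 < k$, the inductive hypothesis gives $a \to u$, $b \to u$, and $c \to u$.

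Combining these two observations, every vertex of $X_{i+1}$ lies in $N^-(u) \cap N^+(v) = N(uv)$. But $X_{i+1}$ is itself a directed triangle, so $N(uv)$ contains a $C_3$, which means the arc $uv$ is heavy. This contradicts the assumption that $T$ is light, completing the induction.

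The only subtlety is to make sure the inductive hypothesis is formulated with enough uniformity: it must cover arbitrary pairs $(p, q)$ with $q - p < k$ (not just $(i, j - 1)$ or $(i + 1, j)$), so that the application to $(i+1, j)$ is legitimate. Once stated this way, the argument is essentially a one-line consequence of lightness applied to the triangle $X_{i+1}$, and I expect no real obstacle beyond setting up the induction cleanly.
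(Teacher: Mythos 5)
Your proof is correct and essentially identical to the paper's: the paper phrases the argument via a minimal counterexample (choose a backwards arc with $j-i$ smallest), while you phrase it as strong induction on $j-i$, but in both cases the key step is showing that $X_{i+1}\subseteq N(uv)$, which makes $uv$ heavy and contradicts lightness.
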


\begin{cproof}
Suppose that there is a backwards arc $e=uv$ with $u \in X_j$ and $v
\in X_i$ for $j > i$ such that $j-i$ is minimum.  Since $X_i
\Rightarrow X_{i+1}$, it must be that $j > i+1$.  So then we have that
$X_{i+1} \subseteq N(e)$, which implies that $e$ is a heavy arc, a
contradiction.\end{cproof}

Let $X = (X_1, X_2, \ldots, X_{\ell})$ be a $C_3$-chain in
$T$, and let $W = V(T)\setminus{V(X)}$.  Initially, $X$ can be of any
length $\ell \geq 1$.

\begin{claim}
For $w \in W$:

\begin{enumerate}

\item If $w \Rightarrow X_i$, then $w \Rightarrow X_j$ for all $j
  \geq i$.

  \item If $X_i \Rightarrow w$, then $X_j \Rightarrow w$ for all $j
  \leq i$.

\end{enumerate}

\end{claim}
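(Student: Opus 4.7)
The plan is to prove statement 1 by a minimality argument; statement 2 then follows immediately by reversing every arc of $T$, which preserves lightness and reverses the order of the $C_3$-chain (so ``$w \Rightarrow X_i$ implies $w \Rightarrow X_j$ for $j \geq i$'' becomes ``$X_i \Rightarrow w$ implies $X_j \Rightarrow w$ for $j \leq i$'').

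For statement 1, assume $w \Rightarrow X_i$ and suppose toward contradiction that some $j > i$ has $w \not\Rightarrow X_j$. Choose such a $j$ minimum, so $w \Rightarrow X_{j-1}$ and there exists $x \in X_j$ with $xw \in A$. I will show this forces the arc $xw$ to be heavy, contradicting the hypothesis that $T$ is light. Pick any $y \in X_{j-1}$; since $X_{j-1} \Rightarrow X_j$ we have $yx \in A$, and by minimality of $j$ (applied to $X_{j-1}$) we have $wy \in A$. Thus $w, y, x$ form a directed triangle with arcs $wy$, $yx$, $xw$, which is precisely the condition that $y \in N(xw) = N^+(w) \cap N^-(x)$. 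Since this holds for every $y \in X_{j-1}$, and $X_{j-1}$ is itself a $C_3$, the set $N(xw)$ contains a directed triangle, so $xw$ is heavy, giving the desired contradiction.

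The only subtlety is the direction bookkeeping when interpreting $N(xw)$: one must verify that the triangle on vertices $w, y, x$ with arcs $wy, yx, xw$ really places $y$ in $N(xw)$, as opposed to $N(wx)$ or some other arc neighborhood. Once this is confirmed, a single contradiction closes the argument and no further case analysis is needed. The proof is essentially dual to the backwards-arc claim just established, which also exploited that $X_i \subseteq N(e)$ for a carefully chosen arc $e$ forces $e$ to be heavy.
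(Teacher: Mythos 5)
Your proof is correct and uses the same core idea as the paper: exhibit a triangle from the chain inside $N(xw)$ to force $xw$ to be heavy. The paper takes $X_i$ directly as the triangle in $N(uw)$ (relying on the just-proved no-backwards-arcs claim to get $X_i \Rightarrow X_j$), whereas you pick the minimal bad index $j$ and use $X_{j-1}$, which is a harmless variant; your reduction of part (2) to part (1) by arc reversal is also fine since reversal preserves arc neighborhoods and hence lightness.
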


\begin{cproof}
  Suppose $w \Rightarrow X_i$ and there is an arc $uw$ with $u \in
  X_j$ for $j > i$.  Then $uw$ is a heavy arc.  Similarly, suppose $X_i
  \Rightarrow w$ and there is an arc $wu$ with $u \in X_j$ for $j <
  i$, then $wu$ is a heavy arc.\end{cproof}

We partition the vertices in $W$ into zones $(Z_0,
Z_1, \ldots, Z_{\ell})$ using the following criteria.  For $w \in W$,
if $i$ is the highest index such that $X_i \Rightarrow w$, then $w$ is
assigned to zone $Z_{i}$.  If there is no such $X_i$, then $w$ is
assigned to zone $Z_0$.

Say a vertex $w \in W$ is {\em clear} if $w \Rightarrow X_i$ or $X_i
\Rightarrow w$ for all $X_i$ in $X$.  Let $C \subseteq W$ be the set
of clear vertices.

\begin{claim}\label{clm:extend}
If $C$ is not transitive, we can extend $X$.
\end{claim}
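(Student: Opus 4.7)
The plan is to use the non-transitivity of $C$ in a tournament to extract a directed triangle $Y \subseteq C$, show that its three vertices must all lie in a single zone $Z_i$, and then insert $Y$ between $X_i$ and $X_{i+1}$ in $X$ (prepending when $i=0$ and appending when $i=\ell$) to obtain a $C_3$-chain of length $\ell+1$ that contains $X$ as a sub-chain.

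The key step is proving that all three vertices of $Y = \{y_1, y_2, y_3\}$ lie in the same zone. Since each $y_j$ is clear, define $f(y_j)$ to be the largest index $i$ with $X_i \Rightarrow y_j$ (and $0$ if no such $i$ exists); the previous claim then gives $X_{i'} \Rightarrow y_j$ for every $i' \le f(y_j)$ and $y_j \Rightarrow X_{i'}$ for every $i' > f(y_j)$, so $y_j \in Z_{f(y_j)}$. Suppose for contradiction that the $f(y_j)$ are not all equal, and relabel so that $f(y_1) \le f(y_2) \le f(y_3)$ with $f(y_1) < f(y_3)$. Because $Y$ is a directed $3$-cycle, inspecting the two possible cyclic orientations shows that some arc $y_k \to y_j$ of $Y$ must satisfy $f(y_j) < f(y_k)$: in the orientation $y_1 \to y_2 \to y_3 \to y_1$, the arc $y_3 \to y_1$ works; in the orientation $y_1 \to y_3 \to y_2 \to y_1$, at least one of $y_3 \to y_2$ or $y_2 \to y_1$ works, since $f(y_1) < f(y_3)$ forces at least one strict increase along $f(y_1) \le f(y_2) \le f(y_3)$.

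Fixing such an arc $y_k \to y_j$ and choosing any $i$ with $f(y_j) < i \le f(y_k)$, we get $y_j \Rightarrow X_i$ and $X_i \Rightarrow y_k$, so for every $x \in X_i$ the three vertices $y_k, y_j, x$ form the directed triangle $y_k \to y_j \to x \to y_k$. Hence $X_i \subseteq N(y_k y_j)$, and since $X_i$ is itself a $C_3$, the neighborhood $N(y_k y_j)$ contains a directed triangle, so $y_k y_j$ is heavy, contradicting the lightness of $T$. Therefore $Y \subseteq Z_i$ for some $i \in \{0, 1, \ldots, \ell\}$. By the definition of the zones, $X_i \Rightarrow Y$ (when $i \ge 1$) and $Y \Rightarrow X_{i+1}$ (when $i \le \ell-1$), so inserting $Y$ into $X$ at position $i$ yields the extended $C_3$-chain $(X_1, \ldots, X_i, Y, X_{i+1}, \ldots, X_\ell)$, with the endpoint cases $i=0$ and $i=\ell$ handled by prepending or appending. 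The main technical obstacle is the case analysis on the cyclic orientation of $Y$, but it reduces to the elementary fact that a directed triangle cannot have all three of its arcs respecting a linear ordering of its vertices, so a zone-reversing arc always exists and is forced to be heavy.
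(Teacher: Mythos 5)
Your proof is correct and takes essentially the same approach as the paper's: both show that a directed triangle in $C$ must lie in a single zone because a zone-reversing arc $uv$ (from a higher-indexed zone to a lower one) would have an entire triangle $X_i$ inside $N(uv)$ and hence be heavy, and then the triangle can be spliced into the chain at that zone. You spell out the orientation case analysis that the paper leaves implicit (the paper simply asserts that all arcs between $Z_j \cap C$ and $Z_i \cap C$ for $i<j$ go forward, which immediately pins the triangle to one zone), but the underlying mechanism is identical.
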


\begin{cproof}
First, we observe that if $C$ contains a triangle (i.e., $C$ is not
transitive), then $Z_i \cap C$ contains a triangle for some zone
$Z_i$.  This follows from the observation that there are no backwards
arcs from $Z_j \cap C$ to $Z_i \cap C$ for $i < j$.  Indeed, should
such an arc $uv$ from $Z_j \cap C$ to $Z_i \cap C$ exist, then
$X_{i+1} \subset N(uv)$, so $uv$ would be heavy.

If the set $Z_i \cap C$ contains a triangle, then we can extend $X$ by
adding a new triangle to the chain between $X_i$ and $X_{i+1}$.\end{cproof}

We say that $X$ is a {\em maximal $C_3$-chain} if $C$ is transitive.
Let us also now define the {\em unclear} vertices $U$, where $U =
W\setminus{C}$.

\begin{claim}\label{clm:end_triangle}
  If $X = (X_1, \ldots, X_{\ell})$ is a maximal $C_3$-chain, then
for vertex $a \in X_1$, we have 
$N^-(a) \cap U \subseteq \npm(X_1)$.
\end{claim}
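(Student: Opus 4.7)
The plan is to prove Claim \ref{clm:end_triangle} directly from the propagation claim established earlier (namely, that if $w \Rightarrow X_i$ then $w \Rightarrow X_j$ for all $j \geq i$). I would fix $a \in X_1$ and consider an arbitrary vertex $w \in N^-(a) \cap U$; the goal is to show that $w$ has both an in-neighbor and an out-neighbor in $X_1$, so that $w \in \npm(X_1)$.

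First I would observe that since $w \to a$ and $a \in X_1$, the vertex $a$ is already an out-neighbor of $w$ lying in $X_1$, so the ``out-neighbor half'' of the definition of $\npm(X_1)$ is immediate. The only thing left to verify is that $w$ also has at least one in-neighbor inside $X_1$. I would prove this by contradiction: suppose every vertex of $X_1$ is an out-neighbor of $w$, i.e.\ $w \Rightarrow X_1$. Applying the propagation claim (part~1) with $i = 1$, this forces $w \Rightarrow X_j$ for every $j \geq 1$, which means $w \Rightarrow X_j$ for \emph{all} triangles in the chain. But this is exactly the condition for $w$ to be clear, contradicting our assumption that $w \in U = W \setminus C$.

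Combining the two observations, $w$ has at least one in-neighbor and at least one out-neighbor in $X_1$, so $w \in \npm(X_1)$, which is what we needed.

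I do not foresee a genuine obstacle here; the claim is essentially a one-line consequence of the earlier propagation claim together with the definition of unclear vertex. The only thing to be careful about is not to conflate ``unclear in zone $Z_0$'' with something stronger: the argument only uses that $w$ fails to be clear, i.e.\ fails to be fully oriented one way with respect to \emph{every} $X_i$, and this is exactly what the propagation claim rules out once $w \Rightarrow X_1$ is assumed.
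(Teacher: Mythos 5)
Your proof is correct and takes essentially the same approach as the paper: the paper's (terse) argument also observes that $w \Rightarrow X_1$ would, via the earlier propagation claim, make $w$ clear, contradicting $w \in U$. Your version simply spells out the two halves of membership in $\npm(X_1)$ more explicitly.
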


\begin{cproof}
If a vertex $u \in N^-(a)$ has $u \Rightarrow X_1$, then $u$ would be
a clear vertex.\end{cproof}

It is not difficult to see that we can efficiently find a maximal
$C_3$-chain; We can begin with any $C_3$-chain and if the associated
set $C$ is not transitive, we can extend the chain by Claim
\ref{clm:extend}.  Let $X = (X_1, \ldots, X_{\ell})$ be a maximal
$C_3$-chain, and let $X_1 = abc$ and let $X_{\ell} = xyz$.  Notice
that $N^-(a) = (N^-(a) \cap C) \cup (N^-(a) \cap U) \cup \{c\}$.  By
Lemma \ref{clm:end_triangle}, we have $N^-(a) \cap U \subseteq
\npm(X_1)$.  Moreover, $\npm(X_1) \subset N(ab) \cup N(bc) \cup N(ca)$.  
 Notice that $c \in N(ab)$ and that for $v \in N^-(a) \cap
U$, $v \notin N(ca)$.  Thus, $(N^-(a) \cap U) \cup \{c\} \subseteq
N(ab) \cup N(bc)$, which is efficiently 2-colorable.

Making an analogous argument for $z \in X_{\ell}$, we conclude that
$((N^+(z) \cup N^-(a)) \cap U) \cup \{y\} \cup \{c\}$ is efficiently
4-colorable.  $T[(N^-(a) \cup N^+(z))\cap C]$ is transitive (since $X$
is maximal) and can be colored with one color.  Therefore, $T[N^+(z)
  \cup N^-(a)]$ can be colored with five colors.  Moreover,
$T[N^+(z)]$ and $T[N^-(a)]$ can each be colored with three
colors.\end{proof}

Further
discussion of light tournaments is included in the next section, in
which we mention some open problems.  We also note that the approach in this section used to bound the chromatic number of light tournaments can be applied to a more general subclass of tournaments.
Specifically, define the hero $H_k$ as follows.
\begin{definition}
Let $\{H_k\}_{0 \leq k}$ be the family of tournaments defined
recursively with $H_0$ being a single vertex and $H_{k+1} =
\Delta(H_k,1,1)$ for all $k \geq 0$.
\end{definition}
Notice that $H_1 = C_3$ and $H_2 = \Delta(C_3,1,1)$.  Our proof that
$H_2$-free tournaments have bounded chromatic number can be extended
in a straightforward way to show that the chromatic number of
$H_k$-free tournaments is upper bounded by a function of $k$.  We omit the details, which can be found in \cite{FelixThesis}.

\section{Conclusion}\label{sec:conclusion}

There are many open questions related to the theorems we have
presented since all the rows in Table \ref{tab:gen_tourn} present gaps
between the upper and lower bounds.  One example is light tournaments,
for which Theorem \ref{thm:bound_light} gives an upper bound on the
chromatic number.  It is a natural question to determine tight upper
and lower bounds on the chromatic number of light tournaments (e.g.,
see Problem 1 in \cite{west-webpage}).  With respect to lower bounds,
there exist light tournaments that are not 2-colorable.  An example of
such a tournament is the Paley tournament $P_7$, one of the four
3-chromatic tournaments on seven vertices~\cite{neumann19943}.  This
tournament is represented in Figure \ref{fig:P7}.  We have not found
any light tournament with chromatic number at least four.  The Paley
tournament $P_{11}$ is the unique 4-chromatic tournament on 11
vertices~\cite{neumann19943}.  A light 4-chromatic tournament would
have to have at least 13 vertices as \cite{bellitto2022smallest}
proved that any $4$-chromatic tournament on $12$ vertices must contain
an induced copy of $P_{11}$, and $P_{11}$ is not light.

\begin{figure}
	\center
	\begin{tikzpicture}
	\tikzstyle{smallvertex}=[circle,draw,minimum size=10pt,inner sep=0pt]
		\node[smallvertex] (v1) at (0,0) {$1$};
		\node[smallvertex] (v2) at (0,1) {$2$};
		\node[smallvertex] (v3) at (2,0) {$3$};
		\node[smallvertex] (v4) at (2,1) {$4$};
		\node[smallvertex] (v5) at (1,2) {$5$};
		\node[smallvertex] (v6) at (-1.5,0.5) {$6$};
		\node[smallvertex] (v7) at (3.5,0.5) {$7$};
		  \draw[->,>=stealth] (v1) -- (v2);
		  \draw[->,>=stealth] (v1) -- (v3);
		  \draw[->,>=stealth] (v1) -- (v4);
		  \draw[->,>=stealth] (v2) -- (v3);
		  \draw[->,>=stealth] (v2) -- (v6);
		  \draw[->,>=stealth] (v2) -- (v7);
		  \draw[->,>=stealth] (v3) -- (v4);
		  \draw[->,>=stealth] (v3) -- (v5);
		  \draw[->,>=stealth] (v3) -- (v6);
		  \draw[->,>=stealth] (v4) -- (v5);
		  \draw[->,>=stealth] (v4) -- (v2);
		  \draw[->,>=stealth] (v4) -- (v7);
		  \draw[->,>=stealth] (v5) -- (v6);
		  \draw[->,>=stealth] (v5) -- (v1);
		  \draw[->,>=stealth] (v5) -- (v2);
		  \draw[->,>=stealth] (v6) -- (v1);
		  \draw[->,>=stealth] (v6) -- (v4);
		  \draw[->,>=stealth] (v6) -- (v7);
		  \draw[->,>=stealth] (v7) -- (v1);
		  \draw[->,>=stealth] (v7) -- (v3);
		  \draw[->,>=stealth] (v7) -- (v5);
        \end{tikzpicture}
	\caption{A 3-chromatic light tournament. \label{fig:P7}}
\end{figure}
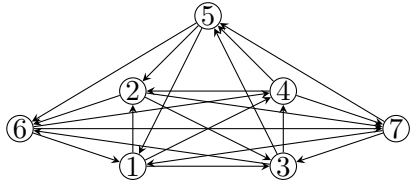

\begin{problem}
  What is the maximum chromatic number of a light tournament?
  \end{problem}

Regarding the complexity of coloring a light tournament, notice that
if it is \NP-hard to color a 2-colorable tournament with four colors
(rather than three as per Theorem \ref{thm:2-hardness}), this would
imply (by Observation \ref{obs:light}) \NP-hardness of coloring a 2-colorable light tournament with two
colors.  Indeed, we have no hardness
results for coloring light tournaments.

\begin{problem}
Is there a polynomial-time algorithm to color a 2-colorable light
tournament with two colors?  Or is this problem \NP-hard?
  \end{problem}

Another interesting topic is the relation of coloring tournaments and the
feedback vertex set (FVS) problem on tournaments.  There is an elegant
2-approximation for this problem~\cite{lokshtanov20212}.  Notice that
Theorem \ref{thm:2-col} implies that in a 2-colorable tournament, we
can efficiently find a FVS of size at most $9n/10$.  In contrast, the
algorithm in \cite{lokshtanov20212} could just return the whole vertex
set if the two transitive sets were of roughly equal size.  
The next problem is analogous to a well-studied question for general
graphs~\cite{dinur2010hardness,khot2014hardness}.

\begin{problem}
  What is
the largest transitive induced subtournament that one can efficiently find in a
2-colorable tournament?  Is it larger than $n/10$?
\end{problem}

Finally, we remark that an implication of Theorem
\ref{thm:3-col_reduction} is that proving any hardness of coloring
3-colorable tournaments would then provide hardness of coloring
3-colorable graphs with 50 times fewer colors.  Since it has taken
around 20 years to go from proving \NP-hardness of coloring a
3-colorable graph with four
colors~\cite{khanna2000hardness,guruswami2000hardness,guruswami2004hardness}
to \NP-hardness of coloring a 3-colorable graph with five
colors~\cite{bulin2019algebraic}, it would be interesting to see if we
can prove hardness of coloring 3-colorable tournaments for a constant
larger than five (at least five is shown in Theorem
\ref{thm:hardness}), or perhaps show that the two problems are
actually equivalent.  In fact, Theorem \ref{thm:graph_to_tourn}
already shows one direction.  The other direction remains
open.

\begin{problem}
Suppose we can color an 3-colorable graph with $\ell > 3$ colors.  Then can we
color a 3-colorable tournament with $\ell$ colors?
\end{problem}

\section*{Acknowledgements} We thank Louis Esperet for useful
discussions and for his encouragement.  We thank the anonymous
referees for their thorough and insightful comments.

\bibliography{hero}

\newcommand{\etalchar}[1]{$^{#1}$}
\begin{thebibliography}{LMM{\etalchar{+}}21}

\bibitem[AKMR96]{alon1996coloring}
Noga Alon, Pierre Kelsen, Sanjeev Mahajan, and Hariharan Ramesh.
\newblock Coloring 2-colorable hypergraphs with a sublinear number of colors.
\newblock {\em Nordic Journal of Computing}, 3:425--439, 1996.

\bibitem[APS01]{alon2001ramsey}
Noga Alon, J{\'a}nos Pach, and J{\'o}zsef Solymosi.
\newblock Ramsey-type theorems with forbidden subgraphs.
\newblock {\em Combinatorica}, 21(2):155--170, 2001.

\bibitem[BBKP24]{bellitto2022smallest}
Thomas Bellitto, Nicolas Bousquet, Adam Kabela, and Th{\'e}o Pierron.
\newblock The smallest 5-chromatic tournament.
\newblock {\em Mathematics of Computation}, 93(345):443--458, 2024.

\bibitem[BCC{\etalchar{+}}13]{berger2013tournaments}
Eli Berger, Krzysztof Choromanski, Maria Chudnovsky, Jacob Fox, Martin Loebl,
  Alex Scott, Paul Seymour, and St{\'e}phan Thomass{\'e}.
\newblock Tournaments and colouring.
\newblock {\em Journal of Combinatorial Theory, Series B}, 103(1):1--20, 2013.

\bibitem[BKO19]{bulin2019algebraic}
Jakub Bul{\'\i}n, Andrei Krokhin, and Jakub Opr{\v{s}}al.
\newblock Algebraic approach to promise constraint satisfaction.
\newblock In {\em Proceedings of the 51st Annual ACM Symposium on Theory of
  Computing (STOC)}, pages 602--613, 2019.

\bibitem[Blu94]{blum1994new}
Avrim Blum.
\newblock New approximation algorithms for graph coloring.
\newblock {\em Journal of the ACM}, 41(3):470--516, 1994.

\bibitem[BRSW12]{barak20122}
Boaz Barak, Anup Rao, Ronen Shaltiel, and Avi Wigderson.
\newblock 2-{S}ource dispersers for $n^{o(1)}$ entropy, and {R}amsey graphs
  beating the {F}rankl-{W}ilson construction.
\newblock {\em Annals of Mathematics}, pages 1483--1543, 2012.

\bibitem[CF96]{chen2005coloring}
Hui Chen and Alan Frieze.
\newblock Coloring bipartite hypergraphs.
\newblock In {\em Fifth International Conference on Integer Programming and
  Combinatorial Optimization (IPCO)}, pages 345--358, 1996.

\bibitem[Chu14]{chudnovsky2014erdos}
Maria Chudnovsky.
\newblock The {E}rd{\H{o}}s-{H}ajnal {C}onjecture—{A} survey.
\newblock {\em Journal of Graph Theory}, 75(2):178--190, 2014.

\bibitem[CHZ07]{chen2007min}
Xujin Chen, Xiaodong Hu, and Wenan Zang.
\newblock A min-max theorem on tournaments.
\newblock {\em SIAM Journal on Computing}, 37(3):923--937, 2007.

\bibitem[CSSS24]{chudnovsky2024pure}
Maria Chudnovsky, Alex Scott, Paul Seymour, and Sophie Spirkl.
\newblock Pure pairs. {X}. {T}ournaments and the strong {E}rd{\H{o}}s-{H}ajnal
  property.
\newblock {\em European Journal of Combinatorics}, 115:103786, 2024.

\bibitem[DKPS10]{dinur2010hardness}
Irit Dinur, Subhash Khot, Will Perkins, and Muli Safra.
\newblock Hardness of finding independent sets in almost 3-colorable graphs.
\newblock In {\em Proceedings of the 51st Annual IEEE Symposium on Foundations
  of Computer Science (FOCS)}, pages 212--221, 2010.

\bibitem[DMR09]{dinur2009conditional}
Irit Dinur, Elchanan Mossel, and Oded Regev.
\newblock Conditional hardness for approximate coloring.
\newblock {\em SIAM Journal on Computing}, 39(3):843--873, 2009.

\bibitem[DRS05]{dinur2005hardness}
Irit Dinur, Oded Regev, and Clifford Smyth.
\newblock The hardness of 3-uniform hypergraph coloring.
\newblock {\em Combinatorica}, 25(5):519--535, 2005.

\bibitem[EE85]{el1985existence}
Mohamed {El-Zahar} and Paul Erd{\H{o}}s.
\newblock On the existence of two non-neighboring subgraphs in a graph.
\newblock {\em Combinatorica}, 5:295--300, 1985.

\bibitem[EH89]{erdos1989ramsey}
Paul Erd\H{o}s and Andr{\'a}s Hajnal.
\newblock Ramsey-type theorems.
\newblock {\em Discrete Applied Mathematics}, 25(1-2):37--52, 1989.

\bibitem[EM64]{erdos1964representation}
Paul Erdos and Leo Moser.
\newblock On the representation of directed graphs as unions of orderings.
\newblock {\em Math. Inst. Hung. Acad. Sci}, 9:125--132, 1964.

\bibitem[FGSY19]{fox2019removal}
Jacob Fox, Lior Gishboliner, Asaf Shapira, and Raphael Yuster.
\newblock The removal lemma for tournaments.
\newblock {\em Journal of Combinatorial Theory, Series B}, 136:110--134, 2019.

\bibitem[FHS19]{feder2019complexity}
Tom\'as Feder, Pavol Hell, and Carlos Subi.
\newblock Complexity of acyclic colorings of graphs and digraphs with degree
  and girth constraints.
\newblock {\em arXiv:1907.00061}, 2019.

\bibitem[FK98]{feige1998zero}
Uriel Feige and Joe Kilian.
\newblock Zero knowledge and the chromatic number.
\newblock {\em Journal of Computer and System Sciences}, 57(2):187--199, 1998.

\bibitem[GK00]{guruswami2000hardness}
Venkatesan Guruswami and Sanjeev Khanna.
\newblock On the hardness of 4-coloring a 3-colorable graph.
\newblock In {\em Proceedings 15th Annual IEEE Conference on Computational
  Complexity (CCC)}, pages 188--197, 2000.

\bibitem[GK04]{guruswami2004hardness}
Venkatesan Guruswami and Sanjeev Khanna.
\newblock On the hardness of 4-coloring a 3-colorable graph.
\newblock {\em SIAM Journal on Discrete Mathematics}, 18(1):30--40, 2004.

\bibitem[GS20]{guruswami2020d}
Venkatesan Guruswami and Sai Sandeep.
\newblock $d$-{T}o-1 hardness of coloring 3-colorable graphs with {$O(1)$}
  colors.
\newblock In {\em 47th International Colloquium on Automata, Languages, and
  Programming (ICALP)}, 2020.

\bibitem[Hal93]{halldorsson1993still}
Magn{\'u}s~M. Halld{\'o}rsson.
\newblock A still better performance guarantee for approximate graph coloring.
\newblock {\em Information Processing Letters}, 45(1):19--23, 1993.

\bibitem[Has99]{hastadClique}
Johan Hastad.
\newblock Clique is hard to approximate within $n^{1-\epsilon}$.
\newblock {\em Acta Mathematica}, 182:105--142, 1999.

\bibitem[HLNT19]{harutyunyan2019coloring}
Ararat Harutyunyan, Tien-Nam Le, Alantha Newman, and St{\'e}phan Thomass{\'e}.
\newblock Coloring dense digraphs.
\newblock {\em Combinatorica}, 39(5):1021--1053, 2019.

\bibitem[HLTW19]{harutyunyan2019locToGlobal}
Ararat Harutyunyan, Tien-Nam Le, St{\'e}phan Thomass{\'e}, and Hehui Wu.
\newblock Coloring tournaments: {F}rom local to global.
\newblock {\em Journal of Combinatorial Theory, Series B}, 138:166--171, 2019.

\bibitem[Kho02]{khot2002power}
Subhash Khot.
\newblock On the power of unique 2-prover 1-round games.
\newblock In {\em Proceedings of the 34th Annual ACM Symposium on Theory of
  Computing (STOC)}, pages 767--775, 2002.

\bibitem[Kli23]{FelixThesis}
Felix Klingelhoefer.
\newblock {\em Algorithms for Promise Coloring Problems on Tournaments and
  Graphs}.
\newblock PhD thesis, Universit\'e Grenoble Alpes, 2023.

\bibitem[KLS00]{khanna2000hardness}
Sanjeev Khanna, Nathan Linial, and Shmuel Safra.
\newblock On the hardness of approximating the chromatic number.
\newblock {\em Combinatorica}, 20(3):393--415, 2000.

\bibitem[KMS98]{KargerMS98}
David~R. Karger, Rajeev Motwani, and Madhu Sudan.
\newblock Approximate graph coloring by semidefinite programming.
\newblock {\em Journal of the ACM}, 45(2):246--265, 1998.

\bibitem[KN24]{klingelhoefer2023bounding}
Felix Klingelhoefer and Alantha Newman.
\newblock Bounding the chromatic number of dense digraphs by arc neighborhoods.
\newblock {\em Combinatorica}, 44(4):881--895, 2024.

\bibitem[KNS01]{krivelevich2001approximating}
Michael Krivelevich, Ram Nathaniel, and Benny Sudakov.
\newblock Approximating coloring and maximum independent sets in 3-uniform
  hypergraphs.
\newblock {\em Journal of Algorithms}, 41(1):99--113, 2001.

\bibitem[KS14]{khot2014hardness}
Subhash Khot and Rishi Saket.
\newblock Hardness of finding independent sets in 2-colorable and almost
  2-colorable hypergraphs.
\newblock In {\em Proceedings of the 25th Annual ACM-SIAM Symposium on Discrete
  Algorithms (SODA)}, pages 1607--1625, 2014.

\bibitem[KT17]{kawarabayashi2017coloring}
Ken-ichi Kawarabayashi and Mikkel Thorup.
\newblock Coloring 3-colorable graphs with less than $n^{1/5}$ colors.
\newblock {\em Journal of the ACM}, 64(1):1--23, 2017.

\bibitem[LMM{\etalchar{+}}21]{lokshtanov20212}
Daniel Lokshtanov, Pranabendu Misra, Joydeep Mukherjee, Fahad Panolan,
  Geevarghese Philip, and Saket Saurabh.
\newblock 2-{A}pproximating feedback vertex set in tournaments.
\newblock {\em ACM Transactions on Algorithms}, 17(2):1--14, 2021.

\bibitem[Lov73]{lovasz1973coverings}
L{\'a}szl{\'o} Lov{\'a}sz.
\newblock Coverings and colorings of hypergraphs.
\newblock In {\em Proc. 4th Southeastern Conference of Combinatorics, Graph
  Theory, and Computing}, pages 3--12, 1973.

\bibitem[MW11]{west-webpage}
Kevin Milans and Douglas~B. West.
\newblock Chromatic numbers of tournaments, 2011.
\newblock \url{https://dwest.web.illinois.edu/regs/chromtourn.html}, Last
  accessed on 2023-09-11.

\bibitem[{Neu}82]{neumann1982dichromatic}
Victor {Neumann-Lara}.
\newblock The dichromatic number of a digraph.
\newblock {\em Journal of Combinatorial Theory, Series B}, 33(3):265--270,
  1982.

\bibitem[NL94]{neumann19943}
Victor Neumann-Lara.
\newblock The 3 and 4-dichromatic tournaments of minimum order.
\newblock {\em Discrete Mathematics}, 135(1-3):233--243, 1994.

\bibitem[NSS23]{nguyen2023some}
Tung Nguyen, Alex Scott, and Paul Seymour.
\newblock Some results and problems on tournament structure.
\newblock {\em preprint arXiv:2306.02364}, 2023.

\bibitem[Wig83]{wigderson1983improving}
Avi Wigderson.
\newblock Improving the performance guarantee for approximate graph coloring.
\newblock {\em Journal of the ACM}, 30(4):729--735, 1983.

\bibitem[Zuc06]{zuckerman2006linear}
David Zuckerman.
\newblock Linear degree extractors and the inapproximability of max clique and
  chromatic number.
\newblock In {\em Proceedings of the 38th Annual ACM Symposium on Theory of
  Computing (STOC)}, pages 681--690, 2006.

\end{thebibliography}

\appendix

\section{\NP-Hardness of Deciding 2-Colorability}\label{sec:base_hardness}

For completeness, we provide a proof of the \NP-hardness of coloring
$2$-colorable tournaments with two colors. This proof is strongly
inspired by the proof of \cite{chen2007min}.  Notice that Lemma \ref{lem:np-hard2} immediately implies that it is \NP-hard to color a 2-colorable tournament with two colors.  

\begin{lemma}\label{lem:np-hard2}
  It is \NP-hard to decide if a tournament is 2-colorable.
  \end{lemma}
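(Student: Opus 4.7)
The plan is to reduce from the NP-hard problem of deciding 2-colorability of a 3-uniform hypergraph, first shown by Lov\'asz. Given a 3-uniform hypergraph $\Hd = (\V, \Ed)$, I will construct in polynomial time a tournament $T$ such that $T$ is 2-colorable if and only if $\Hd$ is 2-colorable.

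The starting point is an ``edge gadget'' for each hyperedge: fix an arbitrary enumeration $e_1, \ldots, e_m$ of $\Ed$, and for each $e_i = \{v_a,v_b,v_c\}$ introduce three fresh vertices $v_{a,i}, v_{b,i}, v_{c,i}$ oriented as a directed triangle. Orient all remaining arcs between edge gadgets forward with respect to the enumeration: from the $i$-th gadget to the $j$-th whenever $i < j$. In this base subtournament $H$, every directed triangle lies inside a single gadget, so a 2-coloring of $\Hd$ yields a 2-coloring of $H$ by setting the color of each $v_{a,i}$ to match $v_a$; each gadget is then 2-colored precisely because $e_i$ is not monochromatic in $\Hd$. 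The reverse direction fails for $H$ alone, because different copies $v_{a,i}, v_{a,j}$ of the same hypergraph vertex may receive different colors in a 2-coloring of $H$, leaving no canonical 2-coloring of $\V$ to read off.

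To remedy this, I would augment the construction with a ``consistency layer'': a mirror copy $H'$ with vertices $v_{a,i}'$ wired to the original via a pivot subtournament $G$. Concretely, orient all arcs from $V(H)$ to $V(G)$ and from $V(G)$ to $V(H')$, and between the two layers set $v_{b,j}' \to v_{a,i}$ iff $a = b$, with the opposite orientation otherwise. With $G$ chosen to be a strongly connected, 2-colorable tournament (so that every 2-coloring of $G$ necessarily uses both colors), the ``same-vertex'' arcs between the two layers combine with $G$ to produce directed triangles $(v_{a,i}, g, v_{a,j}')$ that force $v_{a,i}$ and $v_{a,j}'$ to take opposite colors. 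Iterating this pins every copy $\{v_{a,i}\}_i$ to a single color depending only on $v_a$, so the hyperedge triangle $v_{a,i}, v_{b,i}, v_{c,i}$ exactly encodes the non-monochromaticity of $e_i$. Given a 2-coloring of $\Hd$, lift it to $T$ by coloring $v_{a,i}$ with $\mathrm{col}(v_a)$, $v_{a,i}'$ with the opposite color, and $G$ with any of its 2-colorings; given a 2-coloring of $T$, read off $\mathrm{col}(v_a)$ as the common color of $\{v_{a,i}\}_i$.

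The main technical obstacle is checking the rigidity of the forced bipartition between $V(H)$ and $V(H')$: one must verify that \emph{every} directed triangle of $T$ crossing layers is automatically bicolored in any 2-coloring, so that no spurious monochromatic triangle blocks the lift of a coloring of $\Hd$, and simultaneously that the interaction with $G$ leaves no ``wiggle room'' for copies of the same vertex to split into different colors. Once this is in place, the construction is clearly polynomial in $|\V| + |\Ed|$ and the equivalence with 2-colorability of $\Hd$ is immediate, establishing the claimed NP-hardness.
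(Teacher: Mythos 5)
Your reduction is the same one the paper uses: encode each hyperedge as a directed triangle gadget, chain the gadgets forward, and add a ``consistency'' layer that forces all copies of the same hypergraph vertex to share a color in any proper $2$-coloring of $T$. The only substantive difference is in the consistency gadget. The paper uses a transitive tournament $T_3$ with \emph{one} vertex $v'_a$ per hypergraph vertex, plus a single fixed directed triangle $T_2$ as the pivot; you use a full mirror copy $H'$ (one vertex $v'_{a,i}$ per \emph{incidence}) plus an arbitrary $2$-colorable strongly connected tournament $G$ as the pivot. Both work and both are polynomial-size; the paper's version is slightly leaner (a single representative per base vertex suffices, since only the color of $v'_a$ matters), and your $G$ can be taken to be just a $C_3$, which is exactly what the paper does.

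The one thing you should not leave as a flagged ``obstacle'' is the verification that the lifted coloring of $T$ has no spurious monochromatic triangle; it is short and you should carry it out. Every arc from $V(H)$ to $V(G)$ and from $V(G)$ to $V(H')$ points the same way, so a directed triangle with vertices in two or more of these blocks must use a ``back'' arc, and the only back arcs are of the form $v'_{a,j}\to v_{a,i}$ (same base vertex). Since your coloring makes $v'_{a,j}$ and $v_{a,i}$ opposite by construction, every such triangle is bicolored. Triangles entirely inside $G$ are handled by properly $2$-coloring $G$; triangles entirely inside a gadget of $H$ (or $H'$) are bicolored because the hyperedge is not monochromatic in the given $2$-coloring of $\Hd$; and there are no other triangles inside $H$ or $H'$ since all inter-gadget arcs point forward. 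Conversely, the rigidity argument you sketch is correct: in any $2$-coloring of $T$, both colors appear in $G$, so $v_{a,i}$ and $v'_{a,j}$ can never share a color (else a monochromatic triangle through $G$), which pins each family $\{v_{a,i}\}_i$ to a single color and lets you read off a $2$-coloring of $\Hd$. With that verification spelled out, your proposal is a complete proof.
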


\begin{proof}
We will reduce this problem to the problem of deciding 2-colorability
of 3-uniform hypergraphs, which is known to be \NP-hard
\cite{dinur2005hardness}.  Let $\Hd = (\V, \Ed)$ be a 3-uniform
hypergraph.  We now build a tournament $T=(V,A)$ such that $T$ is
2-colorable iff $\Hd$ is 2-colorable.

We will start by defining a subtournament $T_1=(V_1,A_1)$ of $T$.  Fix
an arbitrary ordering of the hyperedges of $\Hd$, namely ${\Ed} =
\{e_1, e_2, \ldots, e_m\}$.  For each $e_i =(v_a,v_b,v_c)$ in $\Ed$,
we add three vertices $v_{a,i}$, $v_{b,i}$ and $v_{c,i}$ to $V_1$, and
add to $A_1$ the arcs $(v_{a,i},v_{b,i})$, $(v_{b,i},v_{c,i})$ and
$(v_{c,i},v_{a,i})$ such that these three vertices form a directed
triangle. We then add the arcs from all the vertices $v_{a,i}$ towards
all the vertices $v_{b,j}$ for any $a,b,i,j$ with $i<j$.  We now
define a new subtournament $T_2=(V_2,A_2)$ made up of three vertices
that form a directed triangle.  Finally, we define a last
subtournament $T_3=(V_3,A_3)$, where $V_3 := \V$, and $T_3$ forms a
transitive tournament on $V_3$.

Then add $T_1$, $T_2$ and $T_3$ to $T$. Orient all arcs from vertices
in $V_1$ towards vertices of $V_2$ and all arcs from vertices of $V_2$
towards vertices of $V_3$.  The only arcs we still need to orient are
those between $V_1$ and $V_3$.  For this, we look at the vertices of
$\Hd$ from which the vertices of $V_1$ and $V_3$ are derived; for
$v_{a,i} \in V_1$ and $v'_{b} \in V_3$, we add an arc from $v'_{b}$ to
$v_{a,i}$ iff $a=b$ (i.e., if they are derived from the same vertex of
$\Hd$), and we add an arc from $v_{a,i}$ to $v'_{b}$ otherwise. This
completes the definition of $T$. Figure \ref{fig:basic_hardness} gives
an example of this construction for a hypergraph with five vertices
and four hyperedges.

We will now establish that if $\Hd$ is $2$-colorable, so is $T$. Given
a $2$-coloring of $\Hd$, give all the vertices of $V_1$ the same color
as the vertex of $\Hd$ they are derived from, and those in $V_3$ the
opposite color of the vertex of $\Hd$ they are derived from. Finally
color $T_2$ properly with the same two colors.  Any arc that goes from
$V_3$ to $V_1$ will be bicolored, and since all arcs are oriented from
$V_1$ towards $V_2$ and from $V_2$ towards $V_3$, there can only be
monochromatic triangles inside $V_1$, $V_2$ or $V_3$. However, $V_2$
is a bicolored triangle, and every triangle in $V_1$ represents a
hyperedge of $\Hd$ and must therefore contain two vertices of
different colors.  Furthermore, the set $V_3$ contains no triangles.

It remains to show that if $\Hd$ has chromatic number at least $3$,
$T$ has chromatic number at least $3$.  We will establish this by
contradiction: We show that if $T$ has a proper 2-coloring $C$, then
we can construct a proper 2-coloring of $\Hd$.

We define a coloring $C_H$ of $\Hd$ by assigning to every vertex $v_a
\in \V$ the same color as its corresponding vertex $v_a' \in V_3$ has
in $C$.  Let us show that $C_H$ is a proper 2-coloring of $\Hd$.
Notice that in a proper 2-coloring of $T$, $v_{a,i} \in V_1$ must have
the opposite color of $v_a' \in V_3$, for any $a,i$. If this were not
the case, these two vertices would form a directed triangle with a
vertex in $V_2$ of the same color, which exists since $T_2$ is a
directed triangle and must be bicolored.  Now suppose some hyperedge
$e_i=(v_a,v_b,v_c)$ is monochromatic under $C_H$. Then $v_a', v_b',
v_c' \in V_3$ all have the same color.  Then, there is a triangle
$(v_{a,i},v_{b,i},v_{c,i})$ in $T_1$ by definition, and all its
vertices must have the same color (the opposite of that used for
$v_a', v_b', v_c'$).  This is a contradiction, and therefore all
hyperedges of $\Hd$ are bicolored using the coloring $C_H$.
\end{proof}

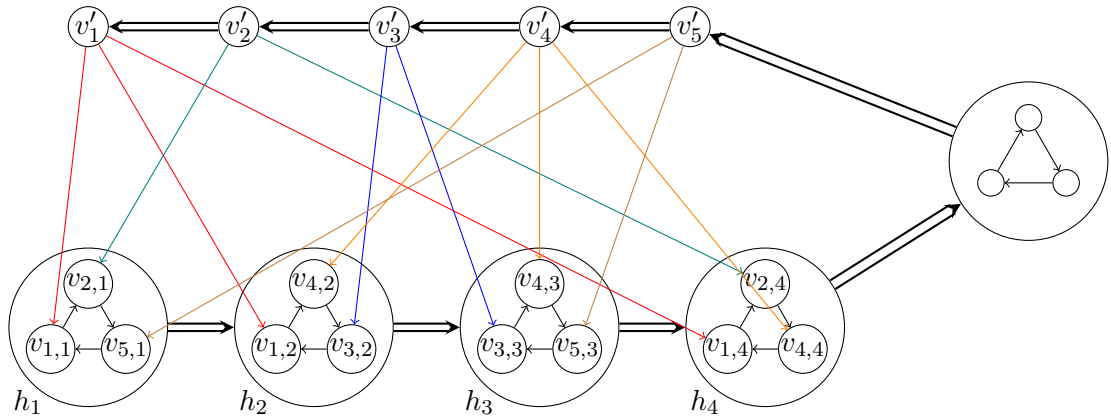
\begin{figure}
\centering
\begin{tikzpicture}
	  \tikzstyle{smallvertex}=[circle,draw,minimum size=10pt,inner sep=0pt]
	  \tikzstyle{bigvertex}=[circle,draw,minimum size=40pt,inner sep=0pt]
	  \tikzstyle{group}==[circle,draw,minimum size=60pt,inner sep=0pt]

	  \node[smallvertex] (v1) at (0,0) {$v_{1,1}$};
	  \node[smallvertex] (v2) at (0.5,0.866) {$v_{2,1}$};
	  \node[smallvertex] (v3) at (1,0) {$v_{5,1}$};
	  \node[group] (g1) at (0.5,0.289) {$$};
	  \node at (-0.3,-0.7) {$e_1$};
	  \draw [->] (v1) to (v2);
	  \draw [->] (v2) to (v3);
	  \draw [->] (v3) to (v1);

	  \begin{scope}[xshift=3cm]
	  \node[smallvertex] (v4) at (0,0) {$v_{1,2}$};
          \node[smallvertex] (v5) at (0.5,0.866) {$v_{4,2}$};
          \node[smallvertex] (v6) at (1,0) {$v_{3,2}$};
          \node[group] (g2) at (0.5,0.289) {$$};
	  \node at (-0.3,-0.7) {$e_2$};
          \draw [->] (v4) to (v5);
          \draw [->] (v5) to (v6);
          \draw [->] (v6) to (v4);
	  \end{scope}

          \begin{scope}[xshift=6cm]
          \node[smallvertex] (v7) at (0,0) {$v_{3,3}$};
          \node[smallvertex] (v8) at (0.5,0.866) {$v_{4,3}$};
          \node[smallvertex] (v9) at (1,0) {$v_{5,3}$};
          \node[group] (g3) at (0.5,0.289) {$$};
	  \node at (-0.3,-0.7) {$e_3$};
          \draw [->] (v7) to (v8);
          \draw [->] (v8) to (v9);
          \draw [->] (v9) to (v7);
          \end{scope}

          \begin{scope}[xshift=9cm]
          \node[smallvertex] (v10) at (0,0) {$v_{1,4}$};
          \node[smallvertex] (v11) at (0.5,0.866) {$v_{2,4}$};
          \node[smallvertex] (v12) at (1,0) {$v_{4,4}$};
          \node[group] (g4) at (0.5,0.289) {$$};
	  \node at (-0.3,-0.7) {$e_4$};
          \draw [->] (v10) to (v11);
          \draw [->] (v11) to (v12);
          \draw [->] (v12) to (v10);
          \end{scope}

	  \draw [->,>=stealth,thick,double distance=2pt] (g1) to (g2);
	  \draw [->,>=stealth,thick,double distance=2pt] (g2) to (g3);
	  \draw [->,>=stealth,thick,double distance=2pt] (g3) to (g4);

	  \begin{scope}[yshift=4cm]
          \node[smallvertex] (w1) at (0.5,0.289) {$v_1'$};
          \node[smallvertex] (w2) at (2.5,0.289) {$v_2'$};
          \node[smallvertex] (w3) at (4.5,0.289) {$v_3'$};
          \node[smallvertex] (w4) at (6.5,0.289) {$v_4'$};
          \node[smallvertex] (w5) at (8.5,0.289) {$v_5'$};

	  \draw [->,>=stealth,thick,double distance=2pt] (w2) to (w1);
	  \draw [->,>=stealth,thick,double distance=2pt] (w3) to (w2);
	  \draw [->,>=stealth,thick,double distance=2pt] (w4) to (w3);
	  \draw [->,>=stealth,thick,double distance=2pt] (w5) to (w4);
	  \end{scope}

	  \node[group] (G) at (13,2.5) {$$};
	  \node[smallvertex] (t1) at (12.5,2.211) {$$};
	  \node[smallvertex] (t2) at (13,3.077) {$$};
	  \node[smallvertex] (t3) at (13.5,2.211) {$$};
          \draw [->] (t1) to (t2);
          \draw [->] (t2) to (t3);
          \draw [->] (t3) to (t1);

          \draw [->,>=stealth,thick,double distance=3pt] (g4) to (G);
          \draw [->,>=stealth,thick,double distance=3pt] (G) to (w5);
          \draw [->,red] (w1) to (v1);
          \draw [->,red] (w1) to (v4);
          \draw [->,red] (w1) to (v10);
          \draw [->,teal] (w2) to (v2);
          \draw [->,teal] (w2) to (v11);
          \draw [->,blue] (w3) to (v6);
          \draw [->,blue] (w3) to (v7);
          \draw [->,orange] (w4) to (v5);
          \draw [->,orange] (w4) to (v8);
          \draw [->,orange] (w4) to (v12);
          \draw [->,brown] (w5) to (v3);
          \draw [->,brown] (w5) to (v9);

\end{tikzpicture}
\caption{Construction of $T$ from a 3-uniform hypergraph $\Hd$.  There
  is a downwards arc between $v_b'$ and all vertices $v_{b,i}$ for
  every $b,i$. These are the colored arcs in the figure.  All
  remaining arcs all go upwards from the vertices $v_{a,i}$ towards the
  vertices $v_{b}'$ for $a \neq b$.
\label{fig:basic_hardness}}
\end{figure}

\section{Hardness of Approximation for General Tournaments}\label{sec:hardness_gen_app}

In this section, we prove Theorem \ref{thm:gen_hardness}.  Our proof
parallels the proof of hardness of approximate coloring of digon-free
digraphs of \cite{feder2019complexity}; we extend their approach to
tournaments and show that it can be used to obtain hardness of
approximation.

\genhardness*

The proof uses an explicit construction for bipartite Ramsey graphs.
A bipartite graph $G=(X\cup Y, E)$ is $k$-Ramsey if for every $S_X
\subset X$ and $S_Y \subset Y$ where $|S_X| = |S_Y| = k$, the subgraph
induced on the vertex set $S_X \cup S_Y$ is neither empty nor a
complete bipartite subgraph.  Specifically, we use the main theorem
from \cite{barak20122}, which we include here (slightly rephrased) for
completeness.

\begin{theorem}[Theorem 1.3 in \cite{barak20122}]\label{thm:barak}
For every large enough integer $n$, there is an explicit construction
of a bipartite $n^{o(1)}$-Ramsey graph on $2n$ vertices.
  \end{theorem}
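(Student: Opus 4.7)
The plan is to prove the theorem by exhibiting the bipartite graph as the edge-indicator of an explicit two-source disperser. First I would identify $X$ and $Y$ with $\{0,1\}^m$ where $2^m = n$, and define the edge relation via a function $f \colon \{0,1\}^m \times \{0,1\}^m \to \{0,1\}$, placing an edge $xy$ iff $f(x,y)=1$. The graph fails to be $k$-Ramsey exactly when there exist sets $S_X, S_Y$ of size $k$ on which $f$ is constant, which is precisely the negation of the two-source disperser property at min-entropy $\log_2 k$. So the statement reduces to constructing an explicit two-source disperser $f$ at min-entropy $k = m^{o(1)}$ on sources of length $m$.

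To build such a disperser, I would follow an extractor-theoretic strategy in the spirit of Barak--Impagliazzo--Wigderson and Barak--Kindler--Shaltiel--Sudakov--Wigderson. The key intermediate object is a \emph{somewhere-random source} produced from a weak source by an explicit condenser: a matrix with $t$ rows of length $\ell$ in which at least one row is (close to) uniform, although its index is unknown. I would apply such a condenser to each of the two input sources, reducing the task to extracting from a pair of somewhere-random matrices while correctly selecting a good pair of rows.

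Row selection would be performed by a \emph{challenge-response} mechanism. For each candidate row $i$ of the first source's matrix, a strong seeded extractor with seed drawn from the second source produces a short challenge string, and the second source is used to compute a response; by a delicate design, only the truly uniform row yields responses passing the test, which lets one identify it with high confidence. Given a correctly identified pair of good rows, a final two-source extractor at high min-entropy (a regime where explicit constructions such as the inner-product function or Bourgain's extractor suffice) produces an output bit that attains both values on every sufficiently large rectangle, delivering the desired disperser property.

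The main obstacle is pushing the min-entropy threshold all the way down to $k = m^{o(1)}$. At this regime every intermediate component must work with sub-polynomial entropy: the condenser must yield only polylogarithmically many rows, the challenge-response rounds must preserve almost all of the residual entropy, and the seeded extractors must have seed length and output length fitting an extremely tight budget at each stage. A careful iterated composition, together with a block-source decomposition of each input, is required to control the error probabilities across all rounds while keeping $f$ deterministic and computable in $\text{poly}(m) = \text{poly}(\log n)$ time. Managing this accumulation of entropy losses without sacrificing explicitness is the heart of the argument.
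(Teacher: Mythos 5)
The paper does not prove this statement at all: Theorem~\ref{thm:barak} is imported (slightly rephrased) from \cite{barak20122}, and the only ``proof'' in the paper's context is the citation itself, since the result is used as a black box in the proof of Lemma~\ref{lem:ramsey}. Your proposal correctly identifies the standard translation---a bipartite graph on $X\cup Y$ is $k$-Ramsey exactly when its edge-indicator function is a two-source disperser for min-entropy $\log_2 k$---and the machinery you outline (condensing each source to a somewhere-random matrix, selecting a good row by a challenge--response protocol, then applying a high--min-entropy two-source extractor such as inner product or Bourgain's extractor) is indeed the architecture of the Barak--Rao--Shaltiel--Wigderson construction that underlies the cited theorem.

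However, as a proof it has a genuine gap: every step that carries the actual difficulty is asserted rather than executed. You do not specify the condenser, the challenge--response protocol, or the composition, nor do you analyze why only the uniform row survives the challenge phase, how the entropy and error parameters are budgeted across the iterated rounds, or why the final rectangle sizes come out at $n^{o(1)}$ rather than, say, $n^{\Omega(1)}$ (the regime already achievable by much simpler constructions). Phrases like ``by a delicate design'' and ``a careful iterated composition is required'' stand in for what is, in the original work, a long and intricate quantitative argument; without it, nothing in your text certifies the $n^{o(1)}$ bound or the explicitness claim. For the purposes of this paper the correct move is simply to cite \cite{barak20122}; if you intend a self-contained proof, the entire analysis of the disperser construction still has to be supplied.
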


\begin{lemma}\label{lem:ramsey}
Let $\eps$ be a constant such that $0 < \eps <1$ and let $n$ be a
sufficiently large integer. Then there exists a tournament $T=(V,A)$
where $V=X\cup Y$ and $|X| = |Y| = n$, such that for every two subsets
$S_X \subseteq X$, $S_Y \subseteq Y$ with $|S_X| \geq n^\eps$ and
$|S_Y| \geq n^\eps$, the tournament induced on $S_X \cup S_Y$ contains
a triangle.
\end{lemma}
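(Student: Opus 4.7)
The plan is to use the explicit bipartite $n^{o(1)}$-Ramsey graph $B = (X \cup Y, E)$ from Theorem~\ref{thm:barak} as a blueprint for the bipartite orientations of $T$. First I would fix arbitrary linear orders $<_X$ on $X$ and $<_Y$ on $Y$, orient $T[X]$ and $T[Y]$ as transitive tournaments along these orders, and set $x \to y$ iff $(x,y) \in E$ (and $y \to x$ otherwise) for every cross pair $x \in X$, $y \in Y$. Since $B$ is $k(n)$-Ramsey with $k(n) = n^{o(1)}$, for all large enough $n$ we also have $k(n) \leq n^{\epsilon/2}$, which is the only feature of $B$ the argument will use.

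Next, fix $S_X \subseteq X$ and $S_Y \subseteq Y$ with $|S_X|, |S_Y| \geq n^\epsilon$. Since $T[S_X]$ and $T[S_Y]$ are transitive, any directed triangle in $T[S_X \cup S_Y]$ must be mixed, and a direct check of the orientation shows that such a triangle exists whenever one of the following holds: (i) there are $x_1 <_X x_2$ in $S_X$ and $y \in S_Y$ with $(x_1,y) \notin E$ and $(x_2,y) \in E$, or (ii) there are $y_1 <_Y y_2$ in $S_Y$ and $x \in S_X$ with $(x,y_1) \in E$ and $(x,y_2) \notin E$. Suppose for contradiction that neither (i) nor (ii) occurs. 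Reading $B[S_X, S_Y]$ as a $0/1$ matrix $M$ with rows indexed by $S_X$ in $<_X$-order and columns indexed by $S_Y$ in $<_Y$-order, the failure of (ii) forces each row's $1$-entries to form a suffix, the failure of (i) forces each column's $1$-entries to form a prefix, and the combination of the two forces the row thresholds $t_x := \min\{y \in S_Y : M_{xy}=1\}$ (with $t_x = \infty$ if none exists) to be nondecreasing in $x$. Thus $M$ has a staircase structure.

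The last step is to extract a large monochromatic combinatorial rectangle from this staircase: letting $x^*$ be the $\lceil |S_X|/2 \rceil$-th element of $S_X$ and $y^*$ the $\lceil |S_Y|/2 \rceil$-th element of $S_Y$, either $t_{x^*} \leq y^*$, in which case the block of rows $\{x : x \leq_X x^*\}$ and columns $\{y : y \geq_Y y^*\}$ is all-$1$, or $t_{x^*} >_Y y^*$, in which case the block of rows $\{x : x \geq_X x^*\}$ and columns $\{y : y \leq_Y y^*\}$ is all-$0$. In either case we obtain subsets $S_X' \subseteq S_X$, $S_Y' \subseteq S_Y$, both of size at least $\lceil n^\epsilon/2 \rceil$, such that $B[S_X' \cup S_Y']$ is empty or complete bipartite, contradicting the $k(n)$-Ramsey property of $B$ for $n$ sufficiently large. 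The only subtle part is bookkeeping the orientation conventions so that the two excluded triangle patterns correspond exactly to ``some row is not a $1$-suffix'' and ``some column is not a $1$-prefix''; once that is set up, the staircase-to-rectangle step is an elementary pigeonhole on a single threshold value.
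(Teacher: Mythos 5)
Your proof is correct and takes essentially the same approach as the paper: it builds the tournament from the explicit bipartite $n^{o(1)}$-Ramsey graph of Theorem~\ref{thm:barak} exactly as you describe, picks median elements of $S_X$ and $S_Y$, and applies the Ramsey property to the resulting $\Theta(n^{\eps})\times\Theta(n^{\eps})$ quadrant. The only difference is in the final step: the paper argues more directly, noting that (WLOG $x\to y$) the quadrant $B\bigl[S_X\cap N^-(x),\,S_Y\cap N^+(y)\bigr]$ cannot be complete, so a single non-edge already yields a backward cross-arc and hence a directed $4$-cycle containing a triangle, whereas you assume triangle-freeness, deduce the full staircase structure of the cross-adjacency matrix (rows are $1$-suffixes, columns $1$-prefixes, thresholds monotone), and extract an entire monochromatic rectangle. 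Both arguments invoke the Ramsey bound on the same median-determined block; yours establishes the slightly stronger structural fact and is a bit longer, but is equally valid.
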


\begin{proof}
By Theorem \ref{thm:barak}, for sufficiently large $n$, there exists
an explicit construction of a bipartite $n^{o(1)}$-Ramsey graph over
$n$ vertices.  Let $B_1=(X_1,Y_1,E_1)$ be such a graph.  Define
the tournament $T=(V,E)$ with $V=X\cup Y$ as follows:
\begin{itemize}
\item $X = X_1$ and $Y = Y_1$.
\item Orient the arcs inside $X$ and $Y$ such that they both induce
                        transitive tournaments.
\item For every $u \in X$, $v \in Y$, orient the arc from $u$ to $v$ if
                        $(u,v)\in E_1$ and from $v$ to $u$ otherwise.
\end{itemize}

Given $0 < \eps <1$, take any $S_X \subseteq X$, $S_Y \subseteq Y$
with $|S_X| \geq n^\eps$ and $|S_Y| \geq n^\eps$.  Let $x \in S_X$ and
$y \in S_Y$ be middle vertices of $S_X$ and $S_Y$ (i.e., $x$ has
roughly equal in and out-degree in $S_X$, and $y$ in $S_Y$). Without
loss of generality, suppose that the arc between $x$ and $y$ is
oriented from $x$ to $y$. Then the subgraph of $B_1$ induced on the
vertex set $S_X \cap N^-(x)$ and $S_Y \cap N^+(y)$ has at least
$n^{\eps}/2 -1$ vertices on each side of the bipartition.  Thus, it is
neither complete nor empty (since for sufficiently large $n$,
$n^{o(1)} \leq n^{\eps}/2-1$). This implies that there is an arc from
a vertex $v \in S_Y \cap N^+(y)$ to a vertex $u \in S_X\cap N^-(x)$.
So there is a directed cycle on the four vertices $v,u,x,y$ in $T[S_X
  \cup S_Y]$.  Since, $T$ is a tournament, there is also a directed
triangle using three of these four vertices.
\end{proof}

\begin{theorem}\label{thm:hard-acyclic}
It is $\NP$-hard (under randomized reductions) to decide if a
tournament has an acyclic induced subgraph of size at most 
$n^{1/2+\varepsilon}$ or can be colored with at most $n^\eps$ colors, for
every $0<\varepsilon<\frac{1}{2}$.
\end{theorem}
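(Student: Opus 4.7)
The plan is to reduce from the hardness of approximating the chromatic number of graphs. By Zuckerman's theorem, for every constant $\delta > 0$, it is $\NP$-hard under randomized reductions to distinguish graphs $H$ on $m$ vertices with $\chi(H) \leq m^{\delta}$ from those with $\chi(H) \geq m^{1-\delta}$; in the latter case $\alpha(H) \leq m/\chi(H) \leq m^{\delta}$ holds automatically. I aim to convert such an $H$ into a tournament $T_H$ on $n = \mathrm{poly}(m)$ vertices satisfying: if $\chi(H) \leq m^{\delta}$ then $\vec\chi(T_H) \leq n^{\varepsilon}$, while if $\alpha(H) \leq m^{\delta}$ then every acyclic induced subgraph of $T_H$ has size at most $n^{1/2+\varepsilon}$. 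Since an acyclic set of size at most $n^{1/2+\varepsilon}$ forces $\vec\chi(T_H) \geq n^{1/2-\varepsilon}$, this yields precisely the promised gap.

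The construction is a substitution: for each $v \in V(H)$ take a fresh copy $R_v$ of the Ramsey tournament $R$ from Lemma \ref{lem:ramsey} (with $|V(R)|=2r$ and internal parameter $\varepsilon'$ to be chosen), and set $V(T_H)=\bigcup_{v\in V(H)} V(R_v)$ so that $n = 2rm$. Fix an ordering of $V(H)$. For any two distinct $u,v \in V(H)$, I orient the arcs between $R_u$ and $R_v$ so that (i) when $uv \notin E(H)$ all inter-copy arcs go from the lower-indexed copy to the higher-indexed one, and (ii) when $uv \in E(H)$ the inter-copy arcs themselves imitate the bipartite Ramsey pattern of $R$ between the $X$-side of $R_u$ and the $Y$-side of $R_v$. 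This Ramsey encoding of edges of $H$ is the critical design choice.

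For completeness, if $\chi(H) \leq m^{\delta}$, then each color class $I$ of $H$ is independent in $H$, so all inter-copy arcs among $\{R_v : v \in I\}$ go in the fixed order, meaning $\bigcup_{v \in I} R_v$ is a sequential concatenation of copies of $R$ and hence has dichromatic number at most $\vec\chi(R) \leq 2$. Using two fresh colors per class gives $\vec\chi(T_H) \leq 2m^{\delta}$, which is at most $n^{\varepsilon}$ for small $\delta$. For soundness, given an acyclic induced set $S \subseteq V(T_H)$, let $V_S \subseteq V(H)$ be the set of vertices $v$ such that $|S \cap V(R_v)| \geq r^{\varepsilon'}$. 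The Ramsey encoding of edges ensures that for every $uv \in E(H)$ at most one of $u,v$ lies in $V_S$: otherwise Lemma \ref{lem:ramsey} applied to the inter-copy Ramsey pattern for $uv$ produces a directed triangle inside $S$. Hence $V_S$ is independent in $H$ and $|V_S| \leq m^{\delta}$, and balancing the parameters by setting $r \approx m$ gives $|S| \leq m^{\delta}\cdot 2r + m \cdot r^{\varepsilon'} = O(n^{1/2+\varepsilon})$ after tuning $\delta$ and $\varepsilon'$.

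The main obstacle is the Ramsey encoding of edges: making the cross-arcs for each $uv \in E(H)$ behave like a bipartite Ramsey pattern (so that Lemma \ref{lem:ramsey} triggers whenever $S$ is large in both $R_u$ and $R_v$) while still yielding a valid tournament is delicate, and mirrors the digon-free construction of \cite{feder2019complexity} adapted to the tournament setting. Once the encoding is in place, the parameters $\delta$ (from Zuckerman's reduction) and $\varepsilon'$ (from Lemma \ref{lem:ramsey}) can be tuned routinely so that completeness yields $\vec\chi(T_H) \leq n^{\varepsilon}$ and soundness yields $|S| \leq n^{1/2+\varepsilon}$ simultaneously.
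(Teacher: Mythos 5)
Your high-level plan — reduce from graph chromatic-number hardness, build a tournament by vertex substitution, encode edges of $H$ via a Ramsey-type pattern so that a large acyclic set projects onto an independent set of $H$ — is indeed the paper's strategy. But your choice of vertex gadget introduces a real gap.

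You set each $R_v$ to be a fresh copy of the Ramsey tournament from Lemma \ref{lem:ramsey}, which is not transitive; it is the disjoint union of two transitive halves $X_v, Y_v$ joined by a bipartite Ramsey pattern. Your soundness step then invokes Lemma \ref{lem:ramsey} across the pair $R_u, R_v$ (for $uv \in E(H)$) with the cross-arcs arranged ``between the $X$-side of $R_u$ and the $Y$-side of $R_v$.'' This does not work for two reasons. First, the proof of Lemma \ref{lem:ramsey} crucially picks \emph{middle} vertices $x \in S_X$, $y \in S_Y$ and uses the transitivity of $X$ and $Y$ to guarantee that $S_X \cap N^-(x)$ and $S_Y \cap N^+(y)$ each contain about half of $S_X$, $S_Y$; this step fails when the sides are themselves Ramsey tournaments rather than transitive orders. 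Second, even setting that aside, your description only specifies the quarter of the inter-copy arcs between $X_u$ and $Y_v$. If an acyclic set $S$ happens to concentrate on $Y_u$ and $X_v$ (each of size $\geq r^{\varepsilon'}$), the Ramsey pattern between $X_u$ and $Y_v$ never sees those vertices and forces no triangle, so both $u$ and $v$ can land in $V_S$ with $uv \in E(H)$. Your conclusion that $V_S$ is independent in $H$ therefore does not follow.

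The fix is exactly the paper's construction: make each vertex gadget a \emph{transitive} tournament $T_i$ on $n_G$ vertices (not a Ramsey tournament). For each edge $(v_i,v_j) \in E$ with $i < j$, wire the cross-arcs so that $T_i \cup T_j$ is precisely the Ramsey tournament of Lemma \ref{lem:ramsey} with $T_i$ playing $X$ and $T_j$ playing $Y$; for non-edges, orient all cross-arcs forward. Because each $T_i$ is transitive, it can serve as the $X$-side in some pairs and the $Y$-side in others without conflict, and Lemma \ref{lem:ramsey} applies verbatim: if $S$ meets both $T_i$ and $T_j$ in $\geq n_G^{\varepsilon}$ vertices and $(v_i,v_j)\in E$, then $S$ contains a triangle. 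This also simplifies completeness (each $T_i$ is $1$-colorable rather than $2$-colorable). With $n = n_G^2$, completeness gives $\vec\chi(T) \leq n_G^\varepsilon = n^{\varepsilon/2}$ and soundness gives $|S| < n^{1/2+\varepsilon/2}$, as desired.
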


\begin{proof}
For any $\eps > 0$, let $G=(V,E)$ be a graph on $n_G$ vertices.  Feige
and Kilian~\cite{feige1998zero} proved that it is \NP-hard (under
randomized reductions) to decide if $\alpha(G) < n_G^{\eps}$ or if
$\chi(G) \leq n_G^{\eps}$ for every $\eps > 0$.  (As is standard,
$\alpha(G)$ denotes the size of the maximum independent set in $G$ and
$\chi(G)$ denotes its chromatic number.)

For each vertex $v_i \in V$, define a new transitive tournament $T_i$
on $n_G$ vertices. For each edge $(v_i,v_j) \in E$ with $i < j$, join
$T_i$ and $T_j$ such that they form the tournament of Lemma
\ref{lem:ramsey}, with $T_i$ being $X$ and $T_j$ being $Y$. For all
remaining $v_i,v_j \in V$ with $i<j$ (such that $(v_i,v_j) \notin E$),
orient all arcs from each vertex of $T_i$ to each vertex of
$T_j$. This defines a new tournament $T$ on $n=n_G^2$ vertices.  

Suppose that $\chi(G) \leq n_G^{\eps}$.  By coloring
vertices in $T_i$ with the color of $v_i$ in $G$, we see that $T$ has
a coloring with at most $n_G^{\eps} = n^{\eps/2}$ colors.  Indeed, the
only arcs in $T$ that are not bicolored are inside a $T_i$ for some
$i$, or from a vertex of $T_i$ to a vertex of $T_j$ for $i<j$, and can
thus never form a triangle.

Now consider the other case, in which $\alpha(G) < n_G^{\eps}$.
Let $S$ be an induced acyclic subtournament of $T$.  By Lemma
\ref{lem:ramsey}, if $S$ intersects some $T_i$'s on more than
$n_G^{\eps}$ vertices each, then the respective $v_i$'s form an
independent set in $G$.  Therefore, if $|S| \geq n^{1/2+\eps/2}$, there
must be at least $n_G^\eps$ tournaments $T_i$ that intersect $S$ on at
least $n_G^{\eps}$ vertices, which then leads to an independent set of
size at least $n_G^{\eps}$ in $G$.  So we conclude that $|S| < n^{1/2
  + \eps/2}$. 
\end{proof}

The next corollary follows from the fact that if the maximum size of
an induced acyclic subgraph in a tournament is at most $n^{1/2 +
  \varepsilon}$, then its dichromatic number is at least $n^{1/2-\varepsilon}$.

\begin{corollary}\label{cor:decide}
It is \NP-hard (under randomized reductions) to decide if a tournament
has dichromatic number at least $n^{1/2-\varepsilon}$ or at most
$n^{\varepsilon}$ for every $0 < \varepsilon < \frac{1}{2}$.  
\end{corollary}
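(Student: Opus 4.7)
The plan is to derive Corollary \ref{cor:decide} directly from Theorem \ref{thm:hard-acyclic} by a pigeonhole observation linking the size of the largest transitive subtournament to the dichromatic number. Specifically, in any tournament $T$ on $n$ vertices, if $\vec\chi(T) = k$ then there is a partition of $V(T)$ into $k$ transitive sets, so by averaging at least one of them has size at least $n/k$, yielding a transitive subtournament of size at least $n/k$. Contrapositively, if every induced acyclic subtournament has size at most $s$, then $\vec\chi(T) \geq n/s$.

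Given this, I would argue as follows. Fix $0 < \varepsilon < \tfrac12$ and apply Theorem \ref{thm:hard-acyclic} with the parameter $\varepsilon$; this yields (under randomized reductions) an \NP-hard promise problem distinguishing
\begin{itemize}
\item[(a)] tournaments $T$ all of whose induced acyclic subtournaments have size at most $n^{1/2+\varepsilon}$, from
\item[(b)] tournaments $T$ satisfying $\vec\chi(T) \leq n^\varepsilon$.
\end{itemize}
In case (a), the pigeonhole observation gives $\vec\chi(T) \geq n / n^{1/2+\varepsilon} = n^{1/2-\varepsilon}$. Case (b) already matches the upper bound on the dichromatic number required by the corollary. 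Consequently, the same reduction distinguishes the two cases in the statement of Corollary \ref{cor:decide}, establishing the claimed hardness.

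There is essentially no obstacle here beyond bookkeeping: one only has to verify that the exponents line up (replacing $\varepsilon$ in Theorem \ref{thm:hard-acyclic} by the same $\varepsilon$ in the corollary) and that the implication ``small maximum transitive set $\Rightarrow$ large dichromatic number'' holds strictly, which is immediate from the definition of dichromatic number as the minimum size of a partition into transitive sets. Thus the proof will be a short two-sentence deduction from Theorem \ref{thm:hard-acyclic}.
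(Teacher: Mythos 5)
Your proposal is correct and matches the paper's reasoning: the paper also derives Corollary \ref{cor:decide} from Theorem \ref{thm:hard-acyclic} via the same pigeonhole observation that if every induced acyclic subtournament has size at most $n^{1/2+\varepsilon}$ then $\vec\chi(T) \geq n/n^{1/2+\varepsilon} = n^{1/2-\varepsilon}$.
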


We are now ready to prove Theorem \ref{thm:gen_hardness}.  Suppose
that we have an algorithm to approximate the chromatic number of a
tournament to within a factor of $n^{1/2 - 3\varepsilon}$.   Then we
could distinguish between the two cases stated in Corollary
\ref{cor:decide}.
Setting $\delta = 3\varepsilon$, we arrive at the statement of Theorem
\ref{thm:gen_hardness}.

\end{document}